\documentclass[12pt]{article}

\usepackage{amsthm,amsmath,mathrsfs,amsfonts,amssymb,mathtools,nicefrac}
\usepackage{bbm}
\usepackage{graphicx}
\usepackage{natbib}
\RequirePackage[colorlinks,citecolor=blue,urlcolor=blue]{hyperref}
\usepackage{url}
\usepackage{enumerate}
\usepackage{multirow}
\usepackage{tikz}
\usetikzlibrary{positioning}
\usepackage[normalem]{ulem}

\newcommand{\argmin}{\operatorname*{argmin}}
\newcommand{\plim}{\operatorname*{plim}}

\newtheorem{assumption}{Assumption}
\newtheorem{definition}{Definition}

\newtheorem{theorem}{Theorem}
\newtheorem{lemma}{Lemma}

\theoremstyle{definition}
\newtheorem{example}{Example}
\newtheorem{remark}{Remark}

\usepackage[margin=1in]{geometry}
\usepackage{setspace}
\usepackage{lmodern}

\begin{document}

\def\spacingset#1{\renewcommand{\baselinestretch}{#1}\small\normalsize} \spacingset{1}

{
  \title{{\bf \large HIGHER-ORDER NEYMAN ORTHOGONALITY \\ IN MOMENT-CONDITION MODELS}\thanks{%
  Funded by the European Union (ERC-NETWORK-101044319), by the French Government and the
French National Research Agency under the Investissements d’ Avenir program (ANR-17-EURE-0010), and through NSF Grant 2242447.
Views and opinions expressed are however those of the authors only and do not necessarily reflect those
of the European Union or the European Research Council. Neither the European Union nor the granting
authority can be held responsible for them.
  }}
  \author{\setcounter{footnote}{2}%
  \parbox{.45\linewidth}{\centering St\'ephane Bonhomme\thanks{Department of Economics, University of Chicago. Email: sbonhomme@uchicago.edu}} \hfill
  \parbox{.45\linewidth}{\centering Koen Jochmans\thanks{Toulouse School of Economics, Universit\'e Toulouse Capitole. Email: koen.jochmans@tse-fr.eu}} \\[0.5em]
  \parbox{.45\linewidth}{\centering Whitney K.\ Newey\thanks{Department of Economics, MIT, and NBER. Email: wnewey@mit.edu}} \hfill
  \parbox{.45\linewidth}{\centering Martin Weidner\thanks{Dept.\ of Economics \& Nuffield College, Univ.\ of Oxford. Email: martin.weidner@economics.ox.ac.uk}}
}

\date{\vspace{1cm} May 2026 \vspace{0.25cm}}
  \maketitle
}

\begin{abstract}
\noindent
We construct moment functions that are Neyman-orthogonal to a chosen order in parametric moment condition models. 
These moment functions reduce sensitivity to nuisance estimation error and, as such, offer a unified and tractable route to higher-order debiasing in a wide range of econometric models.
The number of additional nuisance parameters required by our construction, beyond those already present in the original moment conditions, is independent of the order of orthogonalization and can be reduced to a single scalar if desired.
\end{abstract}

\vspace{0.5cm}
\noindent
{\bf Keywords:} Neyman-orthogonality, higher-order bias correction, moment conditions, GMM, U-statistics.

\newpage

\onehalfspacing

\allowdisplaybreaks
\renewcommand{\theequation}{\arabic{equation}}  \setcounter{equation}{0}

\section{Introduction}

Estimation and inference in the presence of nuisance parameters is a central problem in econometrics and statistics. The generalized method of moments \citep[GMM,][]{Hansen1982} provides a unifying framework for a vast range of estimators, from ordinary least squares and instrumental-variable estimation to maximum likelihood and causal-inference procedures. In all these settings, the parameter of interest must be estimated in the presence of additional unknown quantities. The accuracy with which these nuisance parameters are estimated can have a substantial effect on inference for the parameter of interest.

In such a setting, it is helpful to work with estimating equations that are orthogonal to the nuisance parameters in the sense of \cite{Neyman1959}. (First-order) orthogonality means that the first derivative of the expected estimating equation with respect to the nuisance parameter vanishes at the true parameter values.\footnote{Neyman-orthogonal estimating equations play a central role in semiparametric estimation \citep{BickelKlaassenRitovWellner1993,Newey1994}. They are also at the heart of the recent literature on debiased inference in high-dimensional models \citep{BelloniChernozhukovHansen2014,JavanmardMontanari2014,vandeGeerBuhlmannRitovDezeure2014,ZhangZhang2014} and on double/debiased machine learning \citep{ChernozhukovChetverikovDemirerDufloHansenNeweyRobins2018}.}
First-order Neyman orthogonality, when combined with sample splitting, permits construction of $\sqrt{n}$-consistent estimators as long as the nuisance parameter is estimated at a rate faster than $n^{-\nicefrac{1}{4}}$, where $n$ is the sample size \citep{ChernozhukovChetverikovDemirerDufloHansenNeweyRobins2018}. 

The faster-than-$n^{-\nicefrac{1}{4}}$ requirement is binding in many important problems. It fails when the nuisance parameter is high-dimensional relative to the sample size, when its convergence rate is slow, or both. One example is panel data with individual fixed effects, where the incidental parameter problem persists under first-order orthogonality \citep{NeymanScott1948,LiLindsayWaterman2003,HahnNewey2004,JochmansWeidner2019,KlineSaggioSoelvsten2020}. Another example is high-dimensional regression (e.g., \citealp{MikushevaSolvsten2025}), where approximate sparsity conditions are needed for post-double-Lasso inference to be valid \citep{WuthrichZhu2023,SurCandes2019}. Two further examples are instrumental-variable estimation, where machine-learning-based variable selection introduces biases that first-order corrections can fail to offset \citep{AngristFrandsen2022}, and nonparametric models with moderate-dimensional nuisance functions, where minimax rates for estimating those functions fall short of $n^{-\nicefrac{1}{4}}$ \citep{RobinsLiTchetgenTchetgenvanderVaart2008,vanderVaart2014}.

A natural response is to go beyond first-order orthogonality. \cite{MackeySyrgkanisZadik2018} formalized the concept of $q$-th order Neyman orthogonality. The condition requires that all derivatives of the expected estimating equation with respect to the nuisance parameter, up to order $q$, vanish at the true parameter values. They showed that $q$-th order orthogonality, again combined with sample splitting, permits valid inference when the nuisance parameter converges at rate $n^{-\nicefrac{1}{2(q+1)}}$; for example, the rate requirement drops from $n^{-\nicefrac{1}{4}}$ to $n^{-\nicefrac{1}{6}}$ in going from first to second order. Constructing moment functions that are orthogonal beyond first order has nevertheless proved difficult. This paper provides general constructions of $q$-th order Neyman-orthogonal moment functions for parameters that satisfy parametric unconditional moment conditions, with a separate target functional. We derive a closed-form, non-recursive expression for the $q$-th order orthogonal moment function in a general parametric setting where both the target moment function and the one that identifies the nuisance parameter can be nonlinear. 

The orthogonal moment function depends on multiple independent copies of the data, and it features a generalized inverse of a Jacobian matrix as an additional nuisance parameter whose dimension does not grow with the orthogonalization order $q$. We first provide explicit moment functions in the case where the moment function for the (original) nuisance parameter is affine; this covers linear regression, instrumental-variable estimation with linear moment conditions, and grouped data models with unit-specific intercepts or slopes. We then extend our construction to the case where both moment functions are nonlinear, in which case additional correction terms are needed. Our explicit expression for orthogonal moment functions in the nonlinear case is based on rooted trees.
A noteworthy feature of our approach is that the dimension of the additional nuisance parameter can be dramatically reduced: by replacing the moment function with a transformed version constructed from subdeterminants of the Jacobian matrix, the nuisance can be reduced to a single scalar, even under overidentification.

Our paper is related to several recent contributions. \cite{MackeySyrgkanisZadik2018} provide an explicit second-order orthogonal moment for the partially linear regression model, but only when first-stage errors are non-Gaussian; they prove an impossibility result under Gaussianity. Their constructive results do not extend to general models. Our constructions circumvent this impossibility because we rely on independent copies of the data and on ordinary finite-dimensional derivatives, rather than on functional derivatives. 
An approach to constructing $q$-th order orthogonal estimating equations for conditional-likelihood models was provided in
\cite{BonhommeJochmansWeidner2025}.
In recent independent work, \cite{ChetverikovSorensenTsyvinski2026} propose a recursive formula for Z-estimation problems and apply it to a ``triple-Lasso'' estimator with second-order orthogonality. Like our construction, theirs uses independent copies of the data. They also provide numerical evidence that higher-order orthogonality can substantially improve coverage of confidence intervals. Their construction differs from ours in several respects. In particular, they introduce new nuisance parameters at each recursion step, so the nuisance dimension grows with the order of orthogonality. They also restrict attention to exactly-identified moment conditions.
In a different direction, \cite{RobinsLiTchetgenTchetgenvanderVaart2008}, \cite{vanderVaart2014}, and \cite{RobinsLiMukherjeeTchetgenvanderVaart2017} develop higher-order influence functions and U-statistic-based estimators for nonparametric models. Those results rely on nonparametric von Mises expansions that hold only approximately. They also do not provide explicit constructions for parameters defined by finite-dimensional moment conditions. %

The rest of the paper is organized as follows. Section~\ref{sec:setup} describes the setup. Section~\ref{sec:example} illustrates the main ideas in a heterogeneous-coefficient model for grouped data. Section~\ref{sec:construction} presents the general construction for the affine case and the nonlinear case, in turn, and also gives a nuisance-dimension reduction technique. Section~\ref{sec:implementation} reports implementation and simulation evidence for the heterogeneous coefficients example. Section~\ref{sec:asymptotic} establishes the asymptotic distribution of the resulting estimator. Finally, the Appendix contains proofs and additional formulas.

\section{Setup\label{sec:setup}}

\subsection{Model and parameter of interest}

We observe a sample $\{W_i : i = 1, \ldots, n\}$ from an unknown distribution $P_0$. Consider a model defined by two moment conditions,
\begin{align}
\mathbb{E}[g(W_i, \theta_0, \eta_0)] &= 0,
&
\mathbb{E}[m(W_i, \theta_0, \eta_0)] &= 0,
\label{eq:model-general}
\end{align}
where $\theta_0 \in \mathbb{R}^{d_\theta}$ is the parameter of interest and $\eta_0 \in \mathbb{R}^{d_\eta}$ is a finite-dimensional nuisance parameter. The functions $g(W_i, \theta, \eta) \in \mathbb{R}^{d_g}$ and $m(W_i, \theta, \eta) \in \mathbb{R}$ are known. All expectations are taken with respect to $P_0$.

\begin{assumption}[i.i.d.\ sample]
\label{ass:iid}
The observations $W_1, \ldots, W_n$ are independent and identically distributed draws from $P_0$.
\end{assumption}

The first moment condition identifies $\eta_0$ for given $\theta_0$. We assume $d_g \geq d_\eta$, which allows the nuisance parameter to be either exactly identified or overidentified. The second moment condition is used to identify $\theta_0$. We take $m$ to be scalar throughout, purely for notational convenience. In practice one will often have $m(W_i, \theta, \eta) \in \mathbb{R}^{d_m}$ with $d_m > 1$, in which case our orthogonalization construction is applied to each component of $m$ separately. Identification of $\theta_0$ in a given application requires $d_m \geq d_\theta$, and the target-moment system in $\theta$ may itself be exactly identified or overidentified. These identification considerations do not affect the orthogonalization construction that we present.

Our goal is to construct a Neyman-orthogonal moment function $\psi$ that replaces the target moment function $m$. The orthogonal moment function has expected derivatives in $\eta$ that vanish up to order $q$ at the true parameter values, which makes it less sensitive to estimation error in $\eta$. To achieve orthogonality to a given order, our construction relies, in addition to $m$, on a moment condition for the nuisance parameter, $\mathbb{E}[g(W_i, \theta_0, \eta_0)] = 0$.\footnote{In our construction, the parameter $\theta$ plays a passive role. It is carried through as an argument of $m$ and $g$. Only after the orthogonal moment function has been constructed does $\theta$ enter the analysis as the parameter to be estimated, through the orthogonal moment equations. Thus, one could drop $\theta$ from most of the formulas in Sections~\ref{sec:example} through~\ref{sec:construction} without changing their content, but we retain it throughout the main text to avoid notational confusion.}

Lastly, in our setup, $\eta_0$ is a vector of finite dimension (which may grow with the sample size). The possible extension to infinite-dimensional nuisance parameters, as in the literature on higher-order influence functions \citep{RobinsLiTchetgenTchetgenvanderVaart2008,vanderVaart2014}, is left for future work.

\subsection{Nuisance parameters and the Jacobian inverse}

Let  $\partial_\eta$ be the gradient with respect to $\eta$. Our construction requires a left inverse of the $d_g \times d_\eta$ Jacobian matrix
\begin{equation}
J_0 := \mathbb{E}[\partial_\eta g(W_i, \theta_0, \eta_0)].
\label{eq:Jacobian}
\end{equation}
Let $\Lambda_0$ be a $d_\eta \times d_g$ matrix satisfying $\Lambda_0 J_0 = I_{d_\eta}$. When $d_g = d_\eta$ and $J_0$ is invertible, $\Lambda_0 = J_0^{-1}$. When $d_g > d_\eta$ and $J_0$ has full column rank, we can set $\Lambda_0 = (J_0'\, \Omega^{-1}\, J_0)^{-1} J_0'\, \Omega^{-1}$ for a positive-definite weight matrix $\Omega$.

The matrix $\Lambda_0$ enters as an additional nuisance parameter in our construction. We parametrize it as $\Lambda(\lambda)$ for a finite-dimensional parameter $\lambda \in \mathbb{R}^{d_\lambda}$, with true value $\lambda_0$ satisfying $\Lambda(\lambda_0) = \Lambda_0$. The simplest choice is $\lambda = \mathrm{vec}(\Lambda)$, giving $d_\lambda = d_\eta \cdot d_g$.

\begin{assumption}[Jacobian left inverse]
\label{ass:jacobian-inverse}
The Jacobian matrix $J_0$ has full column rank, so a left inverse $\Lambda_0$ with $\Lambda_0 J_0 = I_{d_\eta}$ exists. The map $\lambda \mapsto \Lambda(\lambda)$ satisfies $\Lambda(\lambda_0) = \Lambda_0$.
\end{assumption}

While the dimension $d_\lambda$ does not enter the orthogonalization construction itself, $d_{\lambda}$ does matter in applications, because the accuracy with which $\widehat\Lambda$ can be estimated depends on $d_\lambda$, with a smaller $d_\lambda$ potentially leading to a more accurate preliminary estimate. For this reason, Section~\ref{sec:nuisance} develops techniques that reduce $d_\lambda$, to a single scalar if desired, by replacing the moment function $g$ with a suitably transformed version.

We first briefly illustrate the setup with two cross-sectional examples, and we will study a grouped data model in more detail in the next section.

\begin{example}[Linear instrumental variables, including linear regression as a special case]
\label{ex:iv}
Let $W_i = (Y_i, D_i, Z_i, X_i)$ with $D_i \in \mathbb{R}$ a regressor of interest, $Z_i \in \mathbb{R}$ an instrument satisfying $\mathbb{E}[Z_i D_i] \neq 0$, and $X_i \in \mathbb{R}^K$ exogenous controls. Consider
\[
Y_i = D_i \theta_0 + X_i' \eta_0 + U_i, \qquad \mathbb{E}[X_i U_i] = 0, \qquad \mathbb{E}[Z_i U_i] = 0.
\]
The parameter of interest is $\theta_0 \in \mathbb{R}$. The framework~\eqref{eq:model-general} applies with
\[
g(W_i, \theta, \eta) = X_i (Y_i - D_i \theta - X_i' \eta), \qquad m(W_i, \theta, \eta) = Z_i (Y_i - D_i \theta - X_i' \eta).
\]
Both moment functions are affine in $\eta$, and the Jacobian is $J_0 = -\mathbb{E}[X_i X_i']$, so $\Lambda_0 = -(\mathbb{E}[X_i X_i'])^{-1}$ is minus the inverse of the population second-moment matrix of the controls. The special case $Z_i = D_i$ corresponds to linear regression with high-dimensional controls, in which $\theta_0$ is the coefficient on $D_i$ in the projection of $Y_i$ on $(D_i, X_i)$ and the additional moment condition $\mathbb{E}[D_i U_i] = 0$ is needed for identification.
\end{example}

\begin{example}[Mean of a nonlinear function of a generated regressor]
\label{ex:generated}
As an instance of the two-step generated-regressor setting studied by \cite{CattaneoJanssonMa2019}, let $W_i = (R_i, X_i)$ with $R_i \in \mathbb{R}$ and $X_i \in \mathbb{R}^{d_\eta}$, and let $f$ be a known nonlinear function. Define $\eta_0 = (\mathbb{E}[X_i X_i'])^{-1} \mathbb{E}[X_i R_i]$ as the coefficient of the linear projection of $R_i$ on $X_i$, so that $\mu_i := X_i' \eta_0$ is the population fitted value. The parameter of interest is
\[
\theta_0 = \mathbb{E}[f(\mu_i)] = \mathbb{E}[f(X_i' \eta_0)].
\]
The framework applies with
\[
g(W_i, \theta, \eta) = X_i (R_i - X_i' \eta), \qquad m(W_i, \theta, \eta) = f(X_i' \eta) - \theta.
\]
Here $g$ is affine in $\eta$, but $m$ is nonlinear in $\eta$ through $f$.
\end{example}

\begin{remark}[$\Lambda$ as a by-product of estimating $\eta$]
\label{rem:lambda-byproduct}
Although Assumption~\ref{ass:jacobian-inverse} formally introduces $\Lambda(\lambda)$ as an additional nuisance parameter, in practice $\widehat\Lambda$ is often a by-product of estimating $\eta$ rather than a separate quantity. This is most transparent in the affine case. In Example~\ref{ex:iv}, computing the least-squares estimator of $\eta$ requires inverting the sample second-moment matrix $n^{-1}\sum_i X_i X_i'$, and that inversion already delivers an estimator for $\Lambda$ at no additional cost. The same is true in nonlinear settings, where the Jacobian inverse at the solution $\widehat\eta$ is computed as part of the numerical optimization.
\end{remark}

\subsection{Higher-order Neyman orthogonality}

Let $\psi(W_1, \ldots, W_L; \theta, \eta, \lambda) \in \mathbb{R}$ be a moment function that depends on $L \geq 1$ independent copies of $W_i$, the parameter of interest $\theta$, the nuisance parameter $\eta$, and the Jacobian-inverse parameter $\lambda$. Define the population moment
\begin{equation}
\Psi(\theta, \eta, \lambda) := \mathbb{E}[\psi(W_1, \ldots, W_L; \theta, \eta, \lambda)].
\label{eq:Psi}
\end{equation}

\begin{definition}[Higher-order Neyman orthogonality]
\label{def:orthogonality}
For $q \in \{1,2,3,\ldots\}$,
the moment function $\psi$ is \emph{$q$-th order Neyman-orthogonal} if
\begin{align}
\Psi(\theta_0, \eta_0, \lambda_0) &= 0,  \notag \\
\partial_\eta^{\alpha} \partial_\lambda^{\beta} \Psi(\theta_0, \eta_0, \lambda_0) &= 0
\quad \text{for all multi-indices $(\alpha, \beta)$ with } 1 \leq |\alpha| + |\beta| \leq q,
\label{eq:ortho}
\end{align}
where $\alpha = (\alpha_1, \ldots, \alpha_{d_\eta})$ and $\beta = (\beta_1, \ldots, \beta_{d_\lambda})$ are vectors of non-negative integers, $|\alpha| = \sum_j \alpha_j$, and $\partial_\eta^\alpha \partial_\lambda^\beta = \partial^{|\alpha|+|\beta|} / (\partial \eta_1^{\alpha_1} \cdots \partial \eta_{d_\eta}^{\alpha_{d_\eta}} \partial \lambda_1^{\beta_1} \cdots \partial \lambda_{d_\lambda}^{\beta_{d_\lambda}})$.
\end{definition}

When $q = 1$, Definition \ref{def:orthogonality} reduces to the standard Neyman orthogonality condition. When $q = 0$, no orthogonality is required and $\psi(W_i; \theta, \eta) = m(W_i, \theta, \eta)$ suffices (with $L=1$). The condition \eqref{eq:ortho} imposes orthogonality in $\eta$ and $\lambda$ jointly. Beyond the pure $\eta$-derivatives and pure $\lambda$-derivatives of $\Psi$, it also requires all mixed derivatives $\partial_\eta^\alpha \partial_\lambda^\beta \Psi$ with $|\alpha| \geq 1$, $|\beta| \geq 1$, and $|\alpha| + |\beta| \leq q$ to vanish at the true values. The pure $\eta$-conditions control bias from estimating the original nuisance parameter, the pure $\lambda$-conditions control bias from estimating the Jacobian inverse, and the mixed conditions control bias arising from the interaction of the two estimation errors in the Taylor expansion of $\Psi$. All three kinds of condition are necessary for $q$-th order orthogonality.

\paragraph{Estimation.}
Given an i.i.d.\ sample $\{W_i : i = 1, \ldots, n\}$, the natural sample analogue of the population moment $\Psi(\theta, \eta, \lambda)$ is the U-statistic
\[
\widehat\Psi_n(\theta, \eta, \lambda) := \frac{1}{n(n-1)\cdots(n-L+1)} \sum_{\substack{i_1, \ldots, i_L = 1 \\ \text{all distinct}}}^{n} \psi(W_{i_1}, \ldots, W_{i_L}; \theta, \eta, \lambda).
\]
An estimator $\widehat\theta$ of $\theta_0$ is obtained by solving $\widehat\Psi_n(\widehat\theta, \widehat\eta, \widehat\lambda) = 0$, where $\widehat\eta$ and $\widehat\lambda$ are preliminary estimators of the nuisance parameters. To ensure that the estimation error in $(\widehat\eta, \widehat\lambda)$ does not contaminate the U-statistic, the sample is split into two independent parts. One part is used to construct $\widehat\eta$ and $\widehat\lambda$, and the other is used to evaluate $\widehat\Psi_n$. Efficiency can be improved by cross-fitting, that is by swapping the roles of the two sample parts and averaging, as in \cite{ChernozhukovChetverikovDemirerDufloHansenNeweyRobins2018}.

The key consequence of $q$-th order orthogonality is that, under appropriate regularity conditions, the estimator satisfies $\widehat\theta - \theta_0 = O_p(\|\widehat\eta - \eta_0\|^{q+1} + \|\widehat\lambda - \lambda_0\|^{q+1} + n^{-\nicefrac{1}{2}})$, since all lower-order terms in the Taylor expansion of $\Psi$ around $(\theta_0, \eta_0, \lambda_0)$ are of smaller order because of the orthogonality conditions. Valid inference is therefore possible even when the nuisance parameters converge at rates much slower than $n^{-\nicefrac{1}{4}}$, provided $q$ is chosen large enough. We provide more details on the asymptotic theory in Section \ref{sec:asymptotic}.

\section{Illustration: a heterogeneous coefficients model\label{sec:example}}

Before presenting the general construction of higher-order orthogonal moment functions, we illustrate the main ideas in a concrete example. 

Consider a grouped data model with heterogeneous coefficients,
\begin{equation}
Y_{it} = X_{it}'\eta_{i0} + U_{it}, \quad i = 1, \ldots, N, \quad t = 1, \ldots, T,
\label{eq:RC-model}
\end{equation}
for a scalar outcome $Y_{it}$ and a $d_\eta$-dimensional covariate vector $X_{it}$, under the uncorrelatedness condition
\begin{equation}
\mathbb{E}[X_{it} U_{it}] = 0, \quad i = 1, \ldots, N, \quad t = 1, \ldots, T.
\label{eq:uncorr}
\end{equation}
Such models have been studied extensively under the mean independence condition $\mathbb{E}[U_{it} \mid X_{i1}, \ldots, X_{iT}] = 0$ (e.g., \citealp{Chamberlain1992}). Condition \eqref{eq:uncorr} is weaker, since it does not restrict the conditional mean of $Y_{it}$ given $X_{it}$ to be linear. The coefficients $\eta_{i0}$ are the best linear predictors,
\[
\eta_{i0} = \bigg(\mathbb{E}\bigg[\sum_{t=1}^T X_{it} X_{it}'\bigg]\bigg)^{-1} \mathbb{E}\bigg[\sum_{t=1}^T X_{it} Y_{it}\bigg].
\]
We focus on the target parameter
\[
\theta_0 = \frac{1}{N} \sum_{i=1}^N m(\eta_{i0}),
\]
where $m$ is a smooth scalar function. %
The mean and second moment of the components of $\eta_{i0}$ are natural choices, for example to quantify the level and dispersion of the coefficients across units. As an application, \cite{KlineRoseWalters2022} study employment discrimination by sending randomized fictitious job applications to firms and measuring callbacks $Y_{it}$. There, $\eta_{i0}$ captures how each firm $i$ responds to applicant characteristics $X_{it}$ such as race, gender, and age, and moments of $\eta_{i0}$ quantify the extent and variation of discrimination across firms.

This model fits the framework of Section~\ref{sec:setup}. For each unit $i$, write the observation at $t$ as $W_{it} = (Y_{it}, X_{it})$. The nuisance-identifying moment function is
\[
g(W_{it}, \theta, \eta_i) = X_{it}(Y_{it} - X_{it}' \eta_i),
\]
and condition~\eqref{eq:uncorr} is equivalent to $\mathbb{E}[g(W_{it}, \theta_0, \eta_{i0})] = 0$ for every $t$. The Jacobian-inverse nuisance parameter is $\Lambda_{i0} = -(\mathbb{E}[X_{it} X_{it}'])^{-1}$, and the target-moment function for a given unit $i$ is $m_i(W_{it}, \theta, \eta_i) = m(\eta_i) - \theta$.

The observations $W_{it}$ (i.e., job applications) are independent across $t$ for each $i$ (i.e., within each firm). The moment conditions for $\eta_{i0}$ are affine in $\eta_i$, so this example falls within the affine case of Section~\ref{sec:affine} below. The ordinary least-squares estimators
\[
\widehat\eta_i^{\rm OLS} = \bigg(\sum_{t=1}^T X_{it} X_{it}'\bigg)^{-1} \sum_{t=1}^T X_{it} Y_{it},\quad i=1,...,N,
\]
are noisy when $T$ is small, and the plug-in estimator $\widehat\theta^{\rm plug\text{-}in} = (1/N) \sum_{i=1}^N m(\widehat\eta_i^{\rm OLS})$ is biased. Our approach constructs alternative estimators with reduced bias.

To apply the framework from Section~\ref{sec:setup} to this setting, let us fix a single unit $i$ and consider a sample of observations for $t = 1, \ldots, T$. Each observation $W_{it}$ plays the role of $W_i$ in Section~\ref{sec:setup}, the pair $(\eta_{i0}, \Lambda_{i0})$ plays the role of $(\eta_0, \Lambda_0)$, and the orthogonal moment function $\psi$ becomes a function of several observations for the same unit. The U-statistic constructions that follow are therefore applied across $t$ within a unit. The unit dimension $i = 1, \ldots, N$ plays no role at the orthogonalization stage. It enters only through the final step, in which the unit-level orthogonal moment functions are averaged over $i$ to form the estimator of $\theta_0$.\footnote{An equivalent approach is to stack $(\eta_{10}, \ldots, \eta_{N0})$ into a single nuisance vector of dimension $N d_\eta$ and treat the full data set of $n = NT$ observations as one sample to which the Section~\ref{sec:setup} construction is applied directly. The two routes produce the same orthogonal moment functions. We adopt the unit-by-unit view because it keeps the effective nuisance dimension bounded as $N$ grows and because it is conceptually simpler.}

\subsection{Main ideas\label{sec:example-ideas}}

Since the problem stratifies across units, we focus on a single unit $i$ and describe the construction at the unit level. It is useful to introduce the reparameterization $A_{i0} = -\Lambda_{i0}^{-1}$ and $b_{i0} = -\Lambda_{i0}^{-1} \eta_{i0}$, with $a_{i0} = \mathrm{vech}(A_{i0})$ denoting the vector that contains the elements of $A_{i0}$.

The construction relies on three observations.

The \emph{first observation} is that $A_{i0}$ and $b_{i0}$ admit simple unbiased estimators:
\begin{equation}
\mathbb{E}[X_{it} X_{it}'] = A_{i0}, \qquad \mathbb{E}[X_{it} Y_{it}] = b_{i0}.
\label{eq:unbiased-Ab}
\end{equation}

The \emph{second observation} is that any polynomial $P(a_{i0}, b_{i0})$ admits an unbiased estimator. Each element of $(a_{i0}, b_{i0})$ has an unbiased estimator by \eqref{eq:unbiased-Ab}. Since observations are independent across $t$, one can combine estimators from different $t$ observations to produce unbiased estimators of powers and products. For example, when $d_\eta = 1$, an unbiased estimator of $a_{i0}^2 = (\mathbb{E}[X_{it}^2])^2$ is $X_{i1}^2 X_{i2}^2$, since $\mathbb{E}[X_{i1}^2 X_{i2}^2] = \mathbb{E}[X_{i1}^2]\, \mathbb{E}[X_{i2}^2] = a_{i0}^2$.

The \emph{third observation} is that $m(\eta_{i0})$ can be approximated by a polynomial in $(a_{i0}, b_{i0})$. Writing $m(\eta_{i0}) = \varphi(a_{i0}, b_{i0})$ with $\varphi(a, b) = m(A^{-1} b)$, the function $\varphi$ has a singularity at $\det A = 0$ and so cannot be globally represented as a polynomial. However, given preliminary estimators $\widehat\Lambda_i$ and $\widehat\eta_i$ that are independent of the remaining data, one can expand locally: setting $\widehat a_i = -\mathrm{vec}(\widehat\Lambda_i^{-1})$ and $\widehat b_i = -\widehat\Lambda_i^{-1} \widehat\eta_i$,
\begin{equation}
m(\eta_{i0}) = \sum_{|\alpha| + |\beta| \leq q} \frac{1}{\alpha!\, \beta!}\, \partial_a^\alpha \partial_b^\beta \varphi(\widehat a_i, \widehat b_i)\, (a_{i0} - \widehat a_i)^\alpha (b_{i0} - \widehat b_i)^\beta + R_{q,i},
\label{eq:taylor-ab}
\end{equation}
where the remainder $R_{q,i}$ depends on powers of order $q+1$ in $a_{i0} - \widehat a_i$ and $b_{i0} - \widehat b_i$. Each term $(a_{i0} - \widehat a_i)^\alpha (b_{i0} - \widehat b_i)^\beta$ admits an unbiased estimator built from independent observations. This yields an approximately unbiased estimator of $m(\eta_{i0})$ based on a $q$-th order orthogonal moment function.

In practice, the preliminary estimators $\widehat\Lambda_i$ and $\widehat\eta_i$ are obtained by holding out a set of $t$ observations. Efficiency can be improved by cross-fitting. A feature of this construction is that the nuisance parameters $\widehat\Lambda_i$ and $\widehat\eta_i$ do not change with the order of orthogonalization, $q$.

\subsection{First example: the normal-means model without normality\label{sec:example-NS}}

Consider first the special case without covariates,
\begin{equation}
Y_{it} = \eta_{i0} + U_{it}, \quad \mathbb{E}[U_{it}] = 0, \quad i = 1, \ldots, N, \quad t = 1, \ldots, T,
\label{eq:NS-model}
\end{equation}
where $U_{it}$ are i.i.d.\ but not necessarily normally distributed. This is the \cite{NeymanScott1948} model without the normality assumption. Here $d_\eta = 1$ and $\Lambda_{i0} = -1$ is known, so $\eta_{i0}$ is the only nuisance parameter.

A $q$-th order Taylor expansion of $m(\eta_{i0})$ around a preliminary estimator $\widehat\eta_i$ gives
\[
m(\eta_{i0}) = \sum_{k=0}^q \frac{\partial_\eta^k m(\widehat\eta_i)}{k!}\, (\eta_{i0} - \widehat\eta_i)^k + R_{q,i}.
\]
Expanding $(\eta_{i0} - \widehat\eta_i)^k = \sum_{j=0}^k \binom{k}{j} \eta_{i0}^j\, (-\widehat\eta_i)^{k-j}$ and interchanging the order of summation,
\[
m(\eta_{i0}) = \sum_{j=0}^q \bigg(\sum_{k=j}^q \frac{\partial_\eta^k m(\widehat\eta_i)}{k!}\, \binom{k}{j}\, (-\widehat\eta_i)^{k-j}\bigg)\, \eta_{i0}^j + R_{q,i}.
\]
Since, by independence and $\mathbb{E}[U_{it}] = 0$, 
\begin{equation}
\mathbb{E}\left[\prod_{s=1}^j Y_{is}\right] = \eta_{i0}^j,\label{eq_unbiased_etaj}
\end{equation}
an approximately unbiased estimator of $m(\eta_{i0})$ is
\begin{equation}
\sum_{j=0}^q \bigg(\sum_{k=j}^q \frac{\partial_\eta^k m(\widehat\eta_i)}{k!}\, \binom{k}{j}\, (-\widehat\eta_i)^{k-j}\bigg)\, \prod_{s=1}^j Y_{is}.
\label{eq:NS-estimator}
\end{equation}
A lower-variance estimator is obtained by averaging \eqref{eq:NS-estimator} over all subsets of $q$ observations, yielding a U-statistic.

\paragraph{Orthogonality.}
Define the moment function
\begin{equation}
\psi(W_i; \theta, \eta_i) = \sum_{j=0}^q \bigg(\sum_{k=j}^q \frac{\partial_\eta^k m(\eta_i)}{k!}\, \binom{k}{j}\, (-\eta_i)^{k-j}\bigg)\, \prod_{s=1}^j Y_{is} - \theta,
\label{eq:psi-NS}
\end{equation}
where $W_i = (Y_{i1}, \ldots, Y_{iT})$. Using (\ref{eq_unbiased_etaj}), reversing the order of summation, and applying the binomial theorem, we have
\begin{equation}
\mathbb{E}[\psi(W_i; \theta, \eta_i)]
= \sum_{k=0}^q \frac{\partial_\eta^k m(\eta_i)}{k!}\, (\eta_{i0} - \eta_i)^k - \theta.
\label{eq:Psi-NS}
\end{equation}
At $(\theta_{i0}, \eta_{i0})$ with $\theta_{i0} = m(\eta_{i0})$, the right-hand side equals $m(\eta_{i0}) - \theta_{i0} = 0$. Write $f(\eta_i) = \sum_{k=0}^q \frac{\partial_\eta^k m(\eta_i)}{k!}\, (\eta_{i0} - \eta_i)^k$. A direct computation using the product rule shows that the derivative telescopes:
\begin{equation}
\partial_\eta f(\eta_i) = \frac{\partial_\eta^{q+1} m(\eta_i)}{q!}\, (\eta_{i0} - \eta_i)^q,
\label{eq:NS-telescope}
\end{equation}
which vanishes at $\eta_i = \eta_{i0}$. By Leibniz' rule, the $j$-th derivative of $f$ at $\eta_i = \eta_{i0}$ also vanishes for all $1 \leq j \leq q$, since every term contains a factor $(\eta_{i0} - \eta_i)^{q+1-j+\ell}$ for some $\ell \geq 0$. The moment function $\psi$ is, therefore, $q$-th order Neyman-orthogonal in the sense of Definition~\ref{def:orthogonality}.

\subsection{Second example: a scalar covariate at second order\label{sec:example-covariate}}

We now consider the model \eqref{eq:RC-model} with a scalar covariate ($d_\eta = 1$) and construct a second-order orthogonal moment function ($q = 2$). Unlike the previous example, the Jacobian inverse $\Lambda_{i0} =- (\mathbb{E}[X_{it}^2])^{-1}$ is now unknown and enters as an additional nuisance parameter.

A second-order Taylor expansion of $m(\eta_{i0})$ around $\widehat\eta_i$ gives
\begin{equation}
m(\eta_{i0}) \approx m(\widehat\eta_i) + \partial_\eta m(\widehat\eta_i)\,(\eta_{i0} - \widehat\eta_i) + \tfrac{1}{2}\, \partial_\eta^2 m(\widehat\eta_i)\,(\eta_{i0} - \widehat\eta_i)^2.\label{eq_taylor2}
\end{equation}
To express $\eta_{i0} - \widehat\eta_i$ in terms of quantities with unbiased estimators, write $a_{i0} =- \Lambda_{i0}^{-1} = \mathbb{E}[X_{it}^2]$ and $b_{i0} = -\Lambda_{i0}^{-1} \eta_{i0} = \mathbb{E}[X_{it} Y_{it}]$. Expanding $\eta_{i0} = b_{i0}/a_{i0}$ around $(\widehat a_i, \widehat b_i) = (-\widehat\Lambda_i^{-1}, -\widehat\Lambda_i^{-1}\widehat\eta_i)$ to second order,
\[
\eta_{i0} - \widehat\eta_i \approx -\widehat\Lambda_i(b_{i0} - \widehat\eta_i\, a_{i0}) - \widehat\Lambda_i(1 + \widehat\Lambda_i\, a_{i0})(b_{i0} - \widehat\eta_i\, a_{i0}).
\]
Substituting into (\ref{eq_taylor2}) and replacing $a_{i0}$ and $b_{i0}$ by their unbiased estimators ($X_{it}^2$ and $X_{it} Y_{it}$, respectively), we obtain the approximately unbiased estimator of $m(\eta_{i0})$
\begin{align*}m(\widehat{\eta}_i)&-\partial_\eta m(\widehat\eta_i)\left(	\widehat{\Lambda}_i(X_{i1}Y_{i1}-\widehat{\eta}_i X_{i1}^2)+\widehat{\Lambda}_i(1+\widehat{\Lambda}_iX_{i2}^2)(X_{i1}Y_{i1}-\widehat\eta_i X_{i1}^2)\right)\\&+\frac{1}{2}\partial_\eta^2 m(\widehat\eta_i)(	\widehat{\Lambda}_i(X_{i1}Y_{i1}-\widehat{\eta}_i X_{i1}^2))(	\widehat{\Lambda}_i(X_{i2}Y_{i2}-\widehat{\eta}_i X_{i2}^2)).
\end{align*}

The resulting moment function is
\begin{align}
\psi(W_i; \theta, \eta_i, \Lambda_i) &= m(\eta_i) - \partial_\eta m(\eta_i)\, \Lambda_i(X_{i1} Y_{i1} - \eta_i\, X_{i1}^2) \notag\\
&\quad - \partial_\eta m(\eta_i)\, \Lambda_i(1 + \Lambda_i\, X_{i2}^2)(X_{i1} Y_{i1} - \eta_i\, X_{i1}^2) \notag\\
&\quad + \tfrac{1}{2}\, \partial_\eta^2 m(\eta_i)\, \Lambda_i(X_{i1} Y_{i1} - \eta_i\, X_{i1}^2) \cdot \Lambda_i(X_{i2} Y_{i2} - \eta_i\, X_{i2}^2) - \theta.
\label{eq:psi-covariate}
\end{align}
One can verify that $\mathbb{E}[\psi(W_i; \theta_{i0}, \eta_i, \Lambda_i)]$ and all its first and second derivatives in $(\eta_i, \Lambda_i)$ at $(\eta_{i0}, \Lambda_{i0})$ vanish. The moment function is therefore second-order Neyman-orthogonal.

\medskip

Both constructions in this section are special cases of the general affine formula in Theorem~\ref{th:main} below. The next section presents the general construction for arbitrary $d_\eta$ and arbitrary order $q$, first for affine $g$ and then for the fully nonlinear case.

\section{General construction of orthogonal moment functions\label{sec:construction}}

We now present the general construction of higher-order orthogonal moment functions. The target moment function $m(W_i, \theta, \eta)$ is allowed to be nonlinear in $\eta$ throughout. We first treat the case where the nuisance-identifying moment function $g(W_i, \theta, \eta)$ is affine in $\eta$, and then extend to the case where $g$ is also nonlinear.

\subsection{Moment conditions affine in the nuisance parameter\label{sec:affine}}

Assume throughout this subsection that the moment function $g(W_i, \theta, \eta)$ is affine in $\eta$, so that $\partial_\eta g(W_i, \theta, \eta)$ does not depend on $\eta$, whereas the target moment function $m(W_i, \theta, \eta)$ can be nonlinear in $\eta$. This covers many important cases. Linear regression and linear instrumental variables (Example~\ref{ex:iv}) have $g$ and $m$ both affine in $\eta$, while the average effect of a generated regressor (Example~\ref{ex:generated}) has $g$ affine and $m$ nonlinear in $\eta$.

We begin by introducing notation for multilinear derivatives that is used throughout this section. For $r \geq 1$, write $\partial_\eta^r m(W_i, \theta, \eta)$ for the $r$-th derivative tensor of $m$ with respect to $\eta$. For vectors $v^{(1)}, \ldots, v^{(r)} \in \mathbb{R}^{d_\eta}$, denote the multilinear action by
\[
\langle \partial_\eta^r m(W_i, \theta, \eta),\, v^{(1)} \otimes \cdots \otimes v^{(r)} \rangle
= \sum_{j_1, \ldots, j_r} \frac{\partial^r m(W_i, \theta, \eta)}{\partial \eta_{j_1} \cdots \partial \eta_{j_r}} v^{(1)}_{j_1} \cdots v^{(r)}_{j_r}.
\]
For $r = 1$ this gives the directional derivative $\partial_\eta m(W_i, \theta, \eta)' v^{(1)}$. For $r = 2$ it gives the quadratic form $v^{(1)\prime} \partial_{\eta\eta'}^2 m(W_i, \theta, \eta)\, v^{(2)}$.

For an integer $q \geq 0$, the orthogonal moment function is built from factors $Z_k$, each using $k$ independent copies of $W_i$. The outer sum aggregates over all ways to distribute at most $q$ copies among $r$ such factors, with the $r$-th derivative of $m$ providing the contraction at the root. Concretely, define
\begin{align}
\psi^{(q)}(W; \theta, \eta, \lambda)
&:= \sum_{r=0}^{q} \frac{1}{r!} \hspace{-.20cm}
\sum_{\substack{k_1, \ldots, k_r \geq 1 \\ k_1 + \cdots + k_r \leq q}}
\hspace{-.5cm}
(-1)^{\sum_{s=1}^r k_s} \binom{q}{\sum_{s=1}^r k_s}
\left\langle \partial_\eta^r m(W_1, \theta, \eta),\,
\bigotimes_{s=1}^{r} Z_{k_s}(W^{(s)}; \theta, \eta, \lambda)
\right\rangle,
\label{eq:score}
\end{align}
where the tensor product $\bigotimes_{s=1}^r Z_{k_s}$ denotes $Z_{k_1} \otimes \cdots \otimes Z_{k_r}$ (i.e., $\langle \partial_\eta^r m, \bigotimes_s Z_{k_s} \rangle$ is the $r$-linear contraction of $\partial_\eta^r m$ with the vectors $Z_{k_1}, \ldots, Z_{k_r}$), the argument $W = (W_1, W^{(1)}, \allowbreak \ldots , W^{(q)})$ consists of $q+1$ independent copies of $W_i$, and
\begin{equation}
Z_{k_s}(W^{(s)}; \theta, \eta, \lambda)
= \left( \prod_{\ell=1}^{k_s - 1} \Lambda(\lambda)\, \partial_\eta g(W_{(s,\ell)}, \theta, \eta) \right) \Lambda(\lambda)\, g(W_{(s,k_s)}, \theta, \eta) .
\label{eq:Zfactor}
\end{equation}
Each block $W^{(s)} = (W_{(s,1)}, \ldots, W_{(s,k_s)})$ uses $k_s$ independent copies, the copy $W_1$ enters only through $\partial_\eta^r m(W_1, \theta, \eta)$, and the $r = 0$ term uses the empty-product convention.

Each $Z_{k_s}$ pre-multiplies $k_s - 1$ Jacobian factors $\Lambda \partial_\eta g$ onto a terminal factor $\Lambda g$. The binomial weight $(-1)^{\sum_{s=1}^r k_s} \binom{q}{\sum_{s=1}^r k_s}$ ensures that all monomials of degree $1$ through $q$ in the nuisance estimation error cancel in expectation. 

\paragraph{Explicit formulas for small $q$.}
To make formula~\eqref{eq:score} concrete, we display $\psi^{(q)}$ for $q = 0, 1, 2$. Write $m := m(W_1, \theta, \eta)$, $m_\eta := \partial_\eta m(W_1, \theta, \eta)$, and $m_{\eta\eta} := \partial_{\eta\eta'}^2 m(W_1, \theta, \eta)$.

\paragraph{Case $q = 0$.}
\[
\psi^{(0)}(W_1; \theta, \eta) = m(W_1, \theta, \eta).
\]

\paragraph{Case $q = 1$.}
\[
\psi^{(1)}(W_1, W_2; \theta, \eta, \lambda) = m - m_\eta' \,\Lambda(\lambda)\, g(W_2, \theta, \eta).
\]

\paragraph{Case $q = 2$.}
\begin{equation}
\begin{split}
\psi^{(2)}(W_1, W_2, W_3; \theta, \eta, \lambda)
&= m - 2\, m_\eta' \,\Lambda(\lambda)\, g(W_2, \theta, \eta) \\
&\quad + m_\eta' \,\big(\Lambda(\lambda) \,\partial_\eta g(W_3, \theta, \eta)\big)\, \Lambda(\lambda)\, g(W_2, \theta, \eta) \\
&\quad + \tfrac{1}{2} \big(\Lambda(\lambda)\, g(W_2, \theta, \eta)\big)'\, m_{\eta\eta} \,\big(\Lambda(\lambda)\, g(W_3, \theta, \eta)\big).
   \label{AffinePsi2}
\end{split}
\end{equation}

The $q = 2$ moment function uses $L = 3$ independent copies. Explicit formulas for $q = 3$ and $q = 4$, using $L = 4$ and $L = 5$ copies respectively, are given in Appendix~\ref{app:formulas}.

\begin{assumption}
\label{ass:affine}
\leavevmode
\begin{enumerate}[(i)]
\item $g(W_i, \theta, \eta)$ is affine in $\eta$.
\item $m(W_i, \theta, \eta)$ is $(q+1)$-times continuously differentiable in $\eta$ in a neighborhood of $\eta_0$.
\item The map $\lambda \mapsto \Lambda(\lambda)$ in Assumption~\ref{ass:jacobian-inverse} is $(q+1)$-times continuously differentiable in a neighborhood of $\lambda_0$.
\item For all $r \leq q+1$, there is a neighborhood $\mathcal{N}$ of $\eta_0$ and an integrable function $M_r(W_i)$ with $\sup_{\eta \in \mathcal{N}} \|\partial_\eta^r m(W_i, \theta_0, \eta)\| \leq M_r(W_i)$ almost surely.
 \item $\mathbb{E}[\|g(W_i, \theta_0, \eta_0)\|] < \infty$ and $\mathbb{E}[\|\partial_\eta g(W_i, \theta_0, \eta_0)\|] < \infty$.
\end{enumerate}
\end{assumption}

\begin{theorem}[$q$-th order orthogonality, affine case]
\label{th:main}
Define the population moment associated with $\psi^{(q)}$ in \eqref{eq:score} by
$
\Psi^{(q)}(\theta, \eta, \lambda) := \mathbb{E}[\psi^{(q)}(W_1, \ldots, W_L; \theta, \eta, \lambda)].
$
Under Assumptions~\ref{ass:iid}, \ref{ass:jacobian-inverse}, and~\ref{ass:affine},
\[
\Psi^{(q)}(\theta_0, \eta_0, \lambda_0) = 0,
\]
and, for all multi-indices $(\alpha, \beta)$ with $1 \leq |\alpha| + |\beta| \leq q$,
\[
\partial_\eta^{\alpha} \partial_\lambda^{\beta} \Psi^{(q)}(\theta_0, \eta_0, \lambda_0) = 0.
\]
That is, $\psi^{(q)}$ is $q$-th order Neyman-orthogonal in the sense of Definition~\ref{def:orthogonality}.
\end{theorem}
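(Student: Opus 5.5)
The plan is to compute $\Psi^{(q)}(\theta_0,\eta,\lambda)$ in closed form and show that it differs from $\Psi^{(q)}(\theta_0,\eta_0,\lambda_0)=\mathbb{E}[m(W_i,\theta_0,\eta_0)]=0$ by a term of order $\|(\eta-\eta_0,\lambda-\lambda_0)\|^{q+1}$. Assumption~\ref{ass:affine}(ii)--(iv) makes $\Psi^{(q)}$ $(q+1)$-times continuously differentiable in $(\eta,\lambda)$ near $(\eta_0,\lambda_0)$. A $C^{q+1}$ function that vanishes at a point to order $q+1$ has all partial derivatives of order $1,\dots,q$ equal to zero there. So this single estimate gives all the conditions in \eqref{eq:ortho}. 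Throughout I set $\theta=\theta_0$ and suppress it.

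\emph{Step 1 (factorization).} Because the copies are independent (Assumption~\ref{ass:iid}), the expectation of each summand in \eqref{eq:score} factorizes. Let $\mu_r(\eta):=\mathbb{E}[\partial^r_\eta m(W_1,\eta)]$, which is legitimate by Assumption~\ref{ass:affine}(iv). Since $g$ is affine, $\mathbb{E}[\partial_\eta g]=J_0$ for every $\eta$ and $\mathbb{E}[g(W,\eta)]=J_0(\eta-\eta_0)$. Put $M:=\Lambda(\lambda)J_0$ and $v:=\eta_0-\eta$. Then \eqref{eq:Zfactor} gives $\mathbb{E}[Z_k]=-M^k v$, and
\[
\Psi^{(q)}(\theta_0,\eta,\lambda)=\sum_{r=0}^q\frac{1}{r!}\sum_{\substack{k_1,\dots,k_r\ge1\\ K\le q}}(-1)^{K+r}\binom{q}{K}\Big\langle \mu_r(\eta),\bigotimes_{s=1}^r M^{k_s}v\Big\rangle,\qquad K=\textstyle\sum_s k_s .
\]
By Assumption~\ref{ass:jacobian-inverse}, $\Lambda_0J_0=I$, so $E:=I-M$ is $O(\|\lambda-\lambda_0\|)$ by Assumption~\ref{ass:affine}(iii).

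\emph{Step 2 (Taylor target).} Expanding around $\eta$ gives
\[
0=\mathbb{E}[m(W,\eta_0)]=\sum_{r=0}^q\frac{1}{r!}\langle\mu_r(\eta),v^{\otimes r}\rangle+O(\|v\|^{q+1}).
\]
It therefore suffices to show, for each $r\le q$, that
\[
\sum_{k_1,\dots,k_r\ge1,\,K\le q}(-1)^{K+r}\binom{q}{K}\bigotimes_s M^{k_s}=I^{\otimes r}+O(\|E\|^{q-r+1}).
\]
Contracting against $v^{\otimes r}$ then produces an error of order $\|v\|^r\|E\|^{q+1-r}$, which has total degree $q+1$.

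\emph{Step 3 (combinatorial identity, the main obstacle).} The binomial weight depends on the total $K$ rather than on each $k_s$ separately, so the identity does not factor across the $r$ tensor slots. All the $M^{k_s}$ are powers of one matrix and commute as operators on the separate slots. Hence it suffices to prove the scalar polynomial identity in commuting variables $x_1,\dots,x_r$:
\[
F_r(x):=\sum_{k_s\ge1,\,K\le q}(-1)^{K+r}\binom{q}{K}\prod_s x_s^{k_s}\equiv 1 \pmod{\text{ideal generated by degree-}(q-r+1)\text{ monomials in }(1-x_s)}.
\]
Substituting $x_s=1-e_s$ turns this into a statement that a polynomial in the $e_s$ equals $1$ plus terms of degree at least $q-r+1$. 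For $r=1$ it is the binomial theorem, $\sum_{k=1}^q(-1)^{k+1}\binom{q}{k}x^k=1-(1-x)^q$.

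For general $r$, the first attempt will be the generating-function route. Write $\binom{q}{K}(-1)^K=[t^K](1-t)^q$ and sum the geometric series, $\sum_{k_s\ge1}x_s^{k_s}t^{k_s}=x_st/(1-x_st)$. This gives
\[
F_r=(-1)^r\,\mathcal{T}_q\!\left[(1-t)^q\prod_s\frac{x_st}{1-x_st}\right]_{t=1},
\]
where $\mathcal{T}_q$ truncates at degree $q$ in $t$. The truncation error can then be bounded via a contour-integral or residue representation, as follows. At $x_s=1$ the untruncated expression is $(-1)^r(1-t)^{q-r}t^r$, a polynomial of degree $q$ whose value at $t=1$ is $(-1)^r\cdot 0=0$ for $r<q$. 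That is not $1$, so a naive evaluation has a sign or normalization issue which I would resolve by carefully tracking the truncation. If the generating-function route becomes unwieldy, I would instead verify the diagonal case $x_s\equiv x$, where the identity reduces to
\[
\sum_K(-1)^{K-r}\binom{q}{K}\binom{K-1}{r-1}x^K\equiv1 \pmod{(1-x)^{q-r+1}}.
\]
I would prove that by induction on $r$ through a differentiation identity in $x$, and then extend to distinct $x_s$ using symmetry of the polynomial.

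Steps 1--2 are routine; all the difficulty is concentrated in this identity.
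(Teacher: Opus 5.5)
Your Steps 1--2 and the passage to commuting scalar variables are correct, and the identity you isolate is true. With $x_s=1-e_s$, the coefficient of $\prod_s e_s^{c_s}$ in $F_r$ is $(-1)^r S(c_1,\dots,c_r)$ in the notation of Lemma~\ref{lem:key}, so your Step 3 is exactly parts (i)--(ii) of the paper's key lemma. Comparing $\Psi^{(q)}$ with the Taylor expansion of $\mathbb{E}[m(W,\eta_0)]$ around $\eta$ is a nice touch, since it spares you the paper's separate check that pure-$\delta$ monomials cancel. But this identity is the entire content of the theorem, and neither route you sketch proves it.

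Your generating-function formula is wrong, not merely off by a sign. Truncating $(1-t)^q\prod_s x_st/(1-x_st)$ at degree $q$ and setting $t=1$ computes the convolution $\sum_{j+K\le q}(-1)^j\binom{q}{j}(\cdots)$, not $\sum_K(-1)^K\binom{q}{K}(\cdots)$. For $r=1$, $q=2$ it gives $x-x^2$ instead of $F_1=2x-x^2$. The $0$ you get at $x_s=1$ is this error showing, and no tracking of the truncation will turn it into $1$. The correct version uses $\binom{q}{K}=\binom{q}{q-K}$ to extract a single coefficient: $F_r=(-1)^{q+r}[t^q]\bigl((1-t)^q\prod_s x_st/(1-x_st)\bigr)$. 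At $x_s=1$ this gives $(-1)^{q+r}(-1)^{q-r}=1$. To get the full identity, expand in $e$ before summing over $k$. For fixed $c\ge0$, $\sum_{k\ge1}(-1)^c\binom{k}{c}t^k$ equals $t/(1-t)$ if $c=0$ and $(-1)^ct^c/(1-t)^{c+1}$ if $c\ge1$. So the coefficient of $\prod_s e_s^{c_s}$ is $(-1)^{q+r}[t^q]$ of $(-1)^\sigma t^{\sigma+r_0}(1-t)^{q-\sigma-r}$, with $\sigma=\sum_s c_s$ and $r_0=\#\{s:c_s=0\}$. For $\sigma\le q-r$ this is a polynomial of degree $q-(r-r_0)$, so the coefficient vanishes unless all $c_s=0$. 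This is how the paper proves Lemma~\ref{lem:key}.

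Your fallback fails for $d_\eta>1$. There $\bigotimes_s M^{k_s}$ depends on the individual $k_s$, not only on $K$ (e.g.\ $M^2v\otimes v\ne Mv\otimes Mv$). You therefore need the identity at genuinely distinct $x_s$; think of different eigenvalues of $M$ appearing in different slots. Symmetry cannot recover this from the diagonal. Restricting a degree-$d$ homogeneous component in $e$ to $e_1=\dots=e_r$ records only the sum of its coefficients. A symmetric polynomial such as $(e_1-e_2)^2$ vanishes on the diagonal yet has degree $2<q-r+1$ whenever $q-r\ge2$. The diagonal argument therefore covers only the scalar case $d_\eta=1$, where $\prod_s M^{k_s}=M^K$.
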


The proof is in Appendix~\ref{app:proof-thm1}.

\begin{remark}[Comparison with standard first-order orthogonality]
For $q = 1$, our moment function reduces to $\psi^{(1)} = m(W_1, \theta, \eta) - \partial_\eta m(W_1, \theta, \eta)' \Lambda(\lambda)\, g(W_2, \theta, \eta)$, which uses two independent copies and has the Jacobian inverse $\Lambda$ as nuisance parameter. The standard first-order Neyman-orthogonal moment function \citep{Newey1994,ChernozhukovChetverikovDemirerDufloHansenNeweyRobins2018} is $\psi_{\rm std}^{(1)}(W_i; \theta, \eta, \lambda) = m(W_i, \theta, \eta) - \lambda'\, g(W_i, \theta, \eta)$, which uses a single observation and has a $d_g$-dimensional nuisance vector $\lambda_0 = \Lambda_0'\, \partial_\eta \mathbb{E}[m(W_i, \theta_0, \eta_0)]$. The standard moment function is simpler: it combines $\Lambda_0$ and $\partial_\eta \mathbb{E}[m]$ into a single nuisance $\lambda$ and avoids the second copy. Our construction separates these ingredients: $\Lambda$ handles the Jacobian inverse while $\partial_\eta m(W_1, \theta, \eta)$ enters stochastically through the independent copy $W_1$. This separation enables the natural extension to $q \geq 2$.
\end{remark}

\subsection{Nonlinear $g$: general moment conditions\label{sec:nonlinear}}

When the moment function $g(W_i, \theta, \eta)$ is nonlinear in $\eta$,
additional correction terms are needed beyond the affine orthogonal moment
function. The target $m$ was already allowed to be nonlinear in the previous
subsection, but we now also allow for nonlinearity of $g$. We denote the
affine orthogonal moment function from~\eqref{eq:score} by
$\psi^{(q)}_{\rm aff}$ and write the corrected orthogonal moment function as
\begin{equation}
\psi^{(q)}(W, \theta, \eta, \lambda)
= \psi^{(q)}_{\rm aff}(W, \theta, \eta, \lambda) + \text{correction terms}.
\label{eq:nonlinear-form}
\end{equation}
The reason $\psi^{(q)}_{\rm aff}$ fails to be orthogonal for nonlinear $g$ is that the proof of
Theorem~\ref{th:main} relies on $\Lambda_0\, \mathbb{E}[g(W_i, \theta_0, \eta_0
+ \delta)]$ being linear in $\delta$. When $g$ is nonlinear, this expression
contains quadratic and higher-order terms in $\delta$ that the affine
construction does not cancel. The correction terms involve the higher-order
derivatives $\partial_\eta^p g(W_i, \theta, \eta)$ for $p \geq 2$. For $q = 1$ no correction is needed: $\psi^{(1)} = \psi^{(1)}_{\rm aff}$ is first-order Neyman-orthogonal even when $g$ is nonlinear.

Before turning to the construction, we set up notation used throughout this subsection. For $p \geq 1$, the $p$-th derivative tensor $\partial_\eta^p g(W_i, \theta, \eta) \in \mathbb{R}^{d_g} \otimes (\mathbb{R}^{d_\eta})^{\otimes p}$ has one $d_g$-output slot and $p$ input slots in $\mathbb{R}^{d_\eta}$, with entries $[\partial_\eta^p g]_{k, j_1, \ldots, j_p} = \partial^p g_k(W_i, \theta, \eta) / \partial \eta_{j_1} \cdots \partial \eta_{j_p}$. For $p = 1$ this is the $d_g \times d_\eta$ Jacobian. To express contractions of such tensors with vectors in $\mathbb{R}^{d_\eta}$, we extend the angle-bracket notation of Section~\ref{sec:affine} as follows: for a tensor $A$ with $p$ input slots in $\mathbb{R}^{d_\eta}$ and output in a vector space $V$, and for $v_1, \ldots, v_p \in \mathbb{R}^{d_\eta}$,
\[
A[v_1, \ldots, v_p] \;\in\; V
\]
denotes the contraction of $A$ along its inputs, leaving the $V$-output unchanged. For the scalar function $m$, the bracket and the angle-bracket of Section~\ref{sec:affine} agree, $\partial_\eta^r m\,[v_1, \ldots, v_r] = \langle \partial_\eta^r m, v_1 \otimes \cdots \otimes v_r \rangle \in \mathbb{R}$. For the vector-valued $g$, $\partial_\eta^p g\,[v_1, \ldots, v_p] \in \mathbb{R}^{d_g}$, and pre-multiplying by $\Lambda(\lambda)$ contracts the $d_g$-output slot to give a vector in $\mathbb{R}^{d_\eta}$.

\subsubsection{Second-order orthogonality via reparameterization\label{sec:nonlinear-q2}}

The orthogonal moment function for the general nonlinear
case can be derived from the affine formula via a change of nuisance parameter
$\phi = f(\eta)$, and the goal of this subsection is to show that
derivation for $q = 2$. Let $f : \mathbb{R}^{d_\eta} \rightarrow \mathbb{R}^{d_\eta}$ be a parameter transformation, define
$
\overline g(W_i, \theta, \phi) := g\!\left(W_i, \theta, f^{-1}(\phi)\right).
$

We want to apply the affine moment function construction to $\overline g$, but it is important
to note that $\overline g$ being affine in $\phi$ is not a necessary condition for that construction to be applicable.  Inspecting the proof of
Theorem~\ref{th:main} in the appendix, one sees that all that is actually
needed at $q = 2$ is the condition
\begin{equation}
  \Lambda_0\, \mathbb{E}\!\left[\partial_\phi^2 \overline g(W_i, \theta_0, \phi_0)\right]
  \;=\; 0,
\label{eq:T2-def}
\end{equation}
where $\partial_\phi^2 \overline g(W_i, \theta, \phi) \in \mathbb{R}^{d_g} \otimes (\mathbb{R}^{d_\eta})^{\otimes 2}$ follows the same convention as $\partial_\eta^p g$ above, with $\eta$ replaced by $\phi$ and $g$ by $\overline g$. Pre-multiplication by $\Lambda_0$ contracts the $d_g$-output slot, and~\eqref{eq:T2-def} requires the resulting $d_\eta \times d_\eta \times d_\eta$ tensor to vanish.
Define $T_2 :=  \Lambda_0\, \mathbb{E}\!\left[\partial_\eta^2 g(W_i, \theta_0, \eta_0)\right]
\;\in\; \mathbb{R}^{d_\eta \times d_\eta \times d_\eta}$. Then a simple explicit choice for $f$ that guarantees \eqref{eq:T2-def} is given by
\begin{equation}
\phi = f(\eta) := \eta + \tfrac{1}{2}\, T_2[\eta - \eta_0,\, \eta - \eta_0] \, .
\label{eq:f-explicit}
\end{equation}
Because \eqref{eq:T2-def} holds for this choice of $f$,\footnote{From~\eqref{eq:f-explicit} we have $\phi_0 = \eta_0$ and $\partial_\eta f(\eta_0) = I_{d_\eta}$, so by the inverse function theorem $f^{-1}$ exists locally with $\partial_\phi f^{-1}(\phi_0) = I_{d_\eta}$ and $\partial_\phi^2 f^{-1}(\phi_0) = -T_2$. The chain rule applied to $\overline g(W_i, \theta_0, \phi) = g(W_i, \theta_0, f^{-1}(\phi))$ at $\phi_0$ then gives $\partial_\phi^2 \overline g(W_i, \theta_0, \phi_0) = \partial_\eta^2 g(W_i, \theta_0, \eta_0) + \partial_\eta g(W_i, \theta_0, \eta_0)\cdot(-T_2)$, and pre-multiplying by $\Lambda_0$ and taking expectations yields $T_2 - T_2 = 0$.} the affine $q = 2$ moment function from~\eqref{eq:score}, with $g$ replaced by $\overline g$ and $\eta$ replaced by $\phi$, is second-order Neyman-orthogonal at $(\theta_0, \phi_0, \lambda_0)$. Defining $\overline m(W_i, \theta, \phi) := m(W_i, \theta, f^{-1}(\phi))$ analogously to $\overline g$, this orthogonal moment function reads, evaluated at $\phi = f(\eta)$,
\begin{align*}
\overline \psi^{(2)}(W, \theta, \eta, \lambda)
&= \overline m(W_1, \theta, \phi)
- 2\, \partial_\phi \overline m(W_1, \theta, \phi)'\, \Lambda(\lambda)\, \overline g(W_2, \theta, \phi) \\
&\quad + \partial_\phi \overline m(W_1, \theta, \phi)'\,
\big(\Lambda(\lambda)\, \partial_\phi \overline g(W_3, \theta, \phi)\big)\,
\Lambda(\lambda)\, \overline g(W_2, \theta, \phi) \\
&\quad + \tfrac{1}{2}\,
\big(\Lambda(\lambda)\, \overline g(W_2, \theta, \phi)\big)'\,
\partial_{\phi\phi'}^2 \overline m(W_1, \theta, \phi)\,
\big(\Lambda(\lambda)\, \overline g(W_3, \theta, \phi)\big) .
\end{align*}
By construction $\overline\psi^{(2)}$ is second-order Neyman-orthogonal at $(\theta_0, \eta_0, \lambda_0)$, but its expression in $\eta$-coordinates still involves $\overline m, \overline g$ and their $\phi$-derivatives, which is not yet a closed-form expression in $\eta$. We obtain such an expression by chain-rule expansion of each $\overline m, \overline g$ derivative at $\phi = f(\eta_0) = \eta_0$, using $\partial_\phi f^{-1}(\phi_0) = I_{d_\eta}$ and $\partial_\phi^2 f^{-1}(\phi_0) = -T_2$. The first three terms of $\overline\psi^{(2)}$ reduce to the corresponding terms of $\psi^{(2)}_{\rm aff}$, and the second-order chain rule for $\overline m$ produces the identity
\[
\partial_{\phi\phi'}^2 \overline m(W_1, \theta_0, \phi_0)
\;=\; \partial_{\eta\eta'}^2 m(W_1, \theta_0, \eta_0)
\;-\; \partial_\eta m(W_1, \theta_0, \eta_0)\cdot T_2,
\]
which splits the fourth term of $\overline\psi^{(2)}$ into the fourth term of $\psi^{(2)}_{\rm aff}$ plus a $T_2$-correction. 
The result is that the value of $\overline\psi^{(2)}$ at the truth $(\theta_0, \eta_0, \lambda_0)$, and its $\eta$- and $\lambda$-derivatives at the truth up to total order two, coincide with those of the moment function\footnote{The two functions $\overline\psi^{(2)}$ and $\psi^{(2)}$ are not equal as functions of $(\eta, \lambda)$ --- the chain-rule identities used to derive~\eqref{eq:psi2-with-T2} hold only at $\eta = \eta_0$, since $\partial_\phi f^{-1}(f(\eta))$ depends on $\eta$ for $\eta \neq \eta_0$ --- but they have matching Taylor expansions at the truth up to order two, which is what orthogonality requires.}
\begin{align}
\psi^{(2)}(W, \theta, \eta, \lambda)
&\;=\; \psi^{(2)}_{\rm aff}(W_1, W_2, W_3, \theta, \eta, \lambda) \notag \\
&\quad - \tfrac{1}{2}\, \partial_\eta m(W_1, \theta, \eta)'\, T_2\,
\big[\Lambda(\lambda)\, g(W_2, \theta, \eta),\, \Lambda(\lambda)\, g(W_3, \theta, \eta)\big].
\label{eq:psi2-with-T2}
\end{align}
However, the expression for $\psi^{(2)}(W, \theta, \eta, \lambda)$ in the last display is not yet a usable moment function, because $T_2$ is unknown. 
To make it feasible we replace 
$T_2 = \Lambda_0\, \mathbb{E}[\partial_\eta^2 g(W_4, \theta_0, \eta_0)]$
by $\Lambda(\lambda)\, \partial_\eta^2 g(W_4, \theta, \eta)$, where $W_4$ is an independent copy of the data.
This delivers a moment function that uses $L = 4$ independent copies of $W_i$
and that by construction is second-order Neyman-orthogonal:
\begin{align}
\psi^{(2)}(W_1, & W_2,  W_3,  W_4, \theta, \eta, \lambda)
\nonumber \\
&= \psi^{(2)}_{\rm aff}(W_1, W_2, W_3, \theta, \eta, \lambda) \notag \\
&\quad - \tfrac{1}{2}\, \partial_\eta m(W_1, \theta, \eta)'\, \Lambda(\lambda)\, \partial_\eta^2 g(W_2, \theta, \eta)\big[\Lambda(\lambda)\, g(W_3, \theta, \eta),\, \Lambda(\lambda)\, g(W_4, \theta, \eta)\big] \notag \\
&= m(W_1, \theta, \eta) - 2\, \partial_\eta m(W_1, \theta, \eta)'\, \Lambda(\lambda)\, g(W_2, \theta, \eta) \notag \\
&\quad + \partial_\eta m(W_1, \theta, \eta)'\, \big(\Lambda(\lambda)\, \partial_\eta g(W_3, \theta, \eta)\big)\, \Lambda(\lambda)\, g(W_2, \theta, \eta) \notag \\
&\quad + \tfrac{1}{2}\, \big(\Lambda(\lambda)\, g(W_2, \theta, \eta)\big)'\, \partial_{\eta\eta'}^2 m(W_1, \theta, \eta)\, \big(\Lambda(\lambda)\, g(W_3, \theta, \eta)\big) \notag \\
&\quad - \tfrac{1}{2}\, \partial_\eta m(W_1, \theta, \eta)'\, \Lambda(\lambda)\, \partial_\eta^2 g(W_2, \theta, \eta)\big[\Lambda(\lambda)\, g(W_3, \theta, \eta),\, \Lambda(\lambda)\, g(W_4, \theta, \eta)\big],
\label{eq:psi2-nonlinear}
\end{align}
where we swapped the labeling of $W_1,\ldots,W_4$ in the final term.
The replacement of $T_2$ by $\Lambda(\lambda)\, \partial_\eta^2 g(W_i, \theta, \eta)$ introduces additional dependence on $(\eta, \lambda)$ that was not present in the $\phi$-coordinate derivation, but this does not affect second-order Neyman-orthogonality.

\subsubsection{Useful notation: indexing terms in $\psi^{(q)}$ by rooted trees\label{sec:trees}}

We now introduce notation that turns out to be useful for expressing $\psi^{(q)}$ at general $q$. We motivate it by rewriting $\psi^{(2)}$ from~\eqref{eq:psi2-nonlinear}. Figure~\ref{fig:trees-q2} shows how the five terms of $\psi^{(2)}$ can each be represented by a rooted tree built from three kinds of nodes. The \emph{root} (shaded blue) carries either $m$ or one of its $\eta$-derivatives $\partial_\eta^r m$, where $r$ is the number of children of the root. The \emph{leaves} (shaded gray) each carry $\Lambda\, g$. The \emph{non-root, non-leaf nodes} (unshaded) each carry $\Lambda\, \partial_\eta^p g$, where $p$ is the number of children of the node. Each node is evaluated on its own independent copy of $W_i$. The tree, which we denote by $\tau$, then uniquely determines the corresponding term in $\psi^{(2)}$, which we denote by $\kappa_\tau(W, \theta, \eta, \lambda)$.

\begin{figure}[tb]
\centering
\renewcommand{\arraystretch}{1.6}
\begin{tabular}{@{}c@{\,}|@{\,}c@{\,}|@{\,}c@{\,}|@{\;}c}
\textbf{Rooted tree $\tau$} & \textbf{$d(\tau)$} & \textbf{$\tau$ with factors} & \textbf{$\kappa_\tau(W,\theta,\eta,\lambda)$} \\
\hline
\begin{tikzpicture}[baseline=(root.base),
    every node/.style={draw, circle, inner sep=2pt, minimum size=5pt}]
\node[fill=blue!15] (root) {};
\path (root.south) ++(0,-0.8em) coordinate (strut);
\end{tikzpicture}
&
$0$
&
\begin{tikzpicture}[baseline=(root.base),
    every node/.style={draw, rounded corners, inner sep=4pt, font=\small}]
\node[fill=blue!10] (root) {$m(W_1)$};
\path (root.south) ++(0,-0.8em) coordinate (strut);
\end{tikzpicture}
&
{ $ m(W_1, \theta, \eta)$} \\
\hline
\begin{tikzpicture}[baseline=(root.base),
    level distance=0.7cm, sibling distance=1cm,
    every node/.style={draw, circle, inner sep=2pt, minimum size=5pt},
    edge from parent/.style={draw, -}]
\node[fill=blue!15] (root) {} child { node[fill=gray!30] (leaf) {} };
\path (leaf.south) ++(0,-0.8em) coordinate (strut);
\end{tikzpicture}
&
$1$
&
\begin{tikzpicture}[baseline=(root.base),
    level distance=1.0cm, sibling distance=2.5cm,
    every node/.style={draw, rounded corners, inner sep=3pt, font=\small},
    edge from parent/.style={draw, -}]
\node[fill=blue!10] (root) {$\partial_\eta m(W_1)$}
  child { node[fill=gray!15] (leaf) {$\Lambda\, g(W_2)$} };
\path (leaf.south) ++(0,-0.8em) coordinate (strut);
\end{tikzpicture}
&
{$\partial_\eta m(W_1, \theta, \eta)'\, \Lambda(\lambda)\, g(W_2, \theta, \eta)$} \\
\hline
\begin{tikzpicture}[baseline=(root.base),
    level distance=0.55cm, sibling distance=1cm,
    every node/.style={draw, circle, inner sep=2pt, minimum size=5pt},
    edge from parent/.style={draw, -}]
\node[fill=blue!15] (root) {}
  child { node {}
    child { node[fill=gray!30] (leaf) {} }
  };
\path (leaf.south) ++(0,-0.8em) coordinate (strut);
\end{tikzpicture}
&
$2$
&
\begin{tikzpicture}[baseline=(root.base),
    level distance=1.0cm, sibling distance=2.5cm,
    every node/.style={draw, rounded corners, inner sep=3pt, font=\small},
    edge from parent/.style={draw, -}]
\node[fill=blue!10] (root) {$\partial_\eta m(W_1)$}
  child { node {$\Lambda\, \partial_\eta g(W_3)$}
    child { node[fill=gray!15] (leaf) {$\Lambda\, g(W_2)$} }
  };
\path (leaf.south) ++(0,-0.8em) coordinate (strut);
\end{tikzpicture}
&
{\footnotesize $\partial_\eta m(W_1, \theta, \eta)'\, \big(\Lambda(\lambda)\, \partial_\eta g(W_3, \theta, \eta)\big)\, \Lambda(\lambda)\, g(W_2, \theta, \eta)$} \\
\hline
\begin{tikzpicture}[baseline=(root.base),
    level distance=0.7cm, sibling distance=0.9cm,
    every node/.style={draw, circle, inner sep=2pt, minimum size=5pt},
    edge from parent/.style={draw, -}]
\node[fill=blue!15] (root) {}
  child { node[fill=gray!30] (leafL) {} }
  child { node[fill=gray!30] (leafR) {} };
\path (leafL.south) ++(0,-0.8em) coordinate (strut);
\end{tikzpicture}
&
$2$
&
\begin{tikzpicture}[baseline=(root.base),
    level distance=1.0cm, sibling distance=2.5cm,
    every node/.style={draw, rounded corners, inner sep=3pt, font=\small},
    edge from parent/.style={draw, -}]
\node[fill=blue!10] (root) {$\partial_{\eta\eta'}^2 m(W_1)$}
  child { node[fill=gray!15] (leafL) {$\Lambda\, g(W_2)$} }
  child { node[fill=gray!15] (leafR) {$\Lambda\, g(W_3)$} };
\path (leafL.south) ++(0,-0.8em) coordinate (strut);
\end{tikzpicture}
&
{\footnotesize $\big(\Lambda(\lambda)\, g(W_2, \theta, \eta)\big)'\, \partial_{\eta\eta'}^2 m(W_1, \theta, \eta)\, \big(\Lambda(\lambda)\, g(W_3, \theta, \eta)\big)$} \\
\hline
\begin{tikzpicture}[baseline=(root.base),
    level distance=0.55cm, sibling distance=0.9cm,
    every node/.style={draw, circle, inner sep=2pt, minimum size=5pt},
    edge from parent/.style={draw, -}]
\node[fill=blue!15] (root) {}
  child { node {}
    child { node[fill=gray!30] (leafL) {} }
    child { node[fill=gray!30] (leafR) {} }
  };
\path (leafL.south) ++(0,-0.8em) coordinate (strut);
\end{tikzpicture}
&
$2$
&
\begin{tikzpicture}[baseline=(root.base),
    level distance=1.0cm, sibling distance=2.5cm,
    every node/.style={draw, rounded corners, inner sep=3pt, font=\small},
    edge from parent/.style={draw, -}]
\node[fill=blue!10] (root) {$\partial_\eta m(W_1)$}
  child { node {$\Lambda\, \partial_\eta^2 g(W_2)$}
    child { node[fill=gray!15] (leafL) {$\Lambda\, g(W_3)$} }
    child { node[fill=gray!15] (leafR) {$\Lambda\, g(W_4)$} }
  };
\path (leafL.south) ++(0,-0.8em) coordinate (strut);
\end{tikzpicture}
&
{\footnotesize
\parbox{0.5\textwidth}{\raggedright
\phantom{a}\\
$\partial_\eta m(W_1, \theta, \eta)'$ \\
$\times  \Lambda(\lambda)\, \partial_\eta^2 g(W_2, \theta, \eta) \big[\Lambda(\lambda)\, g(W_3, \theta, \eta),\, \Lambda(\lambda)\, g(W_4, \theta, \eta)\big]$}} \\
\hline
\end{tabular}
\caption{The five terms of $\psi^{(2)}$ in~\eqref{eq:psi2-nonlinear}, indexed by rooted trees. The root (blue) carries $m$ or its $\eta$-derivatives. Non-root non-leaf nodes carry $\Lambda\,\partial_\eta^p g$ where $p$ is the node's number of children. Leaves (gray) carry $\Lambda g$.}
\label{fig:trees-q2}
\end{figure}

It turns out that all terms in the orthogonal moment function $\psi^{(q)}$ can be indexed by rooted trees in the same way for any order $q \in \{1, 2, 3, \ldots\}$,
\begin{align}
\psi^{(q)}(W, \theta, \eta, \lambda) \;=\; \sum_{\tau \in \mathcal{T}_q} c_{q,\tau}\; \kappa_\tau(W, \theta, \eta, \lambda),
\label{eq:general-psi}
\end{align}
where $\mathcal{T}_q$ is a finite set of rooted trees, $c_{q,\tau} \in \mathbb{R}$ is a combinatorial coefficient, and the kernel $\kappa_\tau(W, \theta, \eta, \lambda)$ is determined by $\tau$ via the rules illustrated in Figure~\ref{fig:trees-q2}.

A \emph{rooted tree} $\tau$ is a finite tree (a connected acyclic graph) with one node singled out as the \emph{root}. Let $d(\tau)$ denote the number of non-root nodes of $\tau$ with at most one child. The middle column of Figure~\ref{fig:trees-q2} reports $d(\tau)$ for each of the five trees: the trivial tree has $d = 0$, the root with one leaf child has $d = 1$, and the remaining three trees all have $d = 2$. The set $\mathcal{T}_q$ in~\eqref{eq:general-psi} is then
\begin{equation}
\mathcal{T}_q \;:=\; \big\{\tau \text{ rooted tree} \;:\; d(\tau) \leq q\big\}.
\label{eq:Tq-def}
\end{equation}
For $q=0$, $\mathcal{T}_0$ only contains the single-node tree. For $q=1$, $\mathcal{T}_1$ contains the single-node tree and the tree with one root and one leaf child. For $q=2$, the five elements of $\mathcal{T}_2$ are shown in Figure~\ref{fig:trees-q2}. For $q = 3$ we have $|\mathcal{T}_3| = 13$, shown in Figure~\ref{fig:trees-q3}. The cardinalities grow rapidly, with $|\mathcal{T}_4| = 40$ and $|\mathcal{T}_5| = 130$.

\begin{figure}[tb]
\centering
\renewcommand{\arraystretch}{1.4}
\begin{tabular}{ccccccc}
\begin{tikzpicture}[baseline=(root.base),
    every node/.style={draw, circle, inner sep=2pt, minimum size=5pt}]
\node[fill=blue!15] (root) {};
\end{tikzpicture}
&
\begin{tikzpicture}[baseline=(root.base),
    level distance=0.55cm, sibling distance=0.9cm,
    every node/.style={draw, circle, inner sep=2pt, minimum size=5pt},
    edge from parent/.style={draw, -}]
\node[fill=blue!15] (root) {} child { node[fill=gray!30] {} };
\end{tikzpicture}
&
\begin{tikzpicture}[baseline=(root.base),
    level distance=0.55cm, sibling distance=0.9cm,
    every node/.style={draw, circle, inner sep=2pt, minimum size=5pt},
    edge from parent/.style={draw, -}]
\node[fill=blue!15] (root) {}
  child { node {}
    child { node[fill=gray!30] {} }
  };
\end{tikzpicture}
&
\begin{tikzpicture}[baseline=(root.base),
    level distance=0.55cm, sibling distance=0.9cm,
    every node/.style={draw, circle, inner sep=2pt, minimum size=5pt},
    edge from parent/.style={draw, -}]
\node[fill=blue!15] (root) {}
  child { node {}
    child { node {}
      child { node[fill=gray!30] {} }
    }
  };
\end{tikzpicture}
&
\begin{tikzpicture}[baseline=(root.base),
    level distance=0.55cm, sibling distance=0.6cm,
    every node/.style={draw, circle, inner sep=2pt, minimum size=5pt},
    edge from parent/.style={draw, -}]
\node[fill=blue!15] (root) {}
  child { node[fill=gray!30] {} }
  child { node[fill=gray!30] {} };
\end{tikzpicture}
&
\begin{tikzpicture}[baseline=(root.base),
    level distance=0.55cm, sibling distance=0.6cm,
    every node/.style={draw, circle, inner sep=2pt, minimum size=5pt},
    edge from parent/.style={draw, -}]
\node[fill=blue!15] (root) {}
  child { node {}
    child { node[fill=gray!30] {} }
  }
  child { node[fill=gray!30] {} };
\end{tikzpicture}
&
\begin{tikzpicture}[baseline=(root.base),
    level distance=0.55cm, sibling distance=0.5cm,
    every node/.style={draw, circle, inner sep=2pt, minimum size=5pt},
    edge from parent/.style={draw, -}]
\node[fill=blue!15] (root) {}
  child { node[fill=gray!30] {} }
  child { node[fill=gray!30] {} }
  child { node[fill=gray!30] {} };
\end{tikzpicture}
\\[1.5em]
\begin{tikzpicture}[baseline=(root.base),
    level distance=0.55cm, sibling distance=0.6cm,
    every node/.style={draw, circle, inner sep=2pt, minimum size=5pt},
    edge from parent/.style={draw, -}]
\node[fill=blue!15] (root) {}
  child { node {}
    child { node[fill=gray!30] {} }
    child { node[fill=gray!30] {} }
  };
\end{tikzpicture}
&
\begin{tikzpicture}[baseline=(root.base),
    level distance=0.55cm, sibling distance=0.6cm,
    every node/.style={draw, circle, inner sep=2pt, minimum size=5pt},
    edge from parent/.style={draw, -}]
\node[fill=blue!15] (root) {}
  child { node {}
    child { node {}
      child { node[fill=gray!30] {} }
      child { node[fill=gray!30] {} }
    }
  };
\end{tikzpicture}
&
\begin{tikzpicture}[baseline=(root.base),
    level distance=0.55cm, sibling distance=0.6cm,
    every node/.style={draw, circle, inner sep=2pt, minimum size=5pt},
    edge from parent/.style={draw, -}]
\node[fill=blue!15] (root) {}
  child { node {}
    child { node[fill=gray!30] {} }
    child { node {}
      child { node[fill=gray!30] {} }
    }
  };
\end{tikzpicture}
&
\begin{tikzpicture}[baseline=(root.base),
    level distance=0.55cm, sibling distance=0.5cm,
    every node/.style={draw, circle, inner sep=2pt, minimum size=5pt},
    edge from parent/.style={draw, -}]
\node[fill=blue!15] (root) {}
  child { node {}
    child { node[fill=gray!30] {} }
    child { node[fill=gray!30] {} }
    child { node[fill=gray!30] {} }
  };
\end{tikzpicture}
&
\begin{tikzpicture}[baseline=(root.base),
    level distance=0.55cm, sibling distance=0.6cm,
    every node/.style={draw, circle, inner sep=2pt, minimum size=5pt},
    edge from parent/.style={draw, -}]
\node[fill=blue!15] (root) {}
  child { node {}
    child { node {}
      child { node[fill=gray!30] {} }
      child { node[fill=gray!30] {} }
    }
    child { node[fill=gray!30] {} }
  };
\end{tikzpicture}
&
\begin{tikzpicture}[baseline=(root.base),
    level distance=0.55cm, sibling distance=0.6cm,
    every node/.style={draw, circle, inner sep=2pt, minimum size=5pt},
    edge from parent/.style={draw, -}]
\node[fill=blue!15] (root) {}
  child { node[fill=gray!30] {} }
  child { node {}
    child { node[fill=gray!30] {} }
    child { node[fill=gray!30] {} }
  };
\end{tikzpicture}
&
\\
\end{tabular}
\caption{The thirteen elements of $\mathcal{T}_3$. The first row shows the seven trees in which every non-root node has at most one child --- these correspond to the terms of the affine moment function $\psi^{(3)}_{\rm aff}$. The second row shows the six correction trees, each containing at least one non-root node with two or more children.}
\label{fig:trees-q3}
\end{figure}

Each rooted tree $\tau \in \mathcal{T}_q$ translates into a kernel $\kappa_\tau(W, \theta, \eta, \lambda)$ by populating the nodes with the algebraic objects already used in $\psi^{(2)}$. The root carries $m$ or one of its $\eta$-derivatives, with the order of differentiation equal to the number of children of the root. Each leaf carries $\Lambda\, g$. Each non-root, non-leaf node carries $\Lambda\, \partial_\eta^p g$, where $p$ is the number of children of the node. Each node is evaluated on its own independent copy of $W_i$, and the resulting expression is the contraction along the parent--child edges of the tree, exactly as in Figure~\ref{fig:trees-q2}.
Applying this construction to the seven trees in the top row of Figure~\ref{fig:trees-q3} gives the affine moment function $\psi^{(3)}_{\rm aff}$, where for compactness we suppress the arguments $(\theta, \eta)$ in $m$ and $g$ and the argument $\lambda$ in $\Lambda$:
\begin{align}
\psi^{(3)}_{\rm aff}
&= m(W_1) - 3\, \partial_\eta m(W_1)' \,\Lambda\, g(W_2) \notag\\
&\quad + 3\, \partial_\eta m(W_1)' \,\big(\Lambda \,\partial_\eta g(W_3)\big)\, \Lambda\, g(W_2) \notag\\
&\quad - \partial_\eta m(W_1)' \,\big(\Lambda \,\partial_\eta g(W_3)\big) \big(\Lambda \,\partial_\eta g(W_4)\big)\, \Lambda\, g(W_2) \notag\\
&\quad + \tfrac{3}{2}\, \partial_\eta^2 m(W_1)\big[\Lambda\, g(W_2),\, \Lambda\, g(W_3)\big] \notag\\
&\quad - \partial_\eta^2 m(W_1)\big[\Lambda\, g(W_2),\, \big(\Lambda \,\partial_\eta g(W_4)\big)\, \Lambda\, g(W_3)\big] \notag\\
&\quad - \tfrac{1}{6}\, \partial_\eta^3 m(W_1)\big[\Lambda\, g(W_2),\, \Lambda\, g(W_3),\, \Lambda\, g(W_4)\big],
\label{eq:psi3-aff}
\end{align}
while the six trees in the bottom row of Figure~\ref{fig:trees-q3} correspond to the additional correction terms that are needed when $g$ is nonlinear in $\eta$,
\begin{align}
\psi^{(3)}
&= \psi^{(3)}_{\rm aff} \notag\\
&\quad -\, 2\, \partial_\eta m(W_1)'\, \Lambda\, \partial_\eta^2 g(W_2)\big[\Lambda g(W_3),\, \Lambda g(W_4)\big] \notag\\
&\quad +\, \tfrac{1}{2}\, \partial_\eta m(W_1)'\, \Lambda\, \partial_\eta g(W_2)\, \Lambda\, \partial_\eta^2 g(W_3)\big[\Lambda g(W_4),\, \Lambda g(W_5)\big] \notag\\
&\quad +\, \partial_\eta m(W_1)'\, \Lambda\, \partial_\eta^2 g(W_2)\Big[\Lambda\, \partial_\eta g(W_3)\, \Lambda g(W_4),\; \Lambda g(W_5)\Big] \notag\\
&\quad +\, \tfrac{1}{6}\, \partial_\eta m(W_1)'\, \Lambda\, \partial_\eta^3 g(W_2)\big[\Lambda g(W_3),\, \Lambda g(W_4),\, \Lambda g(W_5)\big] \notag\\
&\quad -\, \tfrac{1}{2}\, \partial_\eta m(W_1)'\, \Lambda\, \partial_\eta^2 g(W_2)\Big[\Lambda\, \partial_\eta^2 g(W_3)\big[\Lambda g(W_4),\, \Lambda g(W_5)\big],\; \Lambda g(W_6) \Big] \notag\\
&\quad +\, \tfrac{1}{2}\, \partial_\eta^2 m(W_1)\big[\Lambda\, g(W_2),\, \Lambda\, \partial_\eta^2 g(W_3)\big[\Lambda g(W_4),\, \Lambda g(W_5)\big]\big].
\label{eq:psi3-nonlinear}
\end{align}
The combinatorial pre-factors in~\eqref{eq:psi3-aff} and~\eqref{eq:psi3-nonlinear} are exactly the values needed for third-order Neyman orthogonality. The affine pre-factors can be obtained from~\eqref{eq:score} by collecting the ordered chain-length tuples $(k_1, \ldots, k_r)$ that correspond to the same unordered tree. The correction term pre-factors are the values of the closed-form $c_{q,\tau}$ that we introduce next.
Note that the substantive content of the tree formalism is our closed form expression for $c_{q,\tau}$, while the index set $\mathcal{T}_q$ and the  construction rule for $\kappa_\tau$ are purely notational bookkeeping tools.

\subsubsection{Closed form for $c_{q,\tau}$ and the main result\label{sec:nonlinear-cq}}

To state the closed form for $c_{q,\tau}$, we first need to define two further integers associated with each rooted tree $\tau$. We have already introduced $d(\tau)$ as the number of non-root nodes of $\tau$ with at most one child. Let $|\tau|$ denote the total number of non-root nodes of $\tau$ (equivalently, the number of edges of $\tau$). Let $\mathrm{Aut}(\tau)$ denote the automorphism group of $\tau$, that is, the set of bijections of the nodes of $\tau$ that fix the root and keep the tree structure unchanged. Finally, let $|\mathrm{Aut}(\tau)|$ be its order (i.e.\ the cardinality of $\mathrm{Aut}(\tau)$). Then,
\begin{equation}
c_{q,\tau} \;:=\; \frac{(-1)^{|\tau|}}{|\mathrm{Aut}(\tau)|}\, \binom{q + |\tau| - d(\tau)}{|\tau|}.
\label{eq:cq-formula}
\end{equation}
Figure~\ref{fig:invariants-q3} shows the values of $|\tau|$, $d(\tau)$, and $|\mathrm{Aut}(\tau)|$ for all rooted trees $\tau$ with $d(\tau) \leq 3$. The values of $|\tau|$ and $d(\tau)$ are read off by counting the relevant nodes. The values of $|\mathrm{Aut}(\tau)|$ in this figure are also straightforward: for many of the trees, only the trivial bijection preserves the tree structure, so $|\mathrm{Aut}(\tau)| = 1$. For each of the remaining trees, exactly one node has $c$ isomorphic leaf children, which can be permuted in $c!$ ways, giving $|\mathrm{Aut}(\tau)| = c!$. For larger values of $d(\tau)$ the automorphism group can of course be more complicated.

\begin{figure}[p!]
\centering
\renewcommand{\arraystretch}{1.5}
\begin{tabular}{c|c|c|c}
\textbf{Rooted tree $\tau$} & $|\tau|$ & $d(\tau)$ & $|\mathrm{Aut}(\tau)|$ \\
\hline
\begin{tikzpicture}[baseline=(root.base),
    every node/.style={draw, circle, inner sep=2pt, minimum size=5pt}]
\node[fill=blue!15] (root) {};
\path (current bounding box.south) ++(0,-0.4em) -- ++(0.001em,0);
\end{tikzpicture}
& $0$ & $0$ & $1$ \\[0.4em]
\hline
\begin{tikzpicture}[baseline=(root.base),
    level distance=0.5cm, sibling distance=0.6cm,
    every node/.style={draw, circle, inner sep=2pt, minimum size=5pt},
    edge from parent/.style={draw, -}]
\node[fill=blue!15] (root) {} child { node[fill=gray!30] {} };
\path (current bounding box.south) ++(0,-0.4em) -- ++(0.001em,0);
\end{tikzpicture}
& $1$ & $1$ & $1$ \\[0.4em]
\hline
\begin{tikzpicture}[baseline=(root.base),
    level distance=0.5cm, sibling distance=0.6cm,
    every node/.style={draw, circle, inner sep=2pt, minimum size=5pt},
    edge from parent/.style={draw, -}]
\node[fill=blue!15] (root) {} child { node {} child { node[fill=gray!30] {} } };
\path (current bounding box.south) ++(0,-0.4em) -- ++(0.001em,0);
\end{tikzpicture}
& $2$ & $2$ & $1$ \\[0.4em]
\hline
\begin{tikzpicture}[baseline=(root.base),
    level distance=0.5cm, sibling distance=0.45cm,
    every node/.style={draw, circle, inner sep=2pt, minimum size=5pt},
    edge from parent/.style={draw, -}]
\node[fill=blue!15] (root) {}
  child { node[fill=gray!30] {} }
  child { node[fill=gray!30] {} };
\path (current bounding box.south) ++(0,-0.4em) -- ++(0.001em,0);
\end{tikzpicture}
& $2$ & $2$ & $2$ \\[0.4em]
\hline
\begin{tikzpicture}[baseline=(root.base),
    level distance=0.45cm, sibling distance=0.45cm,
    every node/.style={draw, circle, inner sep=2pt, minimum size=5pt},
    edge from parent/.style={draw, -}]
\node[fill=blue!15] (root) {} child { node {} child { node[fill=gray!30] {} } child { node[fill=gray!30] {} } };
\path (current bounding box.south) ++(0,-0.4em) -- ++(0.001em,0);
\end{tikzpicture}
& $3$ & $2$ & $2$ \\[0.4em]
\hline
\begin{tikzpicture}[baseline=(root.base),
    level distance=0.45cm, sibling distance=0.6cm,
    every node/.style={draw, circle, inner sep=2pt, minimum size=5pt},
    edge from parent/.style={draw, -}]
\node[fill=blue!15] (root) {} child { node {} child { node {} child { node[fill=gray!30] {} } } };
\path (current bounding box.south) ++(0,-0.4em) -- ++(0.001em,0);
\end{tikzpicture}
& $3$ & $3$ & $1$ \\[0.4em]
\hline
\begin{tikzpicture}[baseline=(root.base),
    level distance=0.45cm, sibling distance=0.5cm,
    every node/.style={draw, circle, inner sep=2pt, minimum size=5pt},
    edge from parent/.style={draw, -}]
\node[fill=blue!15] (root) {}
  child { node {} child { node[fill=gray!30] {} } }
  child { node[fill=gray!30] {} };
\path (current bounding box.south) ++(0,-0.4em) -- ++(0.001em,0);
\end{tikzpicture}
& $3$ & $3$ & $1$ \\[0.4em]
\hline
\begin{tikzpicture}[baseline=(root.base),
    level distance=0.45cm, sibling distance=0.4cm,
    every node/.style={draw, circle, inner sep=2pt, minimum size=5pt},
    edge from parent/.style={draw, -}]
\node[fill=blue!15] (root) {}
  child { node[fill=gray!30] {} }
  child { node[fill=gray!30] {} }
  child { node[fill=gray!30] {} };
\path (current bounding box.south) ++(0,-0.4em) -- ++(0.001em,0);
\end{tikzpicture}
& $3$ & $3$ & $6$ \\[0.4em]
\hline
\begin{tikzpicture}[baseline=(root.base),
    level distance=0.4cm, sibling distance=0.45cm,
    every node/.style={draw, circle, inner sep=2pt, minimum size=5pt},
    edge from parent/.style={draw, -}]
\node[fill=blue!15] (root) {} child { node {} child { node {} child { node[fill=gray!30] {} } child { node[fill=gray!30] {} } } };
\path (current bounding box.south) ++(0,-0.4em) -- ++(0.001em,0);
\end{tikzpicture}
& $4$ & $3$ & $2$ \\[0.4em]
\hline
\begin{tikzpicture}[baseline=(root.base),
    level distance=0.4cm, sibling distance=0.45cm,
    every node/.style={draw, circle, inner sep=2pt, minimum size=5pt},
    edge from parent/.style={draw, -}]
\node[fill=blue!15] (root) {} child { node {}
  child { node[fill=gray!30] {} }
  child { node {} child { node[fill=gray!30] {} } } };
\path (current bounding box.south) ++(0,-0.4em) -- ++(0.001em,0);
\end{tikzpicture}
& $4$ & $3$ & $1$ \\[0.4em]
\hline
\begin{tikzpicture}[baseline=(root.base),
    level distance=0.4cm, sibling distance=0.4cm,
    every node/.style={draw, circle, inner sep=2pt, minimum size=5pt},
    edge from parent/.style={draw, -}]
\node[fill=blue!15] (root) {} child { node {}
  child { node[fill=gray!30] {} }
  child { node[fill=gray!30] {} }
  child { node[fill=gray!30] {} } };
\path (current bounding box.south) ++(0,-0.4em) -- ++(0.001em,0);
\end{tikzpicture}
& $4$ & $3$ & $6$ \\[0.4em]
\hline
\begin{tikzpicture}[baseline=(root.base),
    level distance=0.4cm, sibling distance=0.5cm,
    every node/.style={draw, circle, inner sep=2pt, minimum size=5pt},
    edge from parent/.style={draw, -}]
\node[fill=blue!15] (root) {}
  child { node[fill=gray!30] {} }
  child { node {} child { node[fill=gray!30] {} } child { node[fill=gray!30] {} } };
\path (current bounding box.south) ++(0,-0.4em) -- ++(0.001em,0);
\end{tikzpicture}
& $4$ & $3$ & $2$ \\[0.4em]
\hline
\begin{tikzpicture}[baseline=(root.base),
    level distance=0.35cm, sibling distance=0.4cm,
    every node/.style={draw, circle, inner sep=2pt, minimum size=5pt},
    edge from parent/.style={draw, -}]
\node[fill=blue!15] (root) {} child { node {}
  child { node[fill=gray!30] {} }
  child { node {} child { node[fill=gray!30] {} } child { node[fill=gray!30] {} } } };
\path (current bounding box.south) ++(0,-0.4em) -- ++(0.001em,0);
\end{tikzpicture}
& $5$ & $3$ & $2$ \\[0.4em]
\hline
\end{tabular}
\caption{The integers $|\tau|$, $d(\tau)$, and $|\mathrm{Aut}(\tau)|$ for the 13 trees with $d(\tau) \leq 3$.}
\label{fig:invariants-q3}
\end{figure}

As a check, applying~\eqref{eq:cq-formula} at $q = 3$ to the values in Figure~\ref{fig:invariants-q3} reproduces the 13 coefficients in~\eqref{eq:psi3-aff} and~\eqref{eq:psi3-nonlinear}. For example, the root with three leaf children gives $c_{3,\tau} = -\binom{3}{3}/3! = -1/6$ (matching the coefficient of $\partial_\eta^3 m[\Lambda g, \Lambda g, \Lambda g]$ in~\eqref{eq:psi3-aff}), etc.\footnote{%
It is also instructive to verify the consistency between~\eqref{eq:cq-formula} and the affine moment function in~\eqref{eq:score}. The trees that contribute to $\psi^{(q)}_{\rm aff}$ are precisely those in which only the root has more than one child, equivalently those with $|\tau| = d(\tau)$. For such a tree, let $r \geq 0$ be the number of children of the root, and let $k_s \geq 1$ be the chain length of each child for $s \in \{1, \ldots, r\}$. We then have $|\tau| = d(\tau) = k_1 + \cdots + k_r$. The automorphism group permutes branches of equal length, so if the multiset $\{k_1, \ldots, k_r\}$ has $n_\ell$ elements equal to $\ell$, then $|\mathrm{Aut}(\tau)| = \prod_\ell n_\ell!$, and~\eqref{eq:cq-formula} simplifies to
\begin{equation}
c_{q,\tau} \;=\; \frac{(-1)^{|\tau|}}{\prod_\ell n_\ell!}\, \binom{q}{|\tau|}.
\label{eq:cq-formula-affine}
\end{equation}
The same unordered tree $\tau$ is the image of $r! / \prod_\ell n_\ell!$ ordered branch tuples $(k_1, \ldots, k_r)$, each contributing the term $\frac{(-1)^{|\tau|}}{r!}\, \binom{q}{|\tau|}\, \kappa_\tau$ to the affine moment function $\psi^{(q)}_{\rm aff}$ in~\eqref{eq:score}. Summing across these tuples cancels the $r!$ and reproduces exactly $c_{q,\tau}\, \kappa_\tau$ with $c_{q,\tau}$ as in~\eqref{eq:cq-formula-affine}, confirming that $\psi^{(q)}_{\rm aff}$ in~\eqref{eq:score} matches the affine part of~\eqref{eq:general-psi}.
}

Tree-indexed expansions of this form, including the appearance of the symmetry factor $1/|\mathrm{Aut}(\tau)|$, are a well-known structure in the literature on multivariate Taylor expansions, see Appendix~\ref{app:primer} for a discussion of the Butcher $B$-series. That appendix also gives a simple recursion for computing $|\mathrm{Aut}(\tau)|$.\footnote{
Using this recursion we have numerically verified that \eqref{eq:cq-formula} delivers orthogonal moment functions for $q$ up to $10$ in the scalar case $d_\eta = d_g = 1$, and for $q$ up to $8$ in the multivariate case. Enumerating all $|\mathcal{T}_{10}| = 110,135$ trees and calculating all corresponding $c_{q,\tau}$ is possible within seconds.}

With $c_{q,\tau}$ in hand, we can state our main result for the nonlinear case.

\begin{assumption}
\label{ass:nonlinear}
\leavevmode
\begin{enumerate}[\rm (i)]
\item Assumption~\ref{ass:affine}(ii)--(iv) hold.
\item $g(W_i, \theta, \eta)$ is $(q+1)$-times continuously differentiable in $\eta$ in a neighborhood of $\eta_0$.
\item  For all $p \leq q+1$, there is a neighborhood $\mathcal{N}$ of $\eta_0$ and an integrable function $M_p(W_i)$ with $\sup_{\eta \in \mathcal{N}} \|\partial_\eta^p g(W_i, \theta_0, \eta)\| \leq M_p(W_i)$ almost surely.
\end{enumerate}
\end{assumption}

\begin{theorem}[$q$-th order Neyman orthogonality, nonlinear case]
\label{th:nonlinear}
Let $q \geq 1$ be an integer, and suppose Assumptions~\ref{ass:iid}, \ref{ass:jacobian-inverse}, and~\ref{ass:nonlinear} hold. Then the moment function $\psi^{(q)}(W, \theta, \eta, \lambda)$ in~\eqref{eq:general-psi} with $c_{q,\tau}$ given by~\eqref{eq:cq-formula} is $q$-th order Neyman-orthogonal in the sense of Definition~\ref{def:orthogonality}. That is, $\Psi^{(q)}(\theta_0, \eta_0, \lambda_0) = 0$, and, for all multi-indices $(\alpha, \beta)$ with $1 \leq |\alpha| + |\beta| \leq q$,
\[
\partial_\eta^{\alpha} \partial_\lambda^{\beta}\, \Psi^{(q)}(\theta_0, \eta_0, \lambda_0) = 0.
\]
\end{theorem}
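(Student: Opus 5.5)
Since every node of a tree is evaluated on its own independent copy, the population moment factorizes into a contraction of population objects. For each tree $\tau$, $\mathbb{E}[\kappa_\tau(W,\theta_0,\eta,\lambda)]$ is the same tree contraction with $m$ replaced by $\bar m(\eta):=\mathbb{E}[m(W_i,\theta_0,\eta)]$ at the root and $\Lambda\partial^p_\eta g$ replaced by $\Lambda(\lambda)\,\bar g^{(p)}(\eta)$, where $\bar g(\eta):=\mathbb{E}[g(W_i,\theta_0,\eta)]$. Assumption~\ref{ass:nonlinear} justifies exchanging expectation and differentiation. Hence $\Psi^{(q)}(\theta_0,\eta,\lambda)=\sum_{\tau\in\mathcal{T}_q}c_{q,\tau}\,F_\tau(\eta,\lambda)$ is a deterministic functional of $(\bar m,\bar g,\Lambda)$. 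It suffices to show that this function equals $\bar m(\eta^*)=\theta_0$ up to a remainder of order $\|(\eta-\eta_0,\lambda-\lambda_0)\|^{q+1}$, where $\eta^*:=\eta_0$. I would reduce the problem in two steps. First, fix $(\eta,\lambda)$ near the truth and write $\Lambda:=\Lambda(\lambda)$. Then $\eta_0$ solves the fixed-point equation $\delta=-\Lambda\,\bar g(\eta+\delta)+(\text{error})$ only approximately, because $\Lambda\bar g(\eta_0)=0$ but $\Lambda$ is not exactly the inverse of $\bar g'$. To handle this, introduce $G(\delta):=\Lambda\,\bar g(\eta+\delta)$, which satisfies $G(\delta^*)=0$ at $\delta^*=\eta_0-\eta$ by Assumption~\ref{ass:jacobian-inverse}, since $\Lambda_0\bar g(\eta_0)=0$ and, more generally, $\bar g(\eta_0)=0$. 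Also $G'(0)=\Lambda\bar g'(\eta)=I-E$, where $E$ is small, of order $\|\eta-\eta_0\|+\|\lambda-\lambda_0\|$. The target is then $\bar m(\eta+\delta^*)$ with $\delta^*$ the root of $G$ near $0$.

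The core step is a formal power-series identity. Write $G(\delta)=G_0+(I-E)\delta+\sum_{p\ge2}\frac1{p!}G_p[\delta^{\otimes p}]$, with $G_0=\Lambda\bar g(\eta)$ and $G_p=\Lambda\bar g^{(p)}(\eta)$, all quantities being evaluated at $\eta$. Each of $G_0$ and $E$ is first-order small, while $G_p$ is $O(1)$ for $p\ge2$. The root $\delta^*$ is then given by the Lagrange/Butcher-type tree expansion. Iterating $\delta=-G_0+E\delta-\sum_{p\ge2}\frac1{p!}G_p[\delta^{\otimes p}]$ generates trees whose leaves carry $-G_0$, unary nodes carry $E=I-\Lambda\bar g'$, and $p$-ary nodes carry $-G_p$, with weight $1/|\mathrm{Aut}|$ (the standard B-series symmetry factor; see Appendix~\ref{app:primer}). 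Composing with the Taylor expansion of $\bar m$ around $\eta$ attaches the root carrying $\bar m^{(r)}/r!$. The smallness degree of a tree is exactly the number of leaves plus unary nodes, that is, $d(\tau)$. Truncating at smallness degree $q$ therefore leaves an $O(\|\cdot\|^{q+1})$ remainder, provided the Taylor remainder is controlled by parts (ii)--(iv) of Assumption~\ref{ass:nonlinear}.

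It remains to re-express the truncated series in the paper's form, in which unary nodes carry $\Lambda\bar g'=I-E$ rather than $E$. I would expand each $E$-factor as $E=I-\Lambda\bar g'$ and collect terms. A tree in the $E$-form with $j$ unary $E$-nodes maps, after the binomial expansion, onto the trees obtained by contracting any subset of those unary nodes. Conversely, a fixed paper-tree $\tau$ receives contributions from all trees obtained by inserting unary nodes along its edges, with the total degree kept $\le q$. This counting is a stars-and-bars sum over the $|\tau|$ edges with $q-d(\tau)$ available insertions, together with signs. Summing it via the identity $\sum_{j}\binom{|\tau|+j-1}{j}(-1)^{\cdot}\ldots$ (a hockey-stick or Vandermonde sum) should produce exactly $(-1)^{|\tau|}\binom{q+|\tau|-d(\tau)}{|\tau|}$. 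Automorphisms are unaffected, since inserting unary nodes uniformly commutes with the symmetry count. The affine case in Theorem~\ref{th:main} serves as a consistency check, where $|\tau|=d(\tau)$ gives $\binom{q}{|\tau|}$.

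Finally, $\Psi^{(q)}(\theta_0,\eta,\lambda)-\theta_0=O(\|(\eta-\eta_0,\lambda-\lambda_0)\|^{q+1})$ is a smooth function vanishing to order $q+1$ at the truth, so all derivatives up to order $q$ vanish there. Mixed $\lambda$-derivatives are covered because $\lambda$ enters only through $\Lambda(\lambda)$, which is $C^{q+1}$. The main obstacle is the combinatorial bookkeeping in the third step, namely verifying that the edge-insertion count, weighted by automorphisms, yields exactly the binomial in~\eqref{eq:cq-formula}. I would first verify it against the $q=3$ coefficients in~\eqref{eq:psi3-nonlinear}, then prove it by induction on $q$ using the recursive (root-removal) structure of trees.
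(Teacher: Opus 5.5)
Your proposal is correct and follows essentially the paper's proof. It shares every step: the reduction to the deterministic $\overline g,\overline m$ by independence and multilinearity; the B-series for the root $\eta_0-\eta$ of $\Lambda(\lambda)\overline g(\eta+\cdot)$, in which leaves and $E$-factors each carry one unit of smallness (the paper encodes this with the formal variable $t$ in $(I-tE)\delta=-t\varepsilon-\sum_{p\ge2}G_p[\delta^{\otimes p}]/p!$, so your degree-$q$ truncation is its order-$q$ Taylor polynomial in $t$ evaluated at $t=1$); and the substitution $E=I-\Lambda\partial_\eta\overline g$ followed by an insertion count with an orbit--stabilizer argument giving $1/|\mathrm{Aut}(\tau)|$ (your stars-and-bars over the $|\tau|$ edges of $\tau$ is the paper's multidimensional hockey-stick over skeleton edges, since $\binom{j_v}{k_v}$ counts placements of $j_v-k_v$ identity factors among $k_v+1$ sub-edges). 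Two small corrections: every contribution to a fixed $\tau$ carries the same sign $(-1)^{|\tau|}$, so no alternating sum arises; and the value at the truth is $\overline m(\eta_0)=0$ by the moment condition, not $\theta_0$.
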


The proof is in Appendix~\ref{app:proof-thm2}.

The expression~\eqref{eq:cq-formula} applies to both the affine and correction parts of $\psi^{(q)}$. Trees with $|\tau| = d(\tau)$ are exactly those in which every non-root node is either a leaf or a one-child node; for these, \eqref{eq:cq-formula} reduces to (\ref{eq:cq-formula-affine}), the binomial weights of the affine score~\eqref{eq:score}. Trees with $|\tau| > d(\tau)$ --- those with at least one non-root node with at least two children --- are the correction trees, which contribute terms involving the higher-order derivatives $\partial_\eta^p g$ for $p \geq 2$. When $g$ is affine in $\eta$, all such kernels vanish, \eqref{eq:general-psi} reduces to $\psi^{(q)} = \psi^{(q)}_{\rm aff}$, and Theorem~\ref{th:nonlinear} recovers Theorem~\ref{th:main}.

The total number of independent copies of $W_i$ used by $\psi^{(q)}$ equals one plus the maximum number of nodes in any tree $\tau \in \mathcal{T}_q$, which is bounded by $L \leq 2q$ for $q \geq 2$ (compared with $L = 1 + q$ in the affine case). The bound is attained by the binary tree with $d(\tau) = q$ in which every non-root, non-leaf node has exactly two children.

\subsection{Reducing the nuisance dimension\label{sec:nuisance}}

To reduce the dimension of the nuisance parameter $\lambda$, the key observation is that the moment function $g(W_i, \theta, \eta)$ in \eqref{eq:model-general} can be replaced by a different function $\widetilde g(W, \theta, \eta)$ that may depend on multiple independent copies of $W_i$.\footnote{As before, here we write $W$ without subscript when the argument requires several copies.} The replacement is valid as long as $\mathbb{E}[\widetilde g(W, \theta_0, \eta_0)] = 0$ and the corresponding Jacobian $\widetilde J_0 = \mathbb{E}[\partial_\eta \widetilde g(W, \theta_0, \eta_0)]$ has a left inverse $\Lambda_0$ with $\Lambda_0 \widetilde J_0 = I_{d_\eta}$. The orthogonality construction in Sections~\ref{sec:affine}--\ref{sec:nonlinear} applies with $g$ replaced by $\widetilde g$ throughout. The advantage is that $\widetilde g$ can be designed so that the nuisance parameter $\lambda$ has much lower dimension.

There is, however, a trade-off: reducing $d_\lambda$ comes at the cost of increasing the number of independent copies $L$ needed by the orthogonal moment function. The original moment function $g(W_i, \theta, \eta)$ depends on a single observation, so the $q$-th order affine orthogonal moment function uses $L = 1 + q$ copies. The transformed function $\widetilde g(W, \theta, \eta)$ may itself require multiple copies (e.g., $d_\eta$ copies for the determinant construction below), and each appearance of $\widetilde g$ in the orthogonal moment function then uses that many copies, increasing $L$ accordingly. In practice, the choice of $\widetilde g$ balances the difficulty of estimating a high-dimensional nuisance $\lambda$ (large $d_\lambda$, small $L$) against the variance cost of a higher-order U-statistic (small $d_\lambda$, large $L$).

In Sections~\ref{sec:affine} and \ref{sec:nonlinear}, the orthogonal moment functions were stated in terms of a moment function $g$ that depends on a single observation $W_j$. When $\widetilde g$ requires a block of several independent copies, the construction is applied in exactly the same way, with one bookkeeping change: each occurrence of $g(W_j, \theta, \eta)$ or $\partial_\eta g(W_j, \theta, \eta)$ in the formulas of those sections is replaced by $\widetilde g$ or $\partial_\eta \widetilde g$ evaluated on a fresh block of independent copies of the size required by $\widetilde g$. We give examples of constructions for $\widetilde g$ below.

\paragraph{Determinant construction (exactly identified case).}
Suppose $d_g = d_\eta$ and $J_0$ is invertible. Define $\widetilde g(W_1, \ldots, W_{d_\eta}; \theta, \eta)$ componentwise by
\begin{multline*}
\widetilde g_r(W_1, \ldots, W_{d_\eta}; \theta, \eta)
= \det\Big[
\partial_{\eta_1} g(W_1, \theta, \eta), \ldots,
\partial_{\eta_{r-1}} g(W_{r-1}, \theta, \eta),\,
g(W_r, \theta, \eta),\\
\partial_{\eta_{r+1}} g(W_{r+1}, \theta, \eta),
\ldots,
\partial_{\eta_{d_\eta}} g(W_{d_\eta}, \theta, \eta)
\Big],
\end{multline*}
where $r = 1,\ldots,d_\eta$, and the $d_\eta \times d_\eta$ matrix inside the determinant has $g(W_r, \theta, \eta)$ in its $r$-th column and $\partial_{\eta_j} g(W_j, \theta, \eta)$ in its $j$-th column for $j \neq r$. Then $\mathbb{E}[\widetilde g_r] = 0$ follows from $\mathbb{E}[g(W_i, \theta_0, \eta_0)] = 0$ and the multilinearity of the determinant. The Jacobian satisfies $\mathbb{E}[\partial_\eta \widetilde g(W, \theta_0, \eta_0)] = \det(J_0) \cdot I_{d_\eta}$, so $\Lambda(\lambda) = \lambda \cdot I_{d_\eta}$ with
\[
\lambda_0 = \frac{1}{\det(J_0)}.
\]
This gives $d_\lambda = 1$. The trade-off is that each evaluation of $\widetilde g$ uses $d_\eta$ independent copies of $W_i$, so the total number of copies for the $q$-th order moment function becomes $L = 1 + q \cdot d_\eta$.

\paragraph{Determinant construction (overidentified case).}
When $d_g > d_\eta$, the Jacobian $J_0$ is no longer square, so the construction above does not directly apply. The idea is to first project the $d_g$-dimensional vectors $g$ and $\partial_{\eta_j} g$ down to $d_\eta$ dimensions using independent copies of $\partial_\eta g$, and then take the determinant. Concretely, define $\widetilde g(W_1, \ldots, W_{2d_\eta}; \theta, \eta)$ componentwise by
\begin{multline*}
\widetilde g_r(W_1, \ldots, W_{2d_\eta}; \theta, \eta)
= \det\Big[
(\partial_\eta g(W_1, \theta, \eta))' \partial_{\eta_1} g(W_2, \theta, \eta), \ldots,\\
(\partial_\eta g(W_{2r-1}, \theta, \eta))' g(W_{2r}, \theta, \eta), \ldots,
(\partial_\eta g(W_{2d_\eta - 1}, \theta, \eta))' \partial_{\eta_{d_\eta}} g(W_{2d_\eta}, \theta, \eta)
\Big],
\end{multline*}
where $r = 1, \ldots, d_\eta$. Each column of the $d_\eta \times d_\eta$ matrix is formed by pre-multiplying a $d_g$-vector ($\partial_{\eta_j} g$ or $g$) by the $d_\eta \times d_g$ matrix $(\partial_\eta g)'$ from an independent copy, yielding a $d_\eta$-vector. The $r$-th column uses $g(W_{2r}, \theta, \eta)$ (without a $\partial_{\eta_r}$ derivative), while all other columns use $\partial_{\eta_j} g(W_{2j}, \theta, \eta)$. Then $\mathbb{E}[\widetilde g_r] = 0$ by the same multilinearity argument as before, the Jacobian satisfies $\mathbb{E}[\partial_\eta \widetilde g(W, \theta_0, \eta_0)] = \det(J_0' J_0) \cdot I_{d_\eta}$, and
\[
\Lambda(\lambda) = \lambda \cdot I_{d_\eta}, \qquad \lambda_0 = \frac{1}{\det(J_0' J_0)},
\]
again giving $d_\lambda = 1$. Each evaluation of $\widetilde g$ now uses $2 d_\eta$ independent copies of $W_i$.

\section{Implementation and simulation\label{sec:implementation}}

In this section we present an implementation of our orthogonal moment functions in the setup of Section \ref{sec:example}, together with Monte Carlo simulation results. The simulation design is a stylized version of the setup in \citet{KlineRoseWalters2022} (though we do not attempt to calibrate the design to their empirical results). There are $N$ firms, and $T$ job applications per firm. Applications feature the race $X_{it,1}\in\{0,1\}$ of applicants and their gender $X_{it,2}\in\{0,1\}$. Binary callback indicators are generated as
\begin{equation}
	Y_{it}=\boldsymbol{1}\{\varepsilon_{it}\geq \beta_{i0}+\beta_{i1}X_{it,1}+\beta_{i2}X_{it,2}\},\label{eq_callbacks}
\end{equation}
where $\varepsilon_{it}$ are i.i.d. standard logistic independent of $X_{it,1},X_{it,2}$, and observations are independent across $i$ and $t$.

Job applications are randomly generated within firms but -- for example due to implementation constraints -- assignment is heterogeneous across firms. Specifically, we assume that firms belong to one of two types $Z_i\in\{1,2\}$, where $Z_i$ are i.i.d. Bernoulli with probability 1/2. When $Z_i=1$, $(X_{it,1}=1,X_{it,2}=1)$ and $(X_{it,1}=0,X_{it,2}=0)$ are each drawn with probability 3/8, and $(X_{it,1}=0,X_{it,2}=1)$ and $(X_{it,1}=1,X_{it,2}=0)$ are each drawn with probability 1/8. When $Z_i=2$, $(X_{it,1}=1,X_{it,2}=1)$ and $(X_{it,1}=0,X_{it,2}=0)$ are each drawn with probability 1/8, and $(X_{it,1}=0,X_{it,2}=1)$ and $(X_{it,1}=1,X_{it,2}=0)$ are each drawn with probability 3/8. For $j\in\{0,1,2\}$, $\beta_{ij}=Z_i+V_{ij}$, where the $V_{ij}$ are independent standard normal, independent of $Z_i$.  

Let $X_{it}=(1,X_{it,1},X_{it,2})'$, and let
$$\eta_{i0}=\left(\mathbb{E}\left[\sum_{t=1}^TX_{it}X_{it}'\right] \right)^{-1} \mathbb{E}\left[\sum_{t=1}^TX_{it}Y_{it}\right].$$
We estimate the parameters
$$\theta_{10}=\frac{1}{N}\sum_{i=1}^N\eta_{i10},\quad \theta_{20}=\frac{1}{N}\sum_{i=1}^N\eta_{i10}^2,$$
which together are informative about the level and dispersion of race-based employment discrimination (here, the coefficient $\eta_{i10}$ corresponds to race). The true values in the DGP are $\theta_{10}=-0.0844$, which corresponds to a $8.44$ percentage point lower callback probability for blacks compared to whites, and $\theta_{20}=0.0177$, which corresponds to a standard deviation of $0.1030$.    

We report results based on $1000$ simulations. We compute the OLS estimators
as 
$$\widehat\eta_i=\left(\sum_{t=1}^TX_{it}X_{it}'\right)^{-1} \sum_{t=1}^TX_{it}Y_{it},$$
and report the resulting estimates of $\theta_1$ and $\theta_2$ (denoted as OLS). Note that, given the design, $\sum_{t=1}^TX_{it}X_{it}'$ is singular with positive probability. We report averages of $\widehat\eta_{i1}$ and $\widehat\eta_{i1}^2$ on the subset of units for which singularity does not occur, and compare those to the averages of $\eta_{i10}$ and $\eta_{i10}^2$ for the same subset of units. We proceed similarly for the other estimators described below.

Next, we compute orthogonal estimators to order 2 (denoted as ORTH). Let, for all $s_1,s_2\in \{1,...,T\}$,
$$\widehat\eta_{i,-(s_1,s_2)}=\left(\sum_{t=1}^T\boldsymbol{1}\{t\neq s_1,s_2\}X_{it}X_{it}'\right)^{-1} \sum_{t=1}^T\boldsymbol{1}\{t\neq s_1,s_2\}X_{it}Y_{it},$$
and
\begin{equation}\widehat\Lambda_{i,-(s_1,s_2)}=-\left(\frac{1}{T-2}\sum_{t=1}^T\boldsymbol{1}\{t\neq s_1,s_2\}X_{it}X_{it}'\right)^{-1}.\label{eq_Lambda_hat}
\end{equation}

For $\theta_1$, the orthogonal estimator is the second element (i.e., the one corresponding to $\eta_{i1}$) of
\begin{align*}\frac{2}{T(T-1)}\sum_{s_1=1}^T\sum_{s_2=s_1+1}^T \bigg[ &\widehat{\eta}_{i,-(s_1,s_2)}\\&-	\left(2I_3+\widehat{\Lambda}_{i,-(s_1,s_2)}X_{i,s_2}X_{i,s_2}'\right)\widehat{\Lambda}_{i,-(s_1,s_2)}X_{i,s_1}(Y_{i,s_1}-X_{i,s_1}'\widehat{\eta}_{i,-(s_1,s_2)}) \bigg],
\end{align*}
where $I_3$ is the $3\times 3$ identity matrix, averaged across units. For $\theta_2$, the orthogonal estimator is, for $D$ the $3\times 3$ matrix with a one in the position $(2,2)$ and zeroes everywhere else,
\begin{align*}\frac{2}{T(T-1)} & \sum_{s_1=1}^T\sum_{s_2=s_1+1}^T \bigg[ \widehat{\eta}_{i,-(s_1,s_2)}'D\widehat{\eta}_{i,-(s_1,s_2)}\\&-2	\widehat{\eta}_{i,-(s_1,s_2)}'D\left(2I_3+\widehat{\Lambda}_{i,-(s_1,s_2)}X_{i,s_2}X_{i,s_2}'\right)\widehat{\Lambda}_{i,-(s_1,s_2)}X_{i,s_1}(Y_{i,s_1}-X_{i,s_1}'\widehat{\eta}_{i,-(s_1,s_2)})\\
	&+\left(\widehat{\Lambda}_{i,-(s_1,s_2)}X_{i,s_2}(Y_{i,s_2}-X_{i,s_2}'\widehat{\eta}_{i,-(s_1,s_2)})\right)'D\left(\widehat{\Lambda}_{i,-(s_1,s_2)}X_{i,s_1}(Y_{i,s_1}-X_{i,s_1}'\widehat{\eta}_{i,-(s_1,s_2)})\right) \bigg],
\end{align*}
also averaged across units.

In addition to estimators based on the plug-in estimates (\ref{eq_Lambda_hat}) of $\Lambda_i$, we also report estimators based on empirical Bayes regularization. Let $p_{ik\ell}=\Pr(X_{it1}=k,X_{it2}=\ell)$ for all $(k,\ell)\in\{0,1\}^2$. To regularize $\Lambda_i$, we endow $(p_{i00},p_{i01},p_{i10},p_{i11})$ with a Dirichlet prior with parameters $(\alpha_{00},\alpha_{01},\alpha_{10},\alpha_{11})$. Let $\alpha=\alpha_{00}+\alpha_{01}+\alpha_{10}+\alpha_{11}$, $\pi_{k\ell}=\alpha_{k\ell}/\alpha$, and
	$$\Pi=\left(\begin{array}{ccc}1&\pi_{10}+\pi_{11} & \pi_{01}+\pi_{11}\\\pi_{10}+\pi_{11}&\pi_{10}+\pi_{11}&\pi_{11} \\\pi_{01}+\pi_{11}&\pi_{11}&\pi_{01}+\pi_{11}\end{array}\right).$$
	We estimate $-\Lambda_i^{-1}=\mathbb{E}[X_{it}X_{it}']$ as the posterior mean
	$$-\widehat \Lambda_i^{-1}=\frac{T}{T+\alpha}\left(\frac{1}{T}\sum_{t=1}^TX_{it}X_{it}'\right)+\frac{\alpha}{T+\alpha}\Pi.$$
	
	Finally, to select $(\alpha_{00},\alpha_{01},\alpha_{10},\alpha_{11})$, we maximize the marginal likelihood across firms with respect to $\alpha$, while fixing $\widehat{\pi}_{k\ell}=\frac{1}{NT}\sum_i\sum_t \boldsymbol{1}\{X_{it1}=k,X_{it2}=\ell\}$ to the pooled mean estimator. We estimate $\alpha$ as
	$$\widehat{\alpha}=\underset{\alpha>0}{\mbox{argmax}}\, N[\ln\Gamma (\alpha)-\ln\Gamma (\alpha+T)]+\sum_{i=1}^N\sum_{(k,\ell)\in\{0,1\}^2}\left[\ln\Gamma (\widehat\pi_{k\ell}\alpha+n_{ik\ell})-\ln\Gamma (\widehat\pi_{k\ell}\alpha)\right],$$ 
	where $n_{ik\ell}=\sum_{t=1}^T\boldsymbol{1}\{X_{it1}=k,X_{it2}=\ell\}$. We apply the regularization scheme to both the OLS and ORTH estimates. In the latter case, we apply the above formulas to subsets of observations, leaving out two $t$ indices and averaging across pairs of indices \emph{ex post}.

Figures \ref{fig:theta1} and \ref{fig:theta2} report the Monte Carlo performance of the OLS (in blue) and ORTH (in red) estimators for $\theta_1$ and $\theta_2$, respectively, as the number of applications per firm $T$ varies between $T=20$ and $T=100$. Each figure plots the mean estimate across simulations (solid line), the true value (dashed horizontal line), and 90\% simulation bands defined by the 5th and 95th quantiles across simulations (shaded regions). The plots on the left-hand side show the estimators based on regularized $\Lambda_i$'s, while the plots on the right-hand side correspond to un-regularized $\Lambda_i$'s. 

\begin{figure}[htbp]
	\centering
	\begin{tabular}{cc}
   Regularized & Un-regularized\\
    \multicolumn{2}{c}{}\\
        \includegraphics[width=0.47\textwidth]{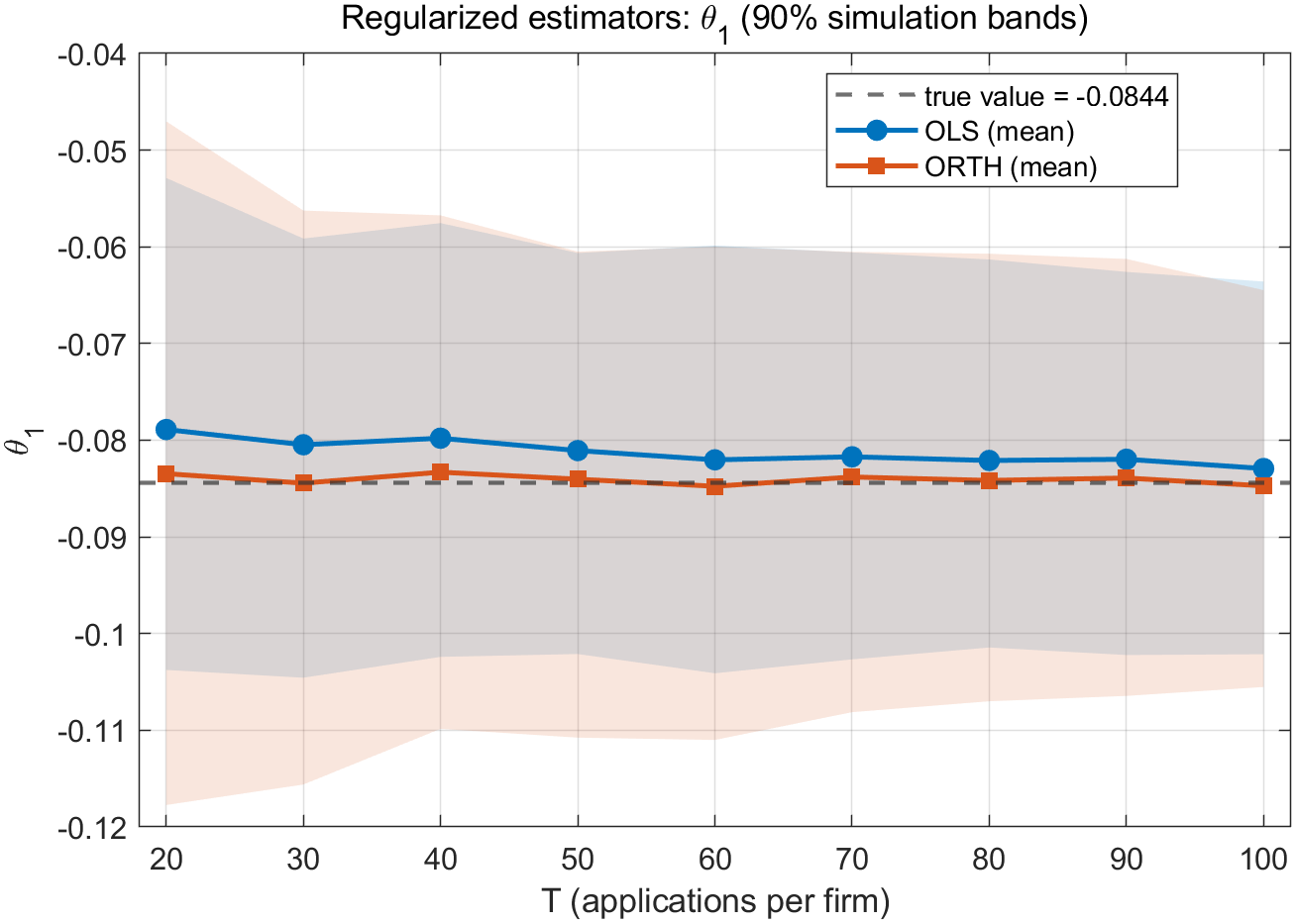} & \includegraphics[width=0.47\textwidth]{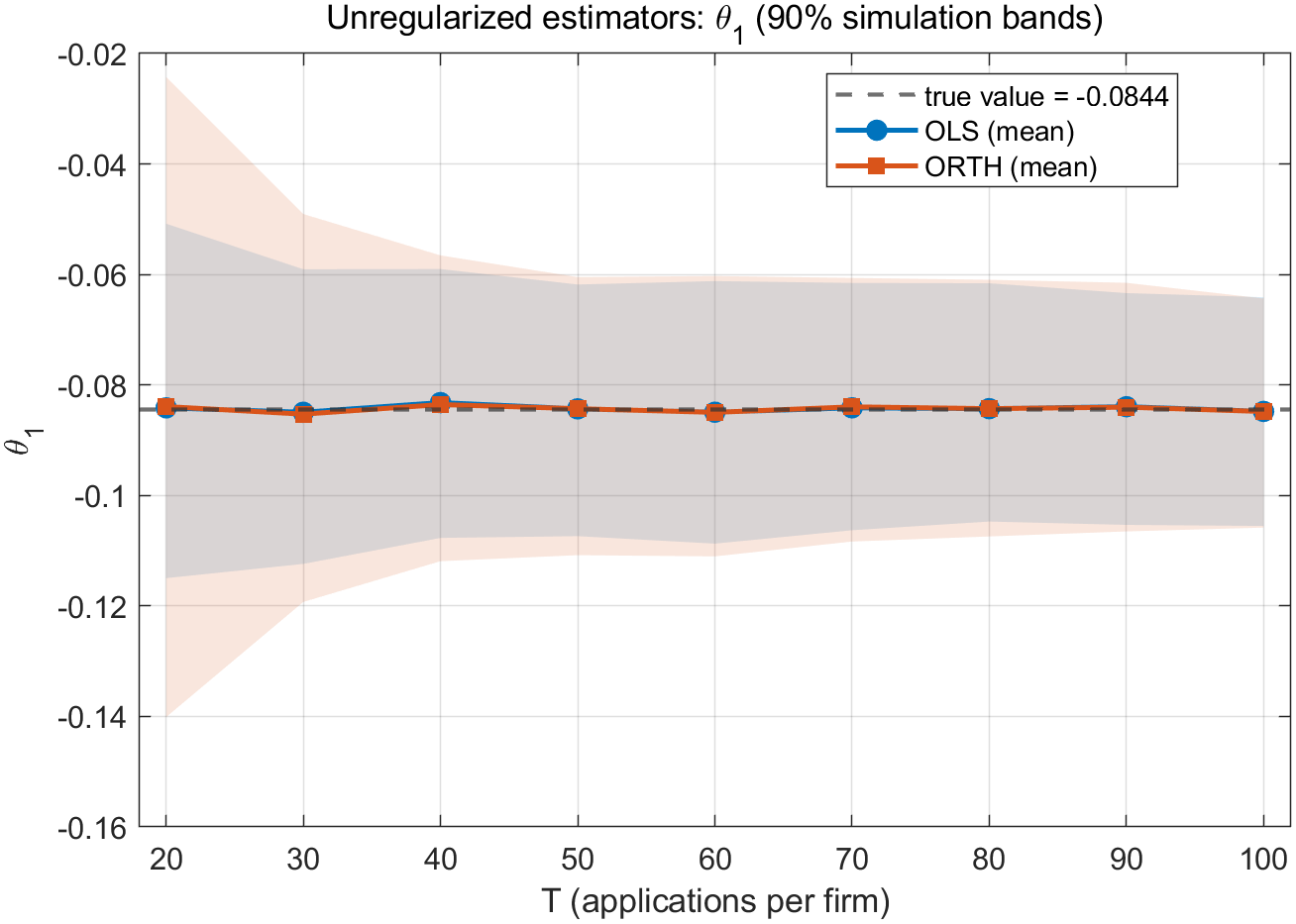}
        \end{tabular}
	\caption{Monte Carlo estimates of $\theta_1$ as a function of $T$. Solid lines show the mean across simulations for OLS and ORTH; the dashed line is the true value; shaded regions are 90\% simulation bands.}
	\label{fig:theta1}
\end{figure}

\begin{figure}[h!]
	\centering
	\begin{tabular}{cc}
  Regularized & Un-regularized\\
  \multicolumn{2}{c}{}\\
        \includegraphics[width=0.47\textwidth]{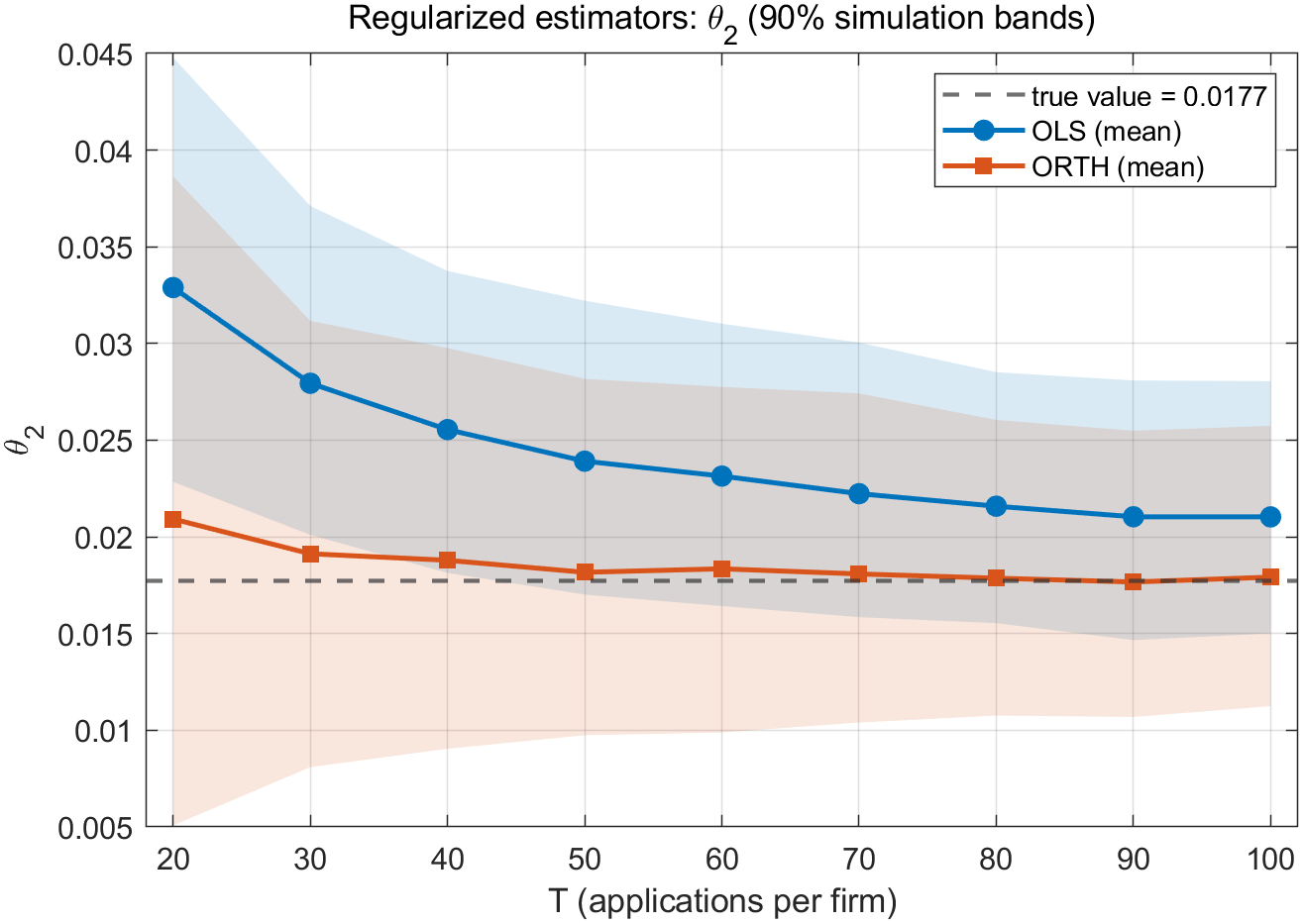} & \includegraphics[width=0.47\textwidth]{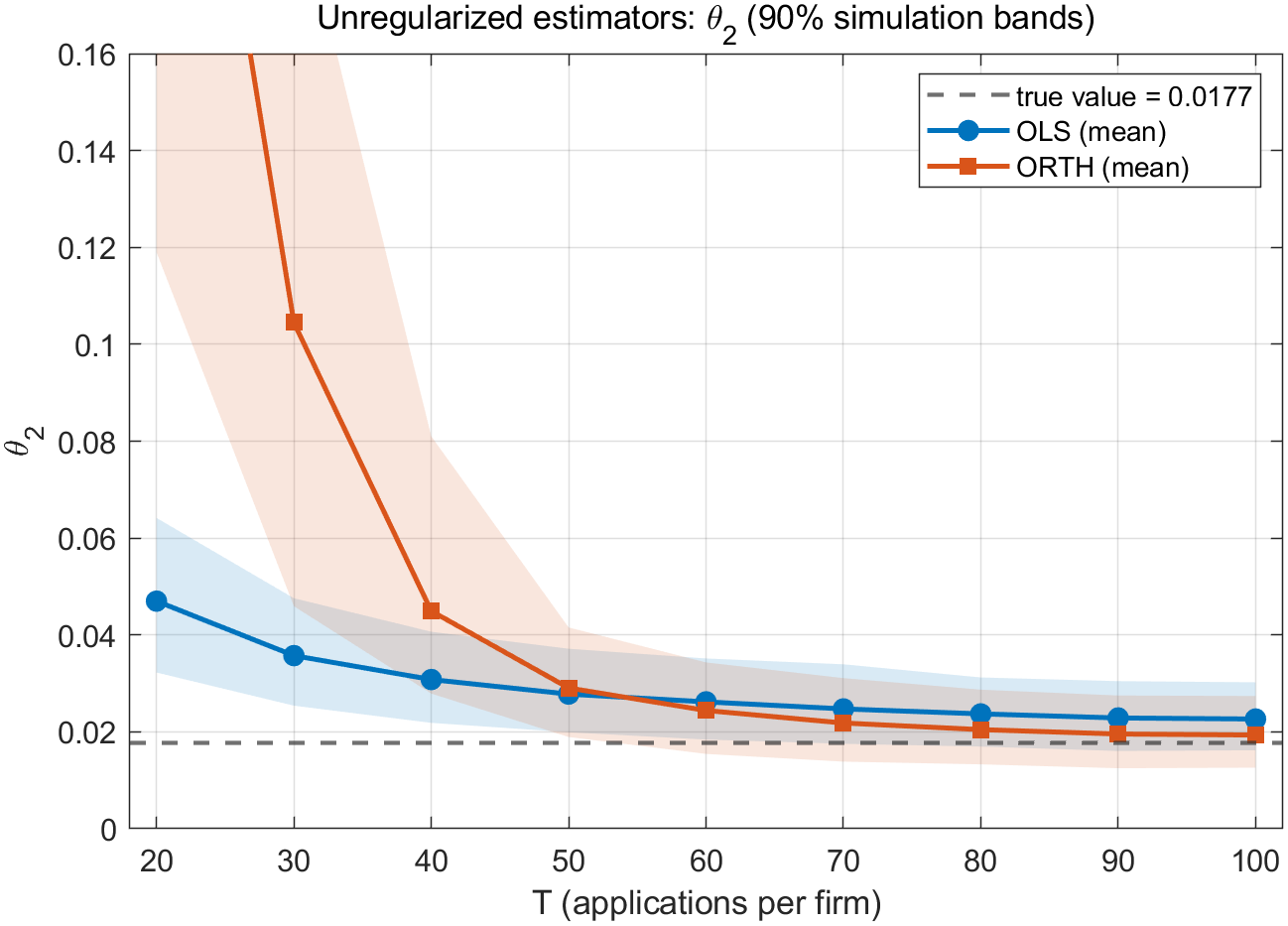}
        \end{tabular}
	\caption{Monte Carlo estimates of $\theta_2$ as a function of $T$. Solid lines show the mean across simulations for OLS and ORTH; the dashed line is the true value; shaded regions are 90\% simulation bands.}
	\label{fig:theta2}
\end{figure}
	
Starting with $\theta_{10}$ -- the average of $\eta_{i10}$ -- the right graph in Figure \ref{fig:theta1} shows that both the OLS and the second-order orthogonalized estimator are close to unbiased irrespective of $T$. The left graph, in turn, shows that while OLS is slightly biased as a result of the regularization of $\Lambda_{i0}$, the orthogonal estimator remains virtually unbiased. In addition, the confidence bands show that the orthogonalization is associated to some increase in sampling variability.

Shifting attention to $\theta_{20}$ -- the average of $\eta_{i10}^2$ -- the right graph in Figure \ref{fig:theta2} shows that both estimators are biased when one does not rely on a regularized estimator of $\Lambda_{i0}$. Moreover, the bias is higher for the orthogonal estimator, and it only becomes lower than the bias of OLS when $T\geq 60$. Lastly, the variance increases associated with the orthogonalization is larger than in the case of $\theta_{10}$, especially for lower values of $T$. Hence, while orthogonalization reduces bias for large enough $T$, an estimator based on un-regularized estimates of $\Lambda_{i0}$ may perform poorly for lower values of $T$.

The left graph in Figure \ref{fig:theta2} shows that the situation improves greatly when relying on regularized estimates of $\Lambda_{i0}$. The bias of OLS is reduced relative to the un-regularized case, without a noticeable increase in variability. Moreover, the bias of the orthogonal estimator decreases substantially relative to the un-regularized case, and in fact the orthogonal estimator has small bias, lower than the one of OLS for all values of $T$. This suggests that the combination of orthogonal moments -- to reap the benefits of higher-order Neyman orthogonality -- and regularization of the nuisance parameters -- to ensure lower estimation error -- can be particularly appealing in applications.

\section{Asymptotic theory\label{sec:asymptotic}}

In this section we establish the asymptotic distribution of the estimator based on a $q$-th order Neyman-orthogonal moment function. The assumptions and result cover grouped data applications in which each unit contributes a within-unit U-statistic over within-group observations, including the heterogeneous-coefficients application of Sections~\ref{sec:example} and~\ref{sec:implementation} as a special case. We state the result for a $d_\theta$-dimensional target parameter $\theta_0$, with an orthogonal moment function $\psi$ taking values in $\mathbb{R}^{d_\psi}$ for some $d_\psi \geq d_\theta$. Under the assumptions below, the estimator is $\sqrt{N}$-consistent and asymptotically normal, where $N$ denotes the number of units (groups), even when the preliminary nuisance estimates converge at a rate slower than $\sqrt{N}$.

The analysis uses the unit-by-unit view introduced in Section~\ref{sec:example}. For each unit $i\in\{1,\ldots,N\}$, the $T$ observations $W_{i1}, \ldots, W_{iT}$ form the sample on which the orthogonal moment function $\psi(W_i; \theta, \nu_i)$ is constructed, and preliminary estimation of the unit-specific nuisance $\nu_{i0} = (\eta_{i0}, \lambda_{i0})$ is carried out on a held-out subset of observations for the same unit. The cross-sectional dimension $i = 1, \ldots, N$ enters only at the final step, through the sample average that defines the estimator of $\theta_0$. As a consequence, the $\sqrt{N}$-rate and the asymptotic normality in Theorem~\ref{th:asymptotic} below are driven by a cross-sectional central limit theorem. The within-unit structure of $\psi$ affects only the asymptotic variance.

The analysis closely parallels Section~6 of \cite{BonhommeJochmansWeidner2025}. The main adaptation is that the nuisance parameter $\eta_i$ in that paper is replaced by the pair $\nu_i = (\eta_i, \lambda_i)$, and that the moment function is our orthogonal U-statistic rather than a conditional-likelihood score. Because the limiting distribution is driven by the cross-sectional average, with the within-unit structure only entering through the variance of the moment function, the proof of \cite{BonhommeJochmansWeidner2025} applies after a straightforward notational translation. We record this translation in Appendix~\ref{app:asymptotic-translation} rather than reproducing the full proof.

\subsection{Setup}

We observe an independent cross-section of $N$ units. Each unit contributes a bundle $W_i = (W_{i1}, \ldots, W_{iT})$ of $T$ observations. Both $N$ and $T$ grow as $N \to \infty$; we suppress the dependence of $T$ on $N$ in the notation. For each unit $i$, let $\nu_{i0} = (\eta_{i0}, \lambda_{i0})$ denote the true value of the combined nuisance parameter, where $\lambda_{i0}$ is the finite-dimensional parameter encoding the unit-level Jacobian inverse $\Lambda_{i0}$ as in Section~\ref{sec:setup}. The dimension $d_{\nu,i}$ is bounded uniformly in $i$ and $N$.

The target parameter $\theta_0 \in \mathbb{R}^{d_\theta}$ is identified by
\begin{equation}
\frac{1}{N} \sum_{i=1}^N \mathbb{E}\!\left[\psi(W_i; \theta_0, \nu_{i0})\right] = 0,
\label{eq:asymp-identification}
\end{equation}
where $\psi(W_i; \theta, \nu_i) \in \mathbb{R}^{d_\psi}$ is a $q$-th order Neyman-orthogonal moment function with respect to $\nu_i$ in the sense of Definition~\ref{def:orthogonality}. We allow $d_\psi \geq d_\theta$ to permit overidentification. In applications such as the heterogeneous coefficients example of Section~\ref{sec:example}, $\theta_0 = (1/N) \sum_i m(\eta_{i0})$ is a sample-dependent parameter; the analysis below covers this case, with the understanding that $\theta_0$ may depend on $N$.

Let $\widehat\nu_i$ be a preliminary estimator of $\nu_{i0}$. We assume throughout that $\widehat\nu_i$ is constructed from a held-out subset of observations, independent of the observations used to evaluate $\psi$. Concretely, for each unit $i$ we partition the within-unit index $\{1, \ldots, T\}$ into disjoint subsets $\mathcal{S}_1$ and $\mathcal{S}_2$, estimate $\widehat\nu_i$ from $\{W_{it} : t \in \mathcal{S}_1\}$, and evaluate $\psi$ using observations in $\mathcal{S}_2$. The sample is split within each unit, not across units. Efficiency can be improved by cross-fitting, by swapping the roles of $\mathcal{S}_1$ and $\mathcal{S}_2$ and averaging the resulting moment functions. The analysis below applies to either a single split or the cross-fitted estimator.

The estimator $\widehat\theta$ is defined by the GMM problem
\begin{equation}
\widehat\theta = \argmin_{\theta \in \Theta} \left\| \frac{1}{N} \sum_{i=1}^N \psi(W_i; \theta, \widehat\nu_i) \right\|_\Omega^2,
\label{eq:asymp-estimator}
\end{equation}
where $\Theta \subseteq \mathbb{R}^{d_\theta}$ is the parameter space, $\Omega$ is a symmetric positive-definite $d_\psi \times d_\psi$ weight matrix, and $\|x\|_\Omega^2 = x'\Omega x$. When $d_\psi = d_\theta$ (just-identification), $\widehat\theta$ solves the moment equation $(1/N) \sum_i \psi(W_i; \widehat\theta, \widehat\nu_i) = 0$.

\subsection{Assumptions and main result}

\begin{assumption}
\label{ass:asymp-reg}
\phantom{a}
\begin{enumerate}[(i)]
    \item As $N \to \infty$, $(\widehat\theta, \widehat\nu_1, \ldots, \widehat\nu_N)$ is contained in a convex neighborhood $\mathcal{B}_N$ of $(\theta_0, \nu_{10}, \ldots, \nu_{N0})$. Let $\mathcal{B}_{N,i}$ denote the intersection of $\mathcal{B}_N$ with the parameter subspace for unit $i$.
    \item $\max_i d_{\nu,i} = O(1)$.
    \item Each component of the moment function $\psi(W_i; \theta, \nu_i)$ is $(q+1)$ times continuously differentiable in $(\theta, \nu_i)$, and all its partial derivatives up to order $(q+1)$ are bounded in absolute value (componentwise) by $C_{N,i}(W_i) \geq 0$ uniformly over $\mathcal{B}_{N,i}$, with $$\frac{1}{N} \sum_{i=1}^N \mathbb{E}[C_{N,i}(W_i)^2] = O(1).$$
    \item $\widehat\theta - \theta_0 = o_P(1)$, and
    \[
    \frac{1}{N} \sum_{i=1}^N \mathbb{E}\!\left[\|\widehat\nu_i - \nu_{i0}\|^{2(q+1)}\right] = o(N^{-1}).
    \]
    \item The probability limit
    \[
    G = \plim_{N \to \infty} \frac{1}{N} \sum_{i=1}^N \frac{\partial \psi(W_i; \theta_0, \nu_{i0})}{\partial \theta'}
    \]
    exists as a $d_\psi \times d_\theta$ matrix, and $G'\Omega G$ is nonsingular.
\end{enumerate}
\end{assumption}

Part (ii) mirrors the bounded-nuisance-dimension condition of \cite{BonhommeJochmansWeidner2025}. In the grouped data example of Section~\ref{sec:example}, $d_{\nu,i} = d_\eta + d_\eta(d_\eta +1)/2$, which is independent of $N$ and $T$. Part (iv) is the key rate condition on the preliminary estimator. In the grouped data example, $\widehat\nu_i$ is an OLS-based estimator computed on $|\mathcal{S}_1|$ observations, so $\mathbb{E}[\|\widehat\nu_i - \nu_{i0}\|^{2(q+1)}] = O(T^{-(q+1)})$ if $|\mathcal{S}_1|/T$ tends to a non-zero constant, and the rate condition is satisfied whenever $N = o(T^{q+1})$. Higher-order orthogonality thus permits inference in grouped settings where the group sizes are small relative to the number of groups. Part (v) is a standard rank condition.

\begin{assumption}
\label{ass:asymp-ortho}
\phantom{a}
\begin{enumerate}[(i)]
    \item Each component of the moment function $\psi(W_i; \theta, \nu_i)$ is Neyman-orthogonal to order $q$ in $\nu_i$ in the sense of Definition~\ref{def:orthogonality}, and $(1/N) \sum_i \mathbb{E}[\psi(W_i; \theta_0, \nu_{i0})] = 0$.
    \item $\widehat\nu_i$ is independent of the observations in $W_i$ used to evaluate $\psi(W_i; \theta, \widehat\nu_i)$, for every $i$.
    \item $W_1, \ldots, W_N$ are independent across $i$.
    \item With $\xi_{N,i} = G'\Omega\,\psi(W_i; \theta_0, \nu_{i0}) \in \mathbb{R}^{d_\theta}$, the Lindeberg condition holds, and
    \[
    V = \plim_{N \to \infty} \frac{1}{N} \sum_{i=1}^N \mathrm{Var}(\xi_{N,i})
    \]
    exists as a $d_\theta \times d_\theta$ positive definite matrix.
\end{enumerate}
\end{assumption}

Part (ii) is achieved by sample splitting within each unit, as described above. Part (iii) requires independence across units. It is straightforward to modify the variance formula in Theorem~\ref{th:asymptotic} below to accommodate particular forms of dependence across units, by substituting an appropriate expression for $V$.

\begin{theorem}
\label{th:asymptotic}
Under Assumptions~\ref{ass:asymp-reg} and \ref{ass:asymp-ortho}, with the same $q \in \{1, 2, 3, \ldots\}$,
\[
\sqrt{N}\,(\widehat\theta - \theta_0) \xrightarrow{d} \mathcal{N}\!\left(0, \, (G'\Omega G)^{-1}\, V\, (G'\Omega G)^{-1}\right).
\]
\end{theorem}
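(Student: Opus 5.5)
The plan is the standard GMM argument: a mean-value expansion in $\theta$, a Taylor expansion in the nuisance $\nu_i$ to order $q+1$ around $\nu_{i0}$, and a cross-sectional central limit theorem. Write $\widehat\Psi(\theta,\nu)=N^{-1}\sum_i\psi(W_i;\theta,\nu_i)$. Consistency of $\widehat\theta$ is given by Assumption~\ref{ass:asymp-reg}(iv). The first-order condition of \eqref{eq:asymp-estimator} is $\widehat G'\Omega\widehat\Psi(\widehat\theta,\widehat\nu)=0$, with $\widehat G=\partial_{\theta'}\widehat\Psi(\widehat\theta,\widehat\nu)$. A mean-value expansion in $\theta$ then gives
\[
\sqrt N(\widehat\theta-\theta_0)=-(\widehat G'\Omega\bar G)^{-1}\widehat G'\Omega\sqrt N\,\widehat\Psi(\theta_0,\widehat\nu),
\]
where $\bar G$ is the derivative at intermediate points. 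Since the parameters lie in the convex set $\mathcal B_N$, the derivative bounds in Assumption~\ref{ass:asymp-reg}(iii) apply. These bounds, the $\theta$-consistency, and $N^{-1}\sum_i\mathbb E\|\widehat\nu_i-\nu_{i0}\|^2\to0$ (implied by (iv)) give $\widehat G,\bar G\to_p G$. It then remains to show
\[
\sqrt N\,\widehat\Psi(\theta_0,\widehat\nu)=N^{-1/2}\sum_i\psi(W_i;\theta_0,\nu_{i0})+o_P(1).
\]
After that, the Lindeberg CLT of Assumption~\ref{ass:asymp-ortho}(iii)--(iv), applied to the $\xi_{N,i}$, and Slutsky's lemma deliver the sandwich limit.

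For the key step, expand each $\psi(W_i;\theta_0,\widehat\nu_i)$ in $\nu_i$ around $\nu_{i0}$ to order $q$, with an integral-form remainder of order $q+1$. The difference $\sqrt N[\widehat\Psi(\theta_0,\widehat\nu)-\widehat\Psi(\theta_0,\nu_0)]$ then splits into two pieces.

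\textbf{(a) The centered terms.} For $1\le k\le q$, the order-$k$ terms are $N^{-1/2}\sum_i\langle\partial_\nu^k\psi(W_i;\theta_0,\nu_{i0}),(\widehat\nu_i-\nu_{i0})^{\otimes k}\rangle/k!$. Condition on the held-out data. By sample splitting, Assumption~\ref{ass:asymp-ortho}(ii), $\widehat\nu_i$ is independent of the evaluation observations. So the conditional mean of each summand is $\langle\mathbb E\,\partial_\nu^k\psi,(\widehat\nu_i-\nu_{i0})^{\otimes k}\rangle/k!$, and this vanishes by the $q$-th order orthogonality in Assumption~\ref{ass:asymp-ortho}(i). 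Differentiation under the expectation is justified by the dominating function $C_{N,i}$. Independence across $i$ makes the summands conditionally mean-zero and independent. Their conditional second moment is therefore bounded by $N^{-1}\sum_i\mathbb E[C_{N,i}^2]^{1/2}$-type terms times $\mathbb E\|\widehat\nu_i-\nu_{i0}\|^{2k}$, which is $o(1)$ because $\widehat\nu_i\to\nu_{i0}$ in the averaged $L^{2(q+1)}$ sense of (iv). Lyapunov's inequality gives the lower moments from the $2(q+1)$-th one. Chebyshev's inequality then shows these terms are $o_P(1)$.

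\textbf{(b) The remainder.} The remainder is bounded by $N^{-1/2}\sum_i C_{N,i}(W_i)\|\widehat\nu_i-\nu_{i0}\|^{q+1}$, using a bounded dimension constant from (ii). By Cauchy--Schwarz, its expectation is at most
\[
\sqrt N\Big(N^{-1}\sum_i\mathbb E C_{N,i}^2\Big)^{1/2}\Big(N^{-1}\sum_i\mathbb E\|\widehat\nu_i-\nu_{i0}\|^{2(q+1)}\Big)^{1/2}=\sqrt N\cdot O(1)\cdot o(N^{-1/2})=o(1).
\]
Markov's inequality then gives $o_P(1)$.

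The main obstacle is the remainder (b). The bound $C_{N,i}(W_i)$ holds only on $\mathcal B_{N,i}$, and it depends on $W_i$, while the nuisance error is computed from other observations of the same unit. So I would use Assumption~\ref{ass:asymp-reg}(i), that the estimates lie in $\mathcal B_N$ with probability approaching one, and work on that event. Cauchy--Schwarz is applied jointly over $i$ and the probability space, as above; this does not need independence between $C_{N,i}$ and $\widehat\nu_i$. The same device controls the $\theta$-intermediate points in $\bar G$. The remaining bookkeeping is a direct translation of Section~6 of \cite{BonhommeJochmansWeidner2025}, as indicated in Appendix~\ref{app:asymptotic-translation}.
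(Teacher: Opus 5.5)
Your proposal is correct and is essentially the argument the paper imports from Section~6 of Bonhomme, Jochmans and Weidner (2025) through its notational translation: a mean-value expansion of the GMM first-order condition in $\theta$, then an order-$q$ Taylor expansion in $\nu_i$ whose centered terms have zero conditional mean by orthogonality and within-unit sample splitting, then a Cauchy--Schwarz bound on the order-$(q+1)$ remainder using the $2(q+1)$-moment rate in Assumption~\ref{ass:asymp-reg}(iv), and finally a Lindeberg CLT for the $\xi_{N,i}$. The one justification to tighten is in (a): the variance of the order-$k$ term is, up to a dimension constant, $N^{-1}\sum_i \mathbb E[C_{N,i}^2]\,\mathbb E\|\widehat\nu_i-\nu_{i0}\|^{2k}$ (a product of expectations by independence), and this vanishes because Assumption~\ref{ass:asymp-reg}(iv) forces $\max_i \mathbb E\|\widehat\nu_i-\nu_{i0}\|^{2(q+1)} \le \sum_i \mathbb E\|\widehat\nu_i-\nu_{i0}\|^{2(q+1)} = o(1)$, not merely because of averaged $L^{2(q+1)}$ convergence.
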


The proof is in Appendix~\ref{app:asymptotic-translation}.

The asymptotic variance depends on the orthogonality order $q$ through the form of the asymptotic variance, although we have left this dependence implicit in the notation. As in \cite{BonhommeJochmansWeidner2025}, the use of cross-fitting mitigates the variability introduced by a single split.

\clearpage

%\bibliographystyle{chicago3}
%\bibliography{refs}

\clearpage

\appendix

\begin{center}
{\Large \textbf{APPENDIX}}
\end{center}

\section{Additional discussion and results\label{app:additional}}

\subsection{Tree-indexed Taylor expansions\label{app:primer}}

The closed-form coefficient
\[
  c_{q,\tau} \;=\; \frac{(-1)^{|\tau|}}{|\mathrm{Aut}(\tau)|}\,
  \binom{q+|\tau|-d(\tau)}{|\tau|}
\]
of Theorem~\ref{th:nonlinear} resembles coefficients that appear in the literature on Taylor expansions indexed by rooted trees. A close match is the $B$-series of \citet{Butcher1963}, which expresses derivatives of the solution to an ordinary differential equation (ODE) as rooted-tree sums that share the same $1/|\mathrm{Aut}(\tau)|$ factor. The other parts of $c_{q,\tau}$ are specific to the orthogonalization problem and do not appear in $B$-series. We recall the $B$-series example below to demystify the $1/|\mathrm{Aut}(\tau)|$ factor, and close with a recursion for computing $|\mathrm{Aut}(\tau)|$.

\subsubsection{The $B$-series for a differential equation}

Consider the ordinary differential equation $y'(t) = f(y(t))$ with $y(0) = y_0$, where $y:\mathbb{R} \rightarrow \mathbb{R}^n$ and $f:\mathbb{R}^n \rightarrow \mathbb{R}^n$ are smooth functions. Here, $f$ is given and the goal is to determine $y$ from the ODE. Differentiating $y' = f(y)$ repeatedly and evaluating at $t = 0$, the chain rule gives
\begin{align*}
y'(0) &= f, \\
y''(0) &= f'(f), \\
y'''(0) &= f''(f, f) \;+\; f'\bigl(f'(f)\bigr), \\
y^{(4)}(0) &= f'''(f, f, f) \;+\; 3\, f''\bigl(f'(f),\, f\bigr) \;+\; f'\bigl(f''(f, f)\bigr) \;+\; f'\bigl(f'(f'(f))\bigr),
\end{align*}
all evaluated at $y_0$, with $f^{(k)}$ the $k$-th derivative tensor of $f$ acting on $k$ inputs. Each term in $y^{(k)}(0)$ corresponds to a rooted tree, and for $k \geq 1$ we have
\begin{equation}
y^{(k)}(0) \;=\; \sum_{|\tau| = k} \frac{k!}{|\mathrm{Aut}(\tau)|\, \tau!}\, F_\tau(y_0),
\label{eq:butcher-bseries}
\end{equation}
where the sum runs over all rooted trees $\tau$ on $k$ nodes. The integer $|\tau|$ is the total number of nodes of $\tau$.\footnote{In the rest of the paper, $|\tau|$ denotes the number of \emph{non-root} nodes. The conventions differ by a $-1$ but $|\mathrm{Aut}(\tau)|$ is unchanged from the main text.} The map $F_\tau(y_0)$ is the elementary differential built from $f$ and its derivatives by a dictionary very similar to how we construct $\kappa_\tau$ in the main text.\footnote{Each node of $\tau$ with $k$ children carries $f^{(k)}(y_0)$, so each leaf carries $f(y_0)$ and the root with $r$ children carries $f^{(r)}(y_0)$. The map $F_\tau(y_0)$ is the contraction of these tensors along the parent--child edges of $\tau$. This mirrors the construction of $\kappa_\tau$ in the main text, with $f$ playing the role of both $g$ and $m$ and no $\Lambda$ since no implicit-equation inversion is involved.}
The integer $\tau!$ is called the tree factorial, and is simply the product over all nodes $v$ of $\tau$ of the number of nodes in the subtree of $\tau$ rooted at $v$. $|\mathrm{Aut}(\tau)|$ is the order of the group of bijections of the nodes of $\tau$ that fix the root and preserve the parent-child relation, exactly as in the main text.

An early version of this rooted-tree picture appears in \citet{Cayley1857}, and the $B$-series form\footnote{Substituting~\eqref{eq:butcher-bseries} into the Taylor series $y(t) = \sum_{k \geq 0} t^k\, y^{(k)}(0)/k!$ gives the standard $B$-series form $y(t) = y_0 + \sum_\tau t^{|\tau|}\, F_\tau(y_0)/(|\mathrm{Aut}(\tau)|\, \tau!)$, where the sum now runs over all rooted trees with at least one node.} is developed in \citet{Butcher1963}. Textbook treatments are \citet{HairerLubichWanner2006}, Ch.~III.1, and the survey of \citet{McLachlanModinMuntheKaasVerdier2017}.

This relatively simple example already illustrates how sums over rooted trees with combinatorial factors of the form $1/|\mathrm{Aut}(\tau)|$ arise naturally in the mathematics literature on multivariate Taylor expansions.

\subsubsection{Computing $|\mathrm{Aut}(\tau)|$}

The order $|\mathrm{Aut}(\tau)|$ admits a simple recursion. Let the root of $\tau$ have $r$ children. Deleting the root splits $\tau$ into $r$ rooted subtrees, one at each child. Group these subtrees into isomorphism classes (as rooted trees). For $i = 1, \ldots, s$, let $\tau_i$ denote a representative subtree of the $i$-th class and $n_i$ the number of subtrees in that class, so $n_1 + \cdots + n_s = r$. An automorphism of $\tau$ fixes the root, permutes the $n_i$ subtrees within each class freely, and acts on each subtree by one of its own automorphisms. This gives
\[
|\mathrm{Aut}(\tau)| \;=\; \prod_{i=1}^s n_i!\, |\mathrm{Aut}(\tau_i)|^{n_i},
\]
with the recursion bottoming out at the trivial single-node tree, where $|\mathrm{Aut}(\bullet)| = 1$. The tree-isomorphism test that determines the classes admits linear-time implementations (\citealp{Valiente2002}, Ch.~4).

\subsection{Explicit formulas at $q = 4$\label{app:formulas}}

This appendix gives the explicit orthogonal moment function $\psi^{(4)}$ at $q = 4$, in both the affine and nonlinear cases. The construction follows Section~\ref{sec:nonlinear}: $\psi^{(4)} = \sum_{\tau \in \mathcal{T}_4} c_{4,\tau}\, \kappa_\tau$, where $\mathcal{T}_4$ contains $|\mathcal{T}_4| = 40$ trees and the closed-form coefficient~\eqref{eq:cq-formula} is
\[
c_{q,\tau} \;=\; \frac{(-1)^{|\tau|}}{|\mathrm{Aut}(\tau)|}\, \binom{q + |\tau| - d(\tau)}{|\tau|}.
\]
We display $\mathcal{T}_4$, list $c_{4,\tau}$ for every tree, and compute $\psi^{(4)}$. Throughout we write $g(W_j)$ for $g(W_j, \theta, \eta)$, $\partial_\eta^p g(W_j)$ for $\partial_\eta^p g(W_j, \theta, \eta)$, $m_\eta = \partial_\eta m(W_1, \theta, \eta)$, $m_{\eta\eta} = \partial^2_{\eta\eta'} m(W_1, \theta, \eta)$, $m_{\eta\eta\eta} = \partial_\eta^3 m(W_1, \theta, \eta)$, $m_{\eta\eta\eta\eta} = \partial_\eta^4 m(W_1, \theta, \eta)$, and $\Lambda = \Lambda(\lambda)$.

\subsubsection*{The 40 trees in $\mathcal{T}_4$}

Among the 40 trees, $12$ have $|\tau| = d(\tau)$ (no branches apart from the root, this is the affine trees, Figure~\ref{fig:trees-q4-aff}) and $28$ have $|\tau| > d(\tau)$ (at least one branch outside of the root, these are the correction terms, Figure~\ref{fig:trees-q4-corr}).

\begin{figure}[ht]
\centering
\renewcommand{\arraystretch}{1.4}
\begin{tabular}{cccccc}
\begin{tikzpicture}[baseline=(root.base),
    every node/.style={draw, circle, inner sep=2pt, minimum size=5pt}]
\node[fill=blue!15] (root) {};
\end{tikzpicture}
&
\begin{tikzpicture}[baseline=(root.base),
    level distance=0.5cm, sibling distance=0.6cm,
    every node/.style={draw, circle, inner sep=2pt, minimum size=5pt},
    edge from parent/.style={draw, -}]
\node[fill=blue!15] (root) {} child { node[fill=gray!30] {} };
\end{tikzpicture}
&
\begin{tikzpicture}[baseline=(root.base),
    level distance=0.5cm, sibling distance=0.6cm,
    every node/.style={draw, circle, inner sep=2pt, minimum size=5pt},
    edge from parent/.style={draw, -}]
\node[fill=blue!15] (root) {} child { node {} child { node[fill=gray!30] {} } };
\end{tikzpicture}
&
\begin{tikzpicture}[baseline=(root.base),
    level distance=0.45cm, sibling distance=0.6cm,
    every node/.style={draw, circle, inner sep=2pt, minimum size=5pt},
    edge from parent/.style={draw, -}]
\node[fill=blue!15] (root) {} child { node {} child { node {} child { node[fill=gray!30] {} } } };
\end{tikzpicture}
&
\begin{tikzpicture}[baseline=(root.base),
    level distance=0.4cm, sibling distance=0.6cm,
    every node/.style={draw, circle, inner sep=2pt, minimum size=5pt},
    edge from parent/.style={draw, -}]
\node[fill=blue!15] (root) {} child { node {} child { node {} child { node {} child { node[fill=gray!30] {} } } } };
\end{tikzpicture}
&
\begin{tikzpicture}[baseline=(root.base),
    level distance=0.5cm, sibling distance=0.45cm,
    every node/.style={draw, circle, inner sep=2pt, minimum size=5pt},
    edge from parent/.style={draw, -}]
\node[fill=blue!15] (root) {}
  child { node[fill=gray!30] {} }
  child { node[fill=gray!30] {} };
\end{tikzpicture}
\\
$\tau^{\rm aff}_1$ & $\tau^{\rm aff}_2$ & $\tau^{\rm aff}_3$ & $\tau^{\rm aff}_4$ & $\tau^{\rm aff}_5$ & $\tau^{\rm aff}_6$
\\[1em]
\begin{tikzpicture}[baseline=(root.base),
    level distance=0.45cm, sibling distance=0.5cm,
    every node/.style={draw, circle, inner sep=2pt, minimum size=5pt},
    edge from parent/.style={draw, -}]
\node[fill=blue!15] (root) {}
  child { node[fill=gray!30] {} }
  child { node {} child { node[fill=gray!30] {} } };
\end{tikzpicture}
&
\begin{tikzpicture}[baseline=(root.base),
    level distance=0.4cm, sibling distance=0.5cm,
    every node/.style={draw, circle, inner sep=2pt, minimum size=5pt},
    edge from parent/.style={draw, -}]
\node[fill=blue!15] (root) {}
  child { node[fill=gray!30] {} }
  child { node {} child { node {} child { node[fill=gray!30] {} } } };
\end{tikzpicture}
&
\begin{tikzpicture}[baseline=(root.base),
    level distance=0.45cm, sibling distance=0.5cm,
    every node/.style={draw, circle, inner sep=2pt, minimum size=5pt},
    edge from parent/.style={draw, -}]
\node[fill=blue!15] (root) {}
  child { node {} child { node[fill=gray!30] {} } }
  child { node {} child { node[fill=gray!30] {} } };
\end{tikzpicture}
&
\begin{tikzpicture}[baseline=(root.base),
    level distance=0.45cm, sibling distance=0.4cm,
    every node/.style={draw, circle, inner sep=2pt, minimum size=5pt},
    edge from parent/.style={draw, -}]
\node[fill=blue!15] (root) {}
  child { node[fill=gray!30] {} }
  child { node[fill=gray!30] {} }
  child { node[fill=gray!30] {} };
\end{tikzpicture}
&
\begin{tikzpicture}[baseline=(root.base),
    level distance=0.45cm, sibling distance=0.4cm,
    every node/.style={draw, circle, inner sep=2pt, minimum size=5pt},
    edge from parent/.style={draw, -}]
\node[fill=blue!15] (root) {}
  child { node[fill=gray!30] {} }
  child { node[fill=gray!30] {} }
  child { node {} child { node[fill=gray!30] {} } };
\end{tikzpicture}
&
\begin{tikzpicture}[baseline=(root.base),
    level distance=0.45cm, sibling distance=0.35cm,
    every node/.style={draw, circle, inner sep=2pt, minimum size=5pt},
    edge from parent/.style={draw, -}]
\node[fill=blue!15] (root) {}
  child { node[fill=gray!30] {} }
  child { node[fill=gray!30] {} }
  child { node[fill=gray!30] {} }
  child { node[fill=gray!30] {} };
\end{tikzpicture}
\\
$\tau^{\rm aff}_7$ & $\tau^{\rm aff}_8$ & $\tau^{\rm aff}_9$ & $\tau^{\rm aff}_{10}$ & $\tau^{\rm aff}_{11}$ & $\tau^{\rm aff}_{12}$
\end{tabular}
\caption{The 12 trees in $\mathcal{T}_4$ that correspond to affine terms in $\psi^{(4)}$.}
\label{fig:trees-q4-aff}
\end{figure}

\begin{figure}[ht]
\centering
\renewcommand{\arraystretch}{1.4}
\begin{tabular}{ccccccc}
\begin{tikzpicture}[baseline=(root.base),
    level distance=0.45cm, sibling distance=0.45cm,
    every node/.style={draw, circle, inner sep=2pt, minimum size=5pt},
    edge from parent/.style={draw, -}]
\node[fill=blue!15] (root) {} child { node {} child { node[fill=gray!30] {} } child { node[fill=gray!30] {} } };
\end{tikzpicture}
&
\begin{tikzpicture}[baseline=(root.base),
    level distance=0.45cm, sibling distance=0.4cm,
    every node/.style={draw, circle, inner sep=2pt, minimum size=5pt},
    edge from parent/.style={draw, -}]
\node[fill=blue!15] (root) {} child { node {} child { node[fill=gray!30] {} } child { node[fill=gray!30] {} } child { node[fill=gray!30] {} } };
\end{tikzpicture}
&
\begin{tikzpicture}[baseline=(root.base),
    level distance=0.45cm, sibling distance=0.45cm,
    every node/.style={draw, circle, inner sep=2pt, minimum size=5pt},
    edge from parent/.style={draw, -}]
\node[fill=blue!15] (root) {} child { node[fill=gray!30] {} } child { node {} child { node[fill=gray!30] {} } child { node[fill=gray!30] {} } };
\end{tikzpicture}
&
\begin{tikzpicture}[baseline=(root.base),
    level distance=0.4cm, sibling distance=0.4cm,
    every node/.style={draw, circle, inner sep=2pt, minimum size=5pt},
    edge from parent/.style={draw, -}]
\node[fill=blue!15] (root) {} child { node {} child { node[fill=gray!30] {} } child { node {} child { node[fill=gray!30] {} } child { node[fill=gray!30] {} } } };
\end{tikzpicture}
&
\begin{tikzpicture}[baseline=(root.base),
    level distance=0.4cm, sibling distance=0.45cm,
    every node/.style={draw, circle, inner sep=2pt, minimum size=5pt},
    edge from parent/.style={draw, -}]
\node[fill=blue!15] (root) {} child { node {} child { node {} child { node[fill=gray!30] {} } child { node[fill=gray!30] {} } } };
\end{tikzpicture}
&
\begin{tikzpicture}[baseline=(root.base),
    level distance=0.4cm, sibling distance=0.45cm,
    every node/.style={draw, circle, inner sep=2pt, minimum size=5pt},
    edge from parent/.style={draw, -}]
\node[fill=blue!15] (root) {} child { node {} child { node[fill=gray!30] {} } child { node {} child { node[fill=gray!30] {} } } };
\end{tikzpicture}
&
\begin{tikzpicture}[baseline=(root.base),
    level distance=0.4cm, sibling distance=0.35cm,
    every node/.style={draw, circle, inner sep=2pt, minimum size=5pt},
    edge from parent/.style={draw, -}]
\node[fill=blue!15] (root) {} child { node {} child { node[fill=gray!30] {} } child { node[fill=gray!30] {} } child { node[fill=gray!30] {} } child { node[fill=gray!30] {} } };
\end{tikzpicture}
\\
$\tau^{\rm corr}_1$ & $\tau^{\rm corr}_2$ & $\tau^{\rm corr}_3$ & $\tau^{\rm corr}_4$ & $\tau^{\rm corr}_5$ & $\tau^{\rm corr}_6$ & $\tau^{\rm corr}_7$
\\[1em]
\begin{tikzpicture}[baseline=(root.base),
    level distance=0.4cm, sibling distance=0.4cm,
    every node/.style={draw, circle, inner sep=2pt, minimum size=5pt},
    edge from parent/.style={draw, -}]
\node[fill=blue!15] (root) {}
  child { node[fill=gray!30] {} }
  child { node {} child { node[fill=gray!30] {} } child { node[fill=gray!30] {} } child { node[fill=gray!30] {} } };
\end{tikzpicture}
&
\begin{tikzpicture}[baseline=(root.base),
    level distance=0.4cm, sibling distance=0.4cm,
    every node/.style={draw, circle, inner sep=2pt, minimum size=5pt},
    edge from parent/.style={draw, -}]
\node[fill=blue!15] (root) {}
  child { node[fill=gray!30] {} }
  child { node[fill=gray!30] {} }
  child { node {} child { node[fill=gray!30] {} } child { node[fill=gray!30] {} } };
\end{tikzpicture}
&
\begin{tikzpicture}[baseline=(root.base),
    level distance=0.35cm, sibling distance=0.4cm,
    every node/.style={draw, circle, inner sep=2pt, minimum size=5pt},
    edge from parent/.style={draw, -}]
\node[fill=blue!15] (root) {} child { node {}
  child { node[fill=gray!30] {} }
  child { node {} child { node[fill=gray!30] {} } child { node[fill=gray!30] {} } child { node[fill=gray!30] {} } } };
\end{tikzpicture}
&
\begin{tikzpicture}[baseline=(root.base),
    level distance=0.4cm, sibling distance=0.35cm,
    every node/.style={draw, circle, inner sep=2pt, minimum size=5pt},
    edge from parent/.style={draw, -}]
\node[fill=blue!15] (root) {} child { node {}
  child { node[fill=gray!30] {} }
  child { node[fill=gray!30] {} }
  child { node {} child { node[fill=gray!30] {} } child { node[fill=gray!30] {} } } };
\end{tikzpicture}
&
\begin{tikzpicture}[baseline=(root.base),
    level distance=0.35cm, sibling distance=0.4cm,
    every node/.style={draw, circle, inner sep=2pt, minimum size=5pt},
    edge from parent/.style={draw, -}]
\node[fill=blue!15] (root) {}
  child { node[fill=gray!30] {} }
  child { node {}
    child { node[fill=gray!30] {} }
    child { node {} child { node[fill=gray!30] {} } child { node[fill=gray!30] {} } } };
\end{tikzpicture}
&
\begin{tikzpicture}[baseline=(root.base),
    level distance=0.4cm,
    level 1/.style={sibling distance=0.7cm},
    level 2/.style={sibling distance=0.32cm},
    every node/.style={draw, circle, inner sep=2pt, minimum size=5pt},
    edge from parent/.style={draw, -}]
\node[fill=blue!15] (root) {}
  child { node {} child { node[fill=gray!30] {} } child { node[fill=gray!30] {} } }
  child { node {} child { node[fill=gray!30] {} } child { node[fill=gray!30] {} } };
\end{tikzpicture}
&
\begin{tikzpicture}[baseline=(root.base),
    level distance=0.32cm, sibling distance=0.35cm,
    every node/.style={draw, circle, inner sep=2pt, minimum size=5pt},
    edge from parent/.style={draw, -}]
\node[fill=blue!15] (root) {} child { node {}
  child { node[fill=gray!30] {} }
  child { node {}
    child { node[fill=gray!30] {} }
    child { node {} child { node[fill=gray!30] {} } child { node[fill=gray!30] {} } } } };
\end{tikzpicture}
\\
$\tau^{\rm corr}_8$ & $\tau^{\rm corr}_9$ & $\tau^{\rm corr}_{10}$ & $\tau^{\rm corr}_{11}$ & $\tau^{\rm corr}_{12}$ & $\tau^{\rm corr}_{13}$ & $\tau^{\rm corr}_{14}$
\\[1em]
\begin{tikzpicture}[baseline=(root.base),
    level distance=0.35cm,
    level 2/.style={sibling distance=0.7cm},
    level 3/.style={sibling distance=0.32cm},
    every node/.style={draw, circle, inner sep=2pt, minimum size=5pt},
    edge from parent/.style={draw, -}]
\node[fill=blue!15] (root) {} child { node {}
  child { node {} child { node[fill=gray!30] {} } child { node[fill=gray!30] {} } }
  child { node {} child { node[fill=gray!30] {} } child { node[fill=gray!30] {} } } };
\end{tikzpicture}
&
\begin{tikzpicture}[baseline=(root.base),
    level distance=0.4cm, sibling distance=0.35cm,
    every node/.style={draw, circle, inner sep=2pt, minimum size=5pt},
    edge from parent/.style={draw, -}]
\node[fill=blue!15] (root) {} child { node {} child { node {} child { node[fill=gray!30] {} } child { node[fill=gray!30] {} } child { node[fill=gray!30] {} } } };
\end{tikzpicture}
&
\begin{tikzpicture}[baseline=(root.base),
    level distance=0.35cm, sibling distance=0.35cm,
    every node/.style={draw, circle, inner sep=2pt, minimum size=5pt},
    edge from parent/.style={draw, -}]
\node[fill=blue!15] (root) {} child { node {}
  child { node[fill=gray!30] {} }
  child { node[fill=gray!30] {} }
  child { node {} child { node[fill=gray!30] {} } } };
\end{tikzpicture}
&
\begin{tikzpicture}[baseline=(root.base),
    level distance=0.35cm, sibling distance=0.4cm,
    every node/.style={draw, circle, inner sep=2pt, minimum size=5pt},
    edge from parent/.style={draw, -}]
\node[fill=blue!15] (root) {}
  child { node {} child { node[fill=gray!30] {} } }
  child { node {} child { node[fill=gray!30] {} } child { node[fill=gray!30] {} } };
\end{tikzpicture}
&
\begin{tikzpicture}[baseline=(root.base),
    level distance=0.35cm, sibling distance=0.4cm,
    every node/.style={draw, circle, inner sep=2pt, minimum size=5pt},
    edge from parent/.style={draw, -}]
\node[fill=blue!15] (root) {}
  child { node[fill=gray!30] {} }
  child { node {} child { node {} child { node[fill=gray!30] {} } child { node[fill=gray!30] {} } } };
\end{tikzpicture}
&
\begin{tikzpicture}[baseline=(root.base),
    level distance=0.35cm, sibling distance=0.4cm,
    every node/.style={draw, circle, inner sep=2pt, minimum size=5pt},
    edge from parent/.style={draw, -}]
\node[fill=blue!15] (root) {}
  child { node[fill=gray!30] {} }
  child { node {} child { node[fill=gray!30] {} } child { node {} child { node[fill=gray!30] {} } } };
\end{tikzpicture}
&
\begin{tikzpicture}[baseline=(root.base),
    level distance=0.32cm, sibling distance=0.35cm,
    every node/.style={draw, circle, inner sep=2pt, minimum size=5pt},
    edge from parent/.style={draw, -}]
\node[fill=blue!15] (root) {} child { node {} child { node {}
  child { node[fill=gray!30] {} }
  child { node {} child { node[fill=gray!30] {} } child { node[fill=gray!30] {} } } } };
\end{tikzpicture}
\\
$\tau^{\rm corr}_{15}$ & $\tau^{\rm corr}_{16}$ & $\tau^{\rm corr}_{17}$ & $\tau^{\rm corr}_{18}$ & $\tau^{\rm corr}_{19}$ & $\tau^{\rm corr}_{20}$ & $\tau^{\rm corr}_{21}$
\\[1em]
\begin{tikzpicture}[baseline=(root.base),
    level distance=0.32cm,
    level 2/.style={sibling distance=0.6cm},
    level 3/.style={sibling distance=0.32cm},
    every node/.style={draw, circle, inner sep=2pt, minimum size=5pt},
    edge from parent/.style={draw, -}]
\node[fill=blue!15] (root) {} child { node {}
  child { node {} child { node[fill=gray!30] {} } }
  child { node {} child { node[fill=gray!30] {} } child { node[fill=gray!30] {} } } };
\end{tikzpicture}
&
\begin{tikzpicture}[baseline=(root.base),
    level distance=0.32cm, sibling distance=0.35cm,
    every node/.style={draw, circle, inner sep=2pt, minimum size=5pt},
    edge from parent/.style={draw, -}]
\node[fill=blue!15] (root) {} child { node {}
  child { node[fill=gray!30] {} }
  child { node {} child { node {} child { node[fill=gray!30] {} } child { node[fill=gray!30] {} } } } };
\end{tikzpicture}
&
\begin{tikzpicture}[baseline=(root.base),
    level distance=0.32cm, sibling distance=0.32cm,
    every node/.style={draw, circle, inner sep=2pt, minimum size=5pt},
    edge from parent/.style={draw, -}]
\node[fill=blue!15] (root) {} child { node {}
  child { node[fill=gray!30] {} }
  child { node {}
    child { node[fill=gray!30] {} }
    child { node {} child { node[fill=gray!30] {} } } } };
\end{tikzpicture}
&
\begin{tikzpicture}[baseline=(root.base),
    level distance=0.32cm, sibling distance=0.35cm,
    every node/.style={draw, circle, inner sep=2pt, minimum size=5pt},
    edge from parent/.style={draw, -}]
\node[fill=blue!15] (root) {} child { node {} child { node {} child { node {}
  child { node[fill=gray!30] {} }
  child { node[fill=gray!30] {} } } } };
\end{tikzpicture}
&
\begin{tikzpicture}[baseline=(root.base),
    level distance=0.32cm, sibling distance=0.35cm,
    every node/.style={draw, circle, inner sep=2pt, minimum size=5pt},
    edge from parent/.style={draw, -}]
\node[fill=blue!15] (root) {} child { node {}
  child { node[fill=gray!30] {} }
  child { node {} child { node {} child { node[fill=gray!30] {} } } } };
\end{tikzpicture}
&
\begin{tikzpicture}[baseline=(root.base),
    level distance=0.32cm, sibling distance=0.35cm,
    every node/.style={draw, circle, inner sep=2pt, minimum size=5pt},
    edge from parent/.style={draw, -}]
\node[fill=blue!15] (root) {} child { node {} child { node {}
  child { node[fill=gray!30] {} }
  child { node {} child { node[fill=gray!30] {} } } } };
\end{tikzpicture}
&
\begin{tikzpicture}[baseline=(root.base),
    level distance=0.32cm, sibling distance=0.35cm,
    every node/.style={draw, circle, inner sep=2pt, minimum size=5pt},
    edge from parent/.style={draw, -}]
\node[fill=blue!15] (root) {} child { node {}
  child { node {} child { node[fill=gray!30] {} } }
  child { node {} child { node[fill=gray!30] {} } } };
\end{tikzpicture}
\\
$\tau^{\rm corr}_{22}$ & $\tau^{\rm corr}_{23}$ & $\tau^{\rm corr}_{24}$ & $\tau^{\rm corr}_{25}$ & $\tau^{\rm corr}_{26}$ & $\tau^{\rm corr}_{27}$ & $\tau^{\rm corr}_{28}$
\end{tabular}
\caption{The 28 trees in $\mathcal{T}_4$ that correspond to non-linear correction terms in $\psi^{(4)}$.}
\label{fig:trees-q4-corr}
\end{figure}

\subsubsection*{Example: $\kappa_\tau$ and $|\mathrm{Aut}(\tau)|$ for $\tau^{\rm corr}_{15}$}

We illustrate the construction of the kernel $\kappa_\tau$ from Section~\ref{sec:trees} and the recursive computation of $|\mathrm{Aut}(\tau)|$ from Appendix~\ref{app:primer} on the balanced tree $\tau^{\rm corr}_{15}$, in which the root has one child, that child has two children, and each of those grandchildren has two leaf children. Labeling the root by $W_1$, the top internal by $W_2$, the two middle internals by $W_3, W_4$, and the four leaves by $W_5, W_6, W_7, W_8$ (any choice of distinct copies works), bottom-up evaluation gives, at each leaf, $\Lambda\, g(W_j)$; at each middle internal, $\Lambda\, \partial_\eta^2 g(W_j)$ contracted with the values of its two leaf-children; at the top internal, $\Lambda\, \partial_\eta^2 g(W_2)$ contracted with the values of the two middle internals; and at the root, $m_\eta'$ contracted with the value of the top internal. The result is
\[
\kappa_{\tau^{\rm corr}_{15}} \;=\; m_\eta' \,\Lambda\, \partial_\eta^2 g(W_2)\Big[
\Lambda\, \partial_\eta^2 g(W_3)\big[\Lambda g(W_5),\, \Lambda g(W_6)\big],\,
\Lambda\, \partial_\eta^2 g(W_4)\big[\Lambda g(W_7),\, \Lambda g(W_8)\big]\Big].
\]
For $|\mathrm{Aut}(\tau^{\rm corr}_{15})|$, apply the recursion at each node: at each middle internal, the two leaf children form a single isomorphism class with $n = 2$, contributing $2! \cdot 1^2 = 2$, so each middle subtree has $|\mathrm{Aut}| = 2$. At the top internal, the two middle children are isomorphic ($n = 2$, factor $2!$) and each has $|\mathrm{Aut}| = 2$, so the top subtree has $|\mathrm{Aut}| = 2! \cdot 2^2 = 8$. At the root, there is a single child ($n = 1$, factor $1!$), so $|\mathrm{Aut}(\tau^{\rm corr}_{15})| = 1! \cdot 8 = 8$. With $|\tau^{\rm corr}_{15}| = 7$ and $d(\tau^{\rm corr}_{15}) = 4$, the closed form gives $c_{4, \tau^{\rm corr}_{15}} = (-1)^7 \binom{7}{7}/8 = -1/8$.

The same recursion gives the values $|\mathrm{Aut}(\tau^{\rm aff}_5)| = 1$ (chain of length $4$, no symmetry), $|\mathrm{Aut}(\tau^{\rm aff}_{12})| = 4! = 24$ (root with four leaf children, all isomorphic), $|\mathrm{Aut}(\tau^{\rm corr}_7)| = 4! = 24$ ($(1,4)$-star), $|\mathrm{Aut}(\tau^{\rm corr}_9)| = 2! \cdot 2 = 4$ ($(3,2)$-star: two leaf children of the root form one class, the third child has two leaves of its own and so contributes $|\mathrm{Aut}| = 2$), and $|\mathrm{Aut}(\tau^{\rm corr}_{13})| = 2! \cdot 2^2 = 8$ (root with two isomorphic children, each carrying two leaves).

\subsubsection*{Coefficient table}

Tables~\ref{tab:c4-aff} and~\ref{tab:c4-corr} list $(|\tau|, d(\tau), |\mathrm{Aut}(\tau)|, c_{4, \tau})$ for every tree in Figures~\ref{fig:trees-q4-aff} and~\ref{fig:trees-q4-corr}, computed from~\eqref{eq:cq-formula} at $q = 4$.

\begin{table}[ht]
\centering
\renewcommand{\arraystretch}{1.05}
\begin{tabular}{|c|c|c|c|c|}
\hline
$\tau^{\rm aff}_i$ & $|\tau|$ & $d$ & $|\mathrm{Aut}|$ & $c_{4, \tau^{\rm aff}_i}$ \\
\hline
$\tau^{\rm aff}_1$ & 0 & 0 & 1 & $1$ \\
$\tau^{\rm aff}_2$ & 1 & 1 & 1 & $-4$ \\
$\tau^{\rm aff}_3$ & 2 & 2 & 1 & $6$ \\
$\tau^{\rm aff}_4$ & 3 & 3 & 1 & $-4$ \\
$\tau^{\rm aff}_5$ & 4 & 4 & 1 & $1$ \\
$\tau^{\rm aff}_6$ & 2 & 2 & 2 & $3$ \\
$\tau^{\rm aff}_7$ & 3 & 3 & 1 & $-4$ \\
$\tau^{\rm aff}_8$ & 4 & 4 & 1 & $1$ \\
$\tau^{\rm aff}_9$ & 4 & 4 & 2 & $1/2$ \\
$\tau^{\rm aff}_{10}$ & 3 & 3 & 6 & $-2/3$ \\
$\tau^{\rm aff}_{11}$ & 4 & 4 & 2 & $1/2$ \\
$\tau^{\rm aff}_{12}$ & 4 & 4 & 24 & $1/24$ \\
\hline
\end{tabular}
\caption{Closed-form coefficients $c_{4, \tau}$ for the 12 trees in $\mathcal{T}_4$ with no branches.}
\label{tab:c4-aff}
\end{table}

\begin{table}[ht]
\centering
\renewcommand{\arraystretch}{1.05}
\begin{tabular}{|c|c|c|c|c||c|c|c|c|c|}
\hline
$\tau^{\rm corr}_i$ & $|\tau|$ & $d$ & $|\mathrm{Aut}|$ & $c_{4, \tau^{\rm corr}_i}$
& $\tau^{\rm corr}_i$ & $|\tau|$ & $d$ & $|\mathrm{Aut}|$ & $c_{4, \tau^{\rm corr}_i}$ \\
\hline
$\tau^{\rm corr}_1$ & 3 & 2 & 2 & $-5$
& $\tau^{\rm corr}_{15}$ & 7 & 4 & 8 & $-1/8$ \\
$\tau^{\rm corr}_2$ & 4 & 3 & 6 & $5/6$
& $\tau^{\rm corr}_{16}$ & 5 & 4 & 6 & $-1/6$ \\
$\tau^{\rm corr}_3$ & 4 & 3 & 2 & $5/2$
& $\tau^{\rm corr}_{17}$ & 5 & 4 & 2 & $-1/2$ \\
$\tau^{\rm corr}_4$ & 5 & 3 & 2 & $-3$
& $\tau^{\rm corr}_{18}$ & 5 & 4 & 2 & $-1/2$ \\
$\tau^{\rm corr}_5$ & 4 & 3 & 2 & $5/2$
& $\tau^{\rm corr}_{19}$ & 5 & 4 & 2 & $-1/2$ \\
$\tau^{\rm corr}_6$ & 4 & 3 & 1 & $5$
& $\tau^{\rm corr}_{20}$ & 5 & 4 & 1 & $-1$ \\
$\tau^{\rm corr}_7$ & 5 & 4 & 24 & $-1/24$
& $\tau^{\rm corr}_{21}$ & 6 & 4 & 2 & $1/2$ \\
$\tau^{\rm corr}_8$ & 5 & 4 & 6 & $-1/6$
& $\tau^{\rm corr}_{22}$ & 6 & 4 & 2 & $1/2$ \\
$\tau^{\rm corr}_9$ & 5 & 4 & 4 & $-1/4$
& $\tau^{\rm corr}_{23}$ & 6 & 4 & 2 & $1/2$ \\
$\tau^{\rm corr}_{10}$ & 6 & 4 & 6 & $1/6$
& $\tau^{\rm corr}_{24}$ & 6 & 4 & 1 & $1$ \\
$\tau^{\rm corr}_{11}$ & 6 & 4 & 4 & $1/4$
& $\tau^{\rm corr}_{25}$ & 5 & 4 & 2 & $-1/2$ \\
$\tau^{\rm corr}_{12}$ & 6 & 4 & 2 & $1/2$
& $\tau^{\rm corr}_{26}$ & 5 & 4 & 1 & $-1$ \\
$\tau^{\rm corr}_{13}$ & 6 & 4 & 8 & $1/8$
& $\tau^{\rm corr}_{27}$ & 5 & 4 & 1 & $-1$ \\
$\tau^{\rm corr}_{14}$ & 7 & 4 & 2 & $-1/2$
& $\tau^{\rm corr}_{28}$ & 5 & 4 & 2 & $-1/2$ \\
\hline
\end{tabular}
\caption{Closed-form coefficients $c_{4, \tau}$ for the 28 trees in $\mathcal{T}_4$ with at least one branch. Entries are computed from~\eqref{eq:cq-formula} at $q = 4$.}
\label{tab:c4-corr}
\end{table}

\begin{remark}\label{rem:affine-fits-formula}
The 12 affine coefficients of Table~\ref{tab:c4-aff} are exactly the binomial weights of the affine score~\eqref{eq:score} at $q = 4$. This is the special case where every non-root node is a leaf or chain (so $|\tau| = d(\tau)$). The closed form~\eqref{eq:cq-formula} reduces to $c_{q,\tau} = (-1)^{d(\tau)} \binom{q}{d(\tau)}/|\mathrm{Aut}(\tau)|$, which after grouping the ordered chain-length tuples $(k_1, \ldots, k_r)$ in~\eqref{eq:score} into tree classes recovers the affine binomial weights $(-1)^K \binom{q}{K}/|\mathrm{Aut}|$.
\end{remark}

\subsubsection*{Affine moment function}

Reading off the coefficients in Table~\ref{tab:c4-aff} and the kernels from Figure~\ref{fig:trees-q4-aff} via the construction rule of Section~\ref{sec:trees},
\begin{align*}
\psi^{(4)}_{\rm aff}
&= m - 4\, m_\eta' \,\Lambda\, g(W_2) \\
&\quad + 6\, m_\eta' \,\big(\Lambda \,\partial_\eta g(W_3)\big)\, \Lambda\, g(W_2) \\
&\quad - 4\, m_\eta' \,\big(\Lambda \,\partial_\eta g(W_3)\big) \big(\Lambda \,\partial_\eta g(W_4)\big)\, \Lambda\, g(W_2) \\
&\quad + m_\eta' \,\big(\Lambda \,\partial_\eta g(W_3)\big) \big(\Lambda \,\partial_\eta g(W_4)\big) \big(\Lambda \,\partial_\eta g(W_5)\big)\, \Lambda\, g(W_2) \\
&\quad + 3\, \big(\Lambda\, g(W_2)\big)'\, m_{\eta\eta} \,\big(\Lambda\, g(W_3)\big) \\
&\quad - 4\, \big(\Lambda\, g(W_2)\big)'\, m_{\eta\eta} \,\big(\Lambda \,\partial_\eta g(W_4)\big)\, \Lambda\, g(W_3) \\
&\quad + \big(\Lambda\, g(W_2)\big)'\, m_{\eta\eta} \,\big(\Lambda \,\partial_\eta g(W_4)\big) \big(\Lambda \,\partial_\eta g(W_5)\big)\, \Lambda\, g(W_3) \\
&\quad + \tfrac{1}{2}\, \Big(\big(\Lambda \,\partial_\eta g(W_3)\big)\, \Lambda\, g(W_2)\Big)'\, m_{\eta\eta} \,\Big(\big(\Lambda \,\partial_\eta g(W_5)\big)\, \Lambda\, g(W_4)\Big) \\
&\quad - \tfrac{2}{3}\, m_{\eta\eta\eta}\big[\Lambda\, g(W_2),\, \Lambda\, g(W_3),\, \Lambda\, g(W_4)\big] \\
&\quad + \tfrac{1}{2}\, m_{\eta\eta\eta}\big[\Lambda\, g(W_2),\, \Lambda\, g(W_3),\, \big(\Lambda \,\partial_\eta g(W_5)\big) \Lambda\, g(W_4)\big] \\
&\quad + \tfrac{1}{24}\, m_{\eta\eta\eta\eta}\big[\Lambda\, g(W_2),\, \Lambda\, g(W_3),\, \Lambda\, g(W_4),\, \Lambda\, g(W_5)\big].
\end{align*}
The 12 terms correspond to $\tau^{\rm aff}_1, \tau^{\rm aff}_2, \tau^{\rm aff}_3, \tau^{\rm aff}_4, \tau^{\rm aff}_5, \tau^{\rm aff}_6, \tau^{\rm aff}_7, \tau^{\rm aff}_8, \tau^{\rm aff}_9, \tau^{\rm aff}_{10}, \tau^{\rm aff}_{11}, \tau^{\rm aff}_{12}$ in that order. The construction uses $L = 5$ independent copies $W_1, \ldots, W_5$.

\subsubsection*{Full moment function}

Reading off the 28 correction-tree coefficients in Table~\ref{tab:c4-corr} and the kernels from Figure~\ref{fig:trees-q4-corr} via the construction rule of Section~\ref{sec:trees},
\begin{align*}
\psi^{(4)}
&= \psi^{(4)}_{\rm aff} \\
&\quad - 5\, m_\eta'\, \Lambda\, \partial_\eta^2 g(W_2)\big[\Lambda g(W_3),\, \Lambda g(W_4)\big] \\
&\quad + \tfrac{5}{6}\, m_\eta'\, \Lambda\, \partial_\eta^3 g(W_2)\big[\Lambda g(W_3),\, \Lambda g(W_4),\, \Lambda g(W_5)\big] \\
&\quad + \tfrac{5}{2}\, \partial_\eta^2 m(W_1)\big[\Lambda g(W_2),\, \Lambda\, \partial_\eta^2 g(W_3)[\Lambda g(W_4),\, \Lambda g(W_5)]\big] \\
&\quad - 3\, m_\eta'\, \Lambda\, \partial_\eta^2 g(W_2)\big[\Lambda g(W_3),\, \Lambda\, \partial_\eta^2 g(W_4)[\Lambda g(W_5),\, \Lambda g(W_6)]\big] \\
&\quad + \tfrac{5}{2}\, m_\eta'\, \Lambda\, \partial_\eta g(W_2)\, \Lambda\, \partial_\eta^2 g(W_3)\big[\Lambda g(W_4),\, \Lambda g(W_5)\big] \\
&\quad + 5\, m_\eta'\, \Lambda\, \partial_\eta^2 g(W_2)\big[\Lambda\, \partial_\eta g(W_3)\, \Lambda g(W_4),\, \Lambda g(W_5)\big] \\
&\quad - \tfrac{1}{24}\, m_\eta'\, \Lambda\, \partial_\eta^4 g(W_2)\big[\Lambda g(W_3),\, \Lambda g(W_4),\, \Lambda g(W_5),\, \Lambda g(W_6)\big] \\
&\quad - \tfrac{1}{6}\, \partial_\eta^2 m(W_1)\big[\Lambda g(W_2),\, \Lambda\, \partial_\eta^3 g(W_3)[\Lambda g(W_4),\, \Lambda g(W_5),\, \Lambda g(W_6)]\big] \\
&\quad - \tfrac{1}{4}\, \partial_\eta^3 m(W_1)\big[\Lambda g(W_2),\, \Lambda g(W_3),\, \Lambda\, \partial_\eta^2 g(W_4)[\Lambda g(W_5),\, \Lambda g(W_6)]\big] \\
&\quad + \tfrac{1}{6}\, m_\eta'\, \Lambda\, \partial_\eta^2 g(W_2)\big[\Lambda g(W_3),\, \Lambda\, \partial_\eta^3 g(W_4)[\Lambda g(W_5),\, \Lambda g(W_6),\, \Lambda g(W_7)]\big] \\
&\quad + \tfrac{1}{4}\, m_\eta'\, \Lambda\, \partial_\eta^3 g(W_2)\big[\Lambda g(W_3),\, \Lambda g(W_4),\, \Lambda\, \partial_\eta^2 g(W_5)[\Lambda g(W_6),\, \Lambda g(W_7)]\big] \\
&\quad + \tfrac{1}{2}\, \partial_\eta^2 m(W_1)\big[\Lambda g(W_2),\, \Lambda\, \partial_\eta^2 g(W_3)[\Lambda g(W_4),\, \Lambda\, \partial_\eta^2 g(W_5)[\Lambda g(W_6),\, \Lambda g(W_7)]]\big] \\
&\quad + \tfrac{1}{8}\, \partial_\eta^2 m(W_1)\big[\Lambda\, \partial_\eta^2 g(W_2)[\Lambda g(W_3),\, \Lambda g(W_4)],\, \Lambda\, \partial_\eta^2 g(W_5)[\Lambda g(W_6),\, \Lambda g(W_7)]\big] \\
&\quad - \tfrac{1}{2}\, m_\eta'\, \Lambda\, \partial_\eta^2 g(W_2)\big[\Lambda g(W_3),\, \Lambda\, \partial_\eta^2 g(W_4)[\Lambda g(W_5),\, \Lambda\, \partial_\eta^2 g(W_6)[\Lambda g(W_7),\, \Lambda g(W_8)]]\big] \\
&\quad - \tfrac{1}{8}\, m_\eta'\, \Lambda\, \partial_\eta^2 g(W_2)\big[\Lambda\, \partial_\eta^2 g(W_3)[\Lambda g(W_5),\, \Lambda g(W_6)],\, \Lambda\, \partial_\eta^2 g(W_4)[\Lambda g(W_7),\, \Lambda g(W_8)]\big] \\
&\quad - \tfrac{1}{6}\, m_\eta'\, \Lambda\, \partial_\eta g(W_2)\, \Lambda\, \partial_\eta^3 g(W_3)\big[\Lambda g(W_4),\, \Lambda g(W_5),\, \Lambda g(W_6)\big] \\
&\quad - \tfrac{1}{2}\, m_\eta'\, \Lambda\, \partial_\eta^3 g(W_2)\big[\Lambda g(W_3),\, \Lambda g(W_4),\, \Lambda\, \partial_\eta g(W_5)\, \Lambda g(W_6)\big] \\
&\quad - \tfrac{1}{2}\, \partial_\eta^2 m(W_1)\big[\Lambda\, \partial_\eta g(W_2)\, \Lambda g(W_3),\, \Lambda\, \partial_\eta^2 g(W_4)[\Lambda g(W_5),\, \Lambda g(W_6)]\big] \\
&\quad - \tfrac{1}{2}\, \partial_\eta^2 m(W_1)\big[\Lambda g(W_2),\, \Lambda\, \partial_\eta g(W_3)\, \Lambda\, \partial_\eta^2 g(W_4)[\Lambda g(W_5),\, \Lambda g(W_6)]\big] \\
&\quad - \partial_\eta^2 m(W_1)\big[\Lambda g(W_2),\, \Lambda\, \partial_\eta^2 g(W_3)[\Lambda g(W_4),\, \Lambda\, \partial_\eta g(W_5)\, \Lambda g(W_6)]\big] \\
&\quad + \tfrac{1}{2}\, m_\eta'\, \Lambda\, \partial_\eta g(W_2)\, \Lambda\, \partial_\eta^2 g(W_3)\big[\Lambda g(W_4),\, \Lambda\, \partial_\eta^2 g(W_5)[\Lambda g(W_6),\, \Lambda g(W_7)]\big] \\
&\quad + \tfrac{1}{2}\, m_\eta'\, \Lambda\, \partial_\eta^2 g(W_2)\big[\Lambda\, \partial_\eta g(W_3)\, \Lambda g(W_4),\, \Lambda\, \partial_\eta^2 g(W_5)[\Lambda g(W_6),\, \Lambda g(W_7)]\big] \\
&\quad + \tfrac{1}{2}\, m_\eta'\, \Lambda\, \partial_\eta^2 g(W_2)\big[\Lambda g(W_3),\, \Lambda\, \partial_\eta g(W_4)\, \Lambda\, \partial_\eta^2 g(W_5)[\Lambda g(W_6),\, \Lambda g(W_7)]\big] \\
&\quad + m_\eta'\, \Lambda\, \partial_\eta^2 g(W_2)\big[\Lambda g(W_3),\, \Lambda\, \partial_\eta^2 g(W_4)[\Lambda g(W_5),\, \Lambda\, \partial_\eta g(W_6)\, \Lambda g(W_7)]\big] \\
&\quad - \tfrac{1}{2}\, m_\eta'\, \Lambda\, \partial_\eta g(W_2)\, \Lambda\, \partial_\eta g(W_3)\, \Lambda\, \partial_\eta^2 g(W_4)\big[\Lambda g(W_5),\, \Lambda g(W_6)\big] \\
&\quad - m_\eta'\, \Lambda\, \partial_\eta^2 g(W_2)\big[\Lambda g(W_3),\, \Lambda\, \partial_\eta g(W_4)\, \Lambda\, \partial_\eta g(W_5)\, \Lambda g(W_6)\big] \\
&\quad - m_\eta'\, \Lambda\, \partial_\eta g(W_2)\, \Lambda\, \partial_\eta^2 g(W_3)\big[\Lambda g(W_4),\, \Lambda\, \partial_\eta g(W_5)\, \Lambda g(W_6)\big] \\
&\quad - \tfrac{1}{2}\, m_\eta'\, \Lambda\, \partial_\eta^2 g(W_2)\big[\Lambda\, \partial_\eta g(W_3)\, \Lambda g(W_4),\, \Lambda\, \partial_\eta g(W_5)\, \Lambda g(W_6)\big].
\end{align*}
The 28 correction terms correspond to $\tau^{\rm corr}_1, \ldots, \tau^{\rm corr}_{28}$ in that order. The construction uses $L = 8$ independent copies in the worst case ($\tau^{\rm corr}_{14}$ and $\tau^{\rm corr}_{15}$, each with $|\tau| = 7$ non-root nodes plus the root copy $W_1$).

\section{Proofs\label{app:proofs}}

\paragraph{Notation.} Throughout the proofs in this appendix, we write $g(W_i, \eta)$ for $g(W_i, \theta_0, \eta)$ and $m(W_i, \eta)$ for $m(W_i, \theta_0, \eta)$, suppressing the fixed argument $\theta_0$ for brevity.

\medskip

For integers $a, b$ with $b \geq 0$, the binomial coefficient $\binom{a}{b}$  is defined as
\[
\binom{a}{b} \;:=\; \frac{a\,(a-1)\,\cdots\,(a-b+1)}{b!}.
\]
In particular, $\binom{a}{0} = 1$ for all $a \geq 0$ (empty product in the numerator), and $\binom{a}{b} = 0$ whenever $0 \leq a < b$ (the numerator contains a zero factor).

\medskip

In the proofs below we use formal power series in the variables $x$ and $t$ to read off numerical coefficients from algebraic expansions. For example, the geometric series identity $1/(1-x) = \sum_{k \geq 0} x^k$ is used as a symbolic rewriting rule. Since these series serve as combinatorial bookkeeping only, the domain of $x$ and the convergence of the series do not matter. Every expansion is eventually truncated to a finite-degree polynomial. For a formal power series $\varphi(x) = \sum_{k \geq 0} \varphi_k\, x^k$, we write $[x^K]\,\varphi(x) := \varphi_K$ for the coefficient of $x^K$ in $\varphi(x)$.

\subsection{Proof of Theorem~\ref{th:main}\label{app:proof-thm1}}

The proof relies on the following lemma.

\begin{lemma}[Weighted composition sum]
\label{lem:key}
Fix integers $c_1, \ldots, c_r \geq 0$ with $r \geq 1$. Let $r_0$ be the number of indices $s \in \{1, \ldots, r\}$ with $c_s = 0$, and $r_1 = r - r_0$ the number of indices with $c_s \geq 1$. Define
\[
S(c_1, \ldots, c_r) := \sum_{K=r}^{q} (-1)^K \binom{q}{K}
\sum_{\substack{k_1, \ldots, k_r \geq 1 \\ k_1 + \cdots + k_r = K}}
\prod_{s=1}^r (-1)^{c_s} \binom{k_s}{c_s}.
\]
We extend the definition to $r = 0$ by the empty-product convention $S(\emptyset) := 1 = (-1)^0$. Then:
\begin{enumerate}[\rm (i)]
\item If $r \geq 1$, $\sum_{s=1}^r c_s + r \leq q$, and $r_1 \geq 1$, then $S(c_1, \ldots, c_r) = 0$.
\item $S(\underbrace{0, \ldots, 0}_{r}) = (-1)^r$ for every $r \geq 0$.
\end{enumerate}
\end{lemma}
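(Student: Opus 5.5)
The plan is to encode the inner composition sum with generating functions, using the formal power series conventions fixed at the start of this appendix. For a single factor, define
\[
\varphi_{c}(x) := \sum_{k\ge 1} (-1)^{c}\binom{k}{c} x^k .
\]
For $c=0$ this is $x/(1-x)$. For $c\ge 1$ the standard identity $\sum_{k\ge0}\binom{k}{c}x^k = x^c/(1-x)^{c+1}$ gives $\varphi_c(x) = (-1)^c x^c/(1-x)^{c+1}$; the $k=0$ term vanishes since $\binom{0}{c}=0$. The inner sum over compositions $k_1+\cdots+k_r=K$ is then the coefficient $[x^K]\prod_s \varphi_{c_s}(x)$, which equals
\[
[x^K]\,(-1)^{C}\,\frac{x^{C+r_0}}{(1-x)^{C+r}}, \qquad C:=\sum_s c_s .
\]
Note that factors with $c_s=0$ contribute $x$ rather than $x^0$. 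Writing $a:=C+r$ and $e:=C+r_0$, this coefficient is $(-1)^C\binom{K-e+a-1}{a-1}$ for $K\ge e$, and $0$ otherwise.

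Part (ii) is the case $C=0$, $r_0=r$. The inner coefficient is $\binom{K-1}{r-1}$, so
\[
S = \sum_{K=r}^q (-1)^K\binom{q}{K}\binom{K-1}{r-1}.
\]
To evaluate this I would use the identity $\sum_{K}(-1)^K\binom{q}{K}P(K)=0$ for any polynomial $P$ of degree $<q$. The function $K\mapsto\binom{K-1}{r-1}$ is a polynomial of degree $r-1$ that vanishes at $K=1,\dots,r-1$ but equals $(-1)^{r-1}$ at $K=0$. So if $r\le q$, the full alternating sum is zero and removing the $K=0$ term gives $S=(-1)^r$. (For $r=0$ the result holds by convention.) The case $r>q$ is degenerate; it is presumably only needed when $r\le q$.

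Part (i) works the same way. Let $P(K):=(-1)^C\binom{K-e+a-1}{a-1}$, a polynomial in $K$ of degree $a-1=C+r-1\le q-1$ by hypothesis. I need to check that the generalized binomial (as defined in the appendix) agrees with the true coefficient for all $K\in\{0,\dots,q\}$. Its roots are $K=e-a+1,\dots,e-1$, that is $K=r_0-r+1=1-r_1+r_0-r_0,\dots$; more precisely, these are the $a-1$ integers $K\in\{e-a+1,\dots,e-1\}$. Since $e-a=-r_1\le -1$, this range covers every integer $K\in[0,e-1]$. So $P(K)$ equals the true coefficient, which is zero for $K<e$, at every $K\in\{0,\dots,q\}$. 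In particular $P(0)=0$ because $r_1\ge1$ puts $0$ in the root range. Terms with $K<r$ also vanish automatically, since no composition exists there.

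Therefore
\[
S=\sum_{K=0}^q(-1)^K\binom{q}{K}P(K)=0,
\]
because $\deg P<q$, which is the finite-difference identity $\Delta^q P=0$. The main obstacle is the careful bookkeeping of where the generalized binomial vanishes, in particular confirming that $K=0$ is a root exactly when $r_1\ge1$. This is precisely what distinguishes (i) from (ii).
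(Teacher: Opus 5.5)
Your proof is correct. It shares its first half with the paper's proof: the same series $\sum_{k\geq 1}(-1)^c\binom{k}{c}x^k$ turn the inner composition sum into $[x^K]\,(-1)^{C}x^{C+r_0}/(1-x)^{C+r}$.

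The two proofs diverge on the outer alternating sum.

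The paper rewrites $S=(-1)^q[x^q]\bigl((1-x)^q\varphi(x)\bigr)$. It then observes that $(1-x)^q\varphi(x)=(-1)^C x^{C+r_0}(1-x)^{q-C-r}$ is a polynomial of degree $q-r_1$. So the condition $r_1\geq 1$ enters only as a drop in degree, and (ii) is read off from the coefficient of $x^q$.

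You instead write the coefficient explicitly as the polynomial $P(K)=(-1)^C\binom{K-e+a-1}{a-1}$, of degree $C+r-1\leq q-1$, and annihilate it with the $q$-th finite difference. The price is the root-location check that $P$ vanishes on $\{0,\dots,e-1\}$: its roots are $1-r_1,\dots,e-1$, and this is exactly where $r_1\geq 1$ is used. In (ii), the value $(-1)^r$ comes from the single uncovered value $P(0)=(-1)^{r-1}$.

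The two mechanisms are dual, since multiplying a generating function by $(1-x)^q$ is the $q$-th difference on its coefficients. The paper's version avoids the root bookkeeping; yours makes the mechanism more concrete. You should tidy the garbled sentence describing the root range.

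Your caveat on (ii) is well taken. For $r>q$ the sum over $K$ is empty, so $S=0\neq(-1)^r$, and (ii) as stated (``for every $r\geq 0$'') needs $r\leq q$. The paper's proof tacitly uses $q-r\geq 0$ when it reads off $(-1)^{q-r}$. This is harmless, because the proof of Theorem~\ref{th:main} only invokes (ii) for $r\leq j\leq q$.
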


\begin{proof}[\bf Proof]
Throughout the proof, write $\sigma := \sum_{s=1}^r c_s$ for brevity. 
Let $x$ be a formal variable. For each integer $c \geq 0$, define the formal power series $f_c(x) := \sum_{k \geq 1} (-1)^c \binom{k}{c} x^k$. We have $f_0(x) = x/(1-x)$. For $c \geq 1$, differentiating the geometric series $1/(1-x) = \sum_{k \geq 0} x^k$ a total of $c$ times in $x$ and dividing by $c!$ gives $\sum_{k \geq 0} \binom{k}{c} x^k = x^c/(1-x)^{c+1}$, hence $f_c(x) = (-1)^c x^c/(1-x)^{c+1}$.

Set $\varphi(x) := \prod_{s=1}^r f_{c_s}(x)$. Expanding the product, the coefficient of $x^K$ in $\varphi(x)$ is obtained by summing over all ways to choose one term from each factor whose powers add to $K$:
\[
[x^K]\,\varphi(x) \;=\; \sum_{\substack{k_1, \ldots, k_r \geq 1 \\ k_1 + \cdots + k_r = K}}\, \prod_{s = 1}^r\, (-1)^{c_s}\, \binom{k_s}{c_s},
\]
where the constraint $k_s \geq 1$ reflects that each $f_{c_s}$ has no $x^0$ term. This is exactly the inner sum in the definition of $S$. Substituting,
\[
S \;=\; \sum_{K = 0}^{q} (-1)^K \binom{q}{K}\, [x^K]\, \varphi(x),
\]
where the lower limit is extended from $K = r$ to $K = 0$ since $[x^K]\,\varphi(x) = 0$ for $K < r$.

The binomial coefficients $(-1)^K \binom{q}{K}$ are the coefficients of $(1-x)^q = \sum_{j=0}^q (-1)^j \binom{q}{j} x^j$. Expanding the product $(1-x)^q\, \varphi(x)$ in the same way, the coefficient of $x^q$ is
\[
[x^q]\,\bigl((1-x)^q\, \varphi(x)\bigr) \;=\; \sum_{j=0}^q (-1)^j \binom{q}{j}\, [x^{q-j}]\,\varphi(x) \;=\; (-1)^q\, S,
\]
where the last equality follows from the substitution $K := q - j$ (using $\binom{q}{q - K} = \binom{q}{K}$ and $(-1)^{q - K} = (-1)^q (-1)^K$). Hence $S = (-1)^q\, [x^q]\,\bigl((1-x)^q\, \varphi(x)\bigr)$.

Multiplying out the closed forms for $f_0$ and $f_{c_s}$ ($c_s \geq 1$), we have $\varphi(x) = (-1)^\sigma\, x^{\sigma + r_0}/(1-x)^{\sigma + r}$, so
\begin{equation}
(1-x)^q\,\varphi(x) \;=\; (-1)^\sigma\, x^{\sigma + r_0}\,(1-x)^{q - \sigma - r}.
\label{eq:keypoly}
\end{equation}
Under the condition $\sigma + r \leq q$ of part (i), the exponent $q - \sigma - r$ is nonnegative, so the right-hand side of~\eqref{eq:keypoly} is a polynomial in $x$ of degree $(\sigma + r_0) + (q - \sigma - r) = q - r_1$.

For part (i): if $r_1 \geq 1$, the polynomial in~\eqref{eq:keypoly} has degree at most $q - 1$, so $[x^q]$ of it is zero, and $S = 0$.

For part (ii): if all $c_s = 0$, then $\sigma = 0$, $r_0 = r$, $r_1 = 0$, and~\eqref{eq:keypoly} reduces to $x^r (1-x)^{q-r}$. Its coefficient of $x^q$ is $(-1)^{q-r}$, so $S = (-1)^q \cdot (-1)^{q-r} = (-1)^r$.
\end{proof}

\begin{proof}[\bf Proof of Theorem~\ref{th:main}]
Write $\delta := \eta - \eta_0 \in \mathbb{R}^{d_\eta}$ and $B(\lambda) := I_{d_\eta} - \Lambda(\lambda) J_0$, so $B(\lambda_0) = 0$. Define $A(\lambda) := \Lambda(\lambda) J_0 = I_{d_\eta} - B(\lambda)$. Throughout the proof, write $K := \sum_{s=1}^r k_s$.

Since $g(W_i, \eta)$ is affine in $\eta$, we have $\mathbb{E}[g(W_i, \eta_0 + \delta)] = J_0 \delta$ and $\mathbb{E}[\partial_\eta g(W_i, \eta)] = J_0$ (constant in $\eta$). Taking expectations in~\eqref{eq:score} and using independence across the copies,
\begin{equation}
\Psi^{(q)}(\eta, \lambda) = \sum_{r=0}^{q} \frac{1}{r!} \sum_{\substack{k_1, \ldots, k_r \geq 1 \\ k_1 + \cdots + k_r \leq q}} (-1)^K \binom{q}{K}
\left\langle \mathbb{E}[\partial_\eta^r m(W_i, \eta)],\, \bigotimes_{s=1}^r A(\lambda)^{k_s} \delta \right\rangle.
\label{eq:Psi-expanded}
\end{equation}
We expand each input as $A(\lambda)^{k_s} \delta = (I - B)^{k_s} \delta = \sum_{c_s = 0}^{k_s} (-1)^{c_s} \binom{k_s}{c_s} B^{c_s} \delta$, and Taylor-expand $\mathbb{E}[\partial_\eta^r m(W_i, \eta)]$ around $\eta_0$ in $\delta$ to order $q - r$, with a remainder of order $\|\delta\|^{q - r + 1}$ (justified by Assumption~\ref{ass:affine}(ii) and~(iv), which give $(q+1)$-times continuous differentiability of $m$ in $\eta$ on a neighborhood of $\eta_0$ together with the integrable envelopes that allow expectation and differentiation to be interchanged). The polynomial part of $\Psi^{(q)}$ is a finite sum of multilinear contractions
\[
\bigl\langle\, M_j,\; B^{b_1}\delta \otimes \cdots \otimes B^{b_j}\delta\,\bigr\rangle, \qquad M_j := \mathbb{E}[\partial_\eta^j m(W_i, \eta_0)],
\]
with $j \geq r$. Of the $j$ exponents $b_s$, exactly $r$ equal the binomial parameters $c_s$, and the other $j - r$ are zero (from the Taylor expansion). The Taylor remainder contributes only terms of total $(\delta, B)$-degree at least $q + 1$, which therefore have all partial derivatives of total order $\leq q$ vanishing at $(\eta_0, \lambda_0)$ (since $\delta = 0$ and $B = 0$ at the truth). It thus suffices to check that every monomial of the polynomial part with total degree $1, \ldots, q$ has zero coefficient.

Fix one such monomial. Its total degree, counting $\delta$ and $B$ factors together, equals $\sum_s b_s + j$, and we focus on monomials whose total degree lies between $1$ and $q$.

If some $b_s \geq 1$, those nonzero entries must all come from the binomial expansion (Taylor contributes only $b_s = 0$). Collecting in $\Psi^{(q)}$ all configurations that produce this monomial, the coefficient is a finite linear combination of values $S(c_1, \ldots, c_r)$ with combinatorial weights that depend on $j$, $r$, and the placement of the Taylor-generated zero-exponent slots, but not on $(k_1, \ldots, k_r)$. Each binomial-expansion factor adds one $\delta$ to the count $j$, so $r \leq j$. Combined with the total-degree bound, this gives $\sum_s c_s + r \leq q$. With $r_1 \geq 1$ in the notation of Lemma~\ref{lem:key}, part (i) of the lemma gives $S(c_1, \ldots, c_r) = 0$, so the coefficient vanishes.

If all $b_s = 0$, the monomial is $\langle M_j, \delta^{\otimes j}\rangle$, and $r$ ranges over $0, 1, \ldots, j$. At each fixed $r$, the contribution collects a factor $1/r!$ from~\eqref{eq:Psi-expanded} and $1/(j-r)!$ from the order-$(j-r)$ Taylor expansion, and Lemma~\ref{lem:key}(ii) sums the remaining factors to $(-1)^r$. Combining,
\[
\sum_{r=0}^j \frac{(-1)^r}{r!\,(j-r)!} \;=\; \frac{1}{j!}\sum_{r=0}^j \binom{j}{r} (-1)^r \;=\; \frac{(1-1)^j}{j!} \;=\; 0,
\]
for all $j \geq 1$.

Hence every monomial of total $(\delta, B)$-degree between $1$ and $q$ has zero coefficient in $\Psi^{(q)}(\eta, \lambda)$. Since $B(\lambda_0) = 0$ and $\delta = 0$ at $\eta = \eta_0$, this gives $\partial_\eta^\alpha \partial_\lambda^\beta \Psi^{(q)}(\eta_0, \lambda_0) = 0$ for all $1 \leq |\alpha| + |\beta| \leq q$.

Finally, evaluating~\eqref{eq:Psi-expanded} at $\eta = \eta_0$, every term with $r \geq 1$ contains a factor $A(\lambda)^{k_s} \delta = 0$, so only the $r = 0$ term survives, giving $\Psi^{(q)}(\eta_0, \lambda_0) = \mathbb{E}[m(W_i, \eta_0)] = 0$.
\end{proof}

\subsection{Proof of Theorem~\ref{th:nonlinear}\label{app:proof-thm2}}

\subsubsection{The case without $W_i$}

We initially consider the case without the data $W_i$, where neither $g$ nor $m$ nor $\kappa_\tau$ nor $\psi^{(q)}$ has any $W_i$ dependence. The general case is reduced to this one below. As elsewhere in Appendix~\ref{app:proofs}, the dependence on $\theta$ is suppressed throughout.
Thus, we are given two deterministic functions $g: \mathbb{R}^{d_\eta} \to \mathbb{R}^{d_g}$ and $m: \mathbb{R}^{d_\eta} \to \mathbb{R}$, both $(q+1)$-times continuously differentiable on a neighborhood of $\eta_0$, and satisfying $g(\eta_0) = 0$, $m(\eta_0) = 0$, and $\partial_\eta g(\eta_0)$ of full column rank. The matrix $\Lambda(\lambda) \in \mathbb{R}^{d_\eta \times d_g}$ and the value $\lambda_0$ are as in Assumption~\ref{ass:jacobian-inverse}, with $\Lambda_0 := \Lambda(\lambda_0)$ satisfying $\Lambda_0\, \partial_\eta g(\eta_0) = I_{d_\eta}$. Define the primitives
\[
A(\eta, \lambda) \;:=\; \Lambda(\lambda)\, \partial_\eta g(\eta), \quad G_p(\eta, \lambda) \;:=\; \Lambda(\lambda)\, \partial_\eta^p g(\eta) \;\;(p \geq 2), \quad E(\eta, \lambda) \;:=\; I_{d_\eta} - A(\eta, \lambda).
\]
Each $G_p(\eta, \lambda)$ is a symmetric $p$-multilinear map $(\mathbb{R}^{d_\eta})^p \to \mathbb{R}^{d_\eta}$. We write $G_p[v_1, \ldots, v_p]$ for its evaluation on $p$ inputs $v_1, \ldots, v_p \in \mathbb{R}^{d_\eta}$, and $G_p[v^{\otimes p}] := G_p[v, v, \ldots, v]$ when all $p$ inputs are equal. The same bracket notation is used for $\partial_\eta^r m(\eta)$, which is a symmetric $r$-multilinear map $(\mathbb{R}^{d_\eta})^r \to \mathbb{R}$.

Remember that $\tau$ is a rooted tree. The tree kernel of $\tau$ is built bottom-up by the rules of Section~\ref{sec:trees}, with all $W$-arguments removed. Each leaf carries $\Lambda(\lambda)\, g(\eta)$, each chain node carries $A(\eta, \lambda)$, each branch with $p \geq 2$ children carries $G_p(\eta, \lambda)$, the root with $r$ children carries $\partial_\eta^r m(\eta)$, and the contraction is along parent--child edges. We denote the resulting scalar by $\kappa_\tau(\eta, \lambda)$. With the closed-form coefficient
\[
c_{q,\tau} \;=\; \frac{(-1)^{|\tau|}}{|\mathrm{Aut}(\tau)|}\, \binom{q + |\tau| - d(\tau)}{|\tau|}
\]
and support set $\mathcal{T}_q := \{\tau : d(\tau) \leq q\}$, the moment function is
\[
\psi^{(q)}(\eta, \lambda) \;:=\; \sum_{\tau \in \mathcal{T}_q}\, c_{q,\tau}\, \kappa_\tau(\eta, \lambda).
\]
The aim is to show
\begin{equation}
\partial_\eta^\alpha\, \partial_\lambda^\beta\, \psi^{(q)}(\eta, \lambda)\Big|_{(\eta_0, \lambda_0)} \;=\; 0,
\label{eq:joint-orth-det}
\end{equation}
for all $0 \leq |\alpha| + |\beta| \leq q$.
Write $\rho := \|\eta - \eta_0\| + \|\lambda - \lambda_0\|$, where $\|\cdot\|$ denotes the Euclidean vector norm on $\mathbb{R}^{d_\eta}$ and $\mathbb{R}^{d_\lambda}$, respectively. Consider $\rho$ on a neighborhood of $(\eta_0, \lambda_0)$. The factors $\Lambda(\lambda)\, g(\eta)$ and $E(\eta, \lambda)$ are both $O(\rho)$.\footnote{For $\Lambda g$ we have $\Lambda(\lambda)\, g(\eta_0)  = 0$ for every $\lambda$ since $g(\eta_0) = 0$, so $\Lambda(\lambda)\, g(\eta) = O(\|\eta - \eta_0\|)$ uniformly in $\lambda$ on a neighborhood of $\lambda_0$, and in particular $O(\rho)$. For $E$ we have $E(\eta_0, \lambda_0) = I - \Lambda_0\, \partial_\eta g(\eta_0) = 0$ by Assumption~\ref{ass:jacobian-inverse}, so $E(\eta, \lambda) = O(\rho)$.} The factors $G_p$ and the derivatives $\partial_\eta^r m$ are bounded on a neighborhood of $(\eta_0, \lambda_0)$.

\medskip\noindent\textbf{Main idea of the proof.}
\medskip

\noindent
We consider the formal Lagrange--B\"urmann inverse $\delta_{\rm LB}(\varepsilon; \eta, \lambda)$, which is defined as the unique formal-power-series solution in $\varepsilon$ at $\varepsilon = 0$, with $\delta_{\rm LB} = 0$ at $\varepsilon = 0$, of the implicit equation
\begin{equation}
\Lambda(\lambda)\, g(\eta + \delta_{\rm LB}) \;=\; \Lambda(\lambda)\, g(\eta) - \varepsilon.
\label{eq:LB-def}
\end{equation}
Setting $\varepsilon = \Lambda(\lambda) g(\eta)$ in~\eqref{eq:LB-def} makes the right-hand side equal to $0$, so $\Lambda(\lambda) g(\eta + \delta_{\rm LB}) = 0$. By the inverse function theorem applied at $\eta_0$ (where $\Lambda_0\, \partial_\eta g(\eta_0) = I$ is invertible by Assumption~\ref{ass:jacobian-inverse}), $\eta_0$ is the unique solution of $\Lambda(\lambda) g(\cdot) = 0$ near $\eta_0$, so $\eta + \delta_{\rm LB} = \eta_0$. Substituting into $m$ gives the formal-series identity
\begin{equation}
m\bigl(\eta + \delta_{\rm LB}\bigl(\Lambda(\lambda) g(\eta);\, \eta, \lambda\bigr)\bigr) \;=\; \underbrace{m(\eta_0)}_{=0},
\label{eq:main-idea-identity}
\end{equation}
for all $(\eta, \lambda)$.
Equivalently, defining
\begin{equation}
\Phi(\varepsilon; \eta, \lambda) \;:=\; m\bigl(\eta + \delta_{\rm LB}(\varepsilon; \eta, \lambda)\bigr),
\label{eq:Phi-def}
\end{equation}
the identity~\eqref{eq:main-idea-identity} reads $\Phi\bigl(\Lambda(\lambda) g(\eta);\, \eta, \lambda\bigr) = 0$, for all $(\eta, \lambda)$.

The main idea of the proof is the following. The function $\psi^{(q)}(\eta, \lambda)$ is the order-$q$ Taylor expansion of $\Phi(\varepsilon; \eta, \lambda)$ in the variable $\varepsilon \in \mathbb{R}^{d_\eta}$ at $\varepsilon = 0$, evaluated at $\varepsilon = \Lambda(\lambda) g(\eta)$. Since $\Phi(\Lambda g) = 0$ exactly and $\Lambda(\lambda) g(\eta) = O(\rho)$ near $(\eta_0, \lambda_0)$, the Taylor remainder is $O(\rho^{q+1})$. Hence $\psi^{(q)}(\eta, \lambda) = O(\rho^{q+1})$ near $(\eta_0, \lambda_0)$, which implies the orthogonality~\eqref{eq:joint-orth-det} (a smooth function with this bound has all partial derivatives of total order $\leq q$ vanishing at the origin).

The Taylor expansion of $\Phi$ in $\varepsilon$ has a tree structure that yields the closed form for $\psi^{(q)}$. Iterating the L--B equation~\eqref{eq:LB-def} expresses $\delta_{\rm LB}$ as a sum indexed by rooted trees (a $B$-series, see Appendix~\ref{app:primer} and \citealp{HairerLubichWanner2006}, Ch.~III.1), with leaves carrying $\varepsilon$ and edges carrying the matrix inverse $A^{-1} = (I - E)^{-1}$ (where $E := I - \Lambda \partial_\eta g$, also $O(\rho)$ near $(\eta_0, \lambda_0)$). Substituting into $m(\eta + \delta_{\rm LB})$, expanding $A^{-1}$ in powers of $E$, and collecting all terms of order $\leq q$ in $\rho$ gives the rooted-tree support set $\mathcal{T}_q$ and the closed-form coefficients $c_{q,\tau}$, with the sign $(-1)^{|\tau|}$ from inverting the linearization once per non-root edge, the factor $1/|\mathrm{Aut}(\tau)|$ from grouping orderings of children by the automorphism group of $\tau$, and the binomial $\binom{q + |\tau| - d(\tau)}{|\tau|}$ from counting how the truncation budget is distributed across the tree's edges (with $d(\tau)$ leaf-or-chain edges and $|\tau| - d(\tau)$ branch edges).

\medskip\noindent\textbf{Defining the function $F_q$.}
\medskip

\noindent
Throughout this part of the proof, the formal-power-series manipulations serve as bookkeeping for the order-$q$ Taylor expansion of $\psi^{(q)}$ in $(\eta - \eta_0, \lambda - \lambda_0)$ at the truth. Under the smoothness assumptions, this Taylor polynomial is a finite, well-defined object, and the conclusion $\psi^{(q)} = O(\rho^{q+1})$ follows from Taylor's theorem with remainder applied to the $C^{q+1}$ function $\psi^{(q)}$ once we show that the Taylor polynomial vanishes identically.

Although several steps below are written as infinite series (the Neumann expansion $A^{-1}_t = \sum_{j \geq 0} t^j E^j$, the $B$-series of Lemma~\ref{lem:bseries-thm2}, and so on), none of the arguments require these series to converge as analytic functions. At every step we keep only the terms of total $(\varepsilon, t)$-degree at most $q + 1$, which form a finite polynomial in the smooth-function primitives $A, E, G_p, \partial_\eta^r m$. The omitted higher-degree terms are absorbed into a Taylor remainder that is $O(\rho^{q+1})$ by Taylor's theorem applied to the underlying $C^{q+1}$ functions. The proof is therefore valid under $C^{q+1}$ smoothness; the formal-series language only streamlines the bookkeeping of the finite-order computation.

Introduce a formal variable $t$, and let $\delta(\varepsilon, t; \eta, \lambda) \in \mathbb{R}^{d_\eta}$ be the unique formal power series\footnote{The implicit function theorem applied to~\eqref{eq:LB-t} at the base point $(\varepsilon, t, \eta, \lambda) = (0, 0, \eta_0, \lambda_0)$ guarantees existence and uniqueness of $\delta$, both as a formal power series and as a $C^{q+1}$ function on a neighborhood of the base point. The linearization of the left-hand side in $\delta$ is $(1 - t)\, E + A = I - tE$, which evaluates to $I_{d_\eta}$ at the base point since $E(\eta_0, \lambda_0) = 0$, and remains invertible for $(\eta, \lambda)$ in a neighborhood of $(\eta_0, \lambda_0)$ and $t$ in a neighborhood of $[0, 1]$.} in $(\varepsilon, t)$ with $\delta = 0$ at $\varepsilon = 0$ and at $t = 0$ that satisfies
\begin{equation}
(1 - t)\, E(\eta, \lambda)\, \delta \;+\; \Lambda(\lambda)\bigl[g(\eta + \delta) - g(\eta)\bigr] \;=\; -t\, \varepsilon.
\label{eq:LB-t}
\end{equation}
The factors $\Lambda(\lambda)\, g(\eta)$ and $E(\eta, \lambda)$ enter~\eqref{eq:LB-t} symmetrically. The factor $\Lambda(\lambda)\, g(\eta)$ enters via the right-hand side $-t\,\varepsilon$ (we set $\varepsilon = \Lambda(\lambda)\, g(\eta)$ below), and $E$ enters via the linear-in-$\delta$ term $(1 - t)\, E\, \delta$ on the left-hand side. At $t = 1$, equation~\eqref{eq:LB-t} reduces to $\Lambda(\lambda)\, g(\eta + \delta) = \Lambda(\lambda)\, g(\eta) - \varepsilon$. Setting $\varepsilon = \Lambda(\lambda)\, g(\eta)$ gives $\Lambda(\lambda)\, g(\eta + \delta) = 0$, with unique formal solution $\delta = \eta_0 - \eta$ (the only solution of $\Lambda(\lambda) g(\cdot) = 0$ near $\eta_0$, by the inverse function theorem applied at $\eta_0$ where $\Lambda_0\, \partial_\eta g(\eta_0) = I$ is invertible). Substituting into $m(\eta + \delta)$ at $t = 1$,
\begin{equation}
m\bigl(\eta + \delta(\Lambda(\lambda) g(\eta),\, 1;\, \eta, \lambda)\bigr) \;=\; m(\eta_0).
\label{eq:formal-truth}
\end{equation}
At $t = 0$, equation~\eqref{eq:LB-t} reduces to $E\delta + \Lambda[g(\eta + \delta) - g(\eta)] = 0$. Linearizing the left-hand side in $\delta$ gives $(E + A)\delta = \delta$, which forces $\delta = 0$.

For computation, expand $\Lambda[g(\eta + \delta) - g(\eta)] = A\, \delta + \sum_{p \geq 2} (1/p!)\, G_p[\delta^{\otimes p}]$ in~\eqref{eq:LB-t} by Taylor's formula. Using $A + (1 - t)\, E = I - tE$, the equation rearranges to
\begin{equation}
(I - tE)\, \delta \;=\; -t\, \varepsilon \;-\; \sum_{p \geq 2} \frac{1}{p!}\, G_p\bigl[\delta^{\otimes p}\bigr].
\label{eq:LB-t-Neumann}
\end{equation}
Define the order-$q$ partial sum
\begin{equation}
F_q(\varepsilon; \eta, \lambda) \;:=\; [t^q]\, \frac{m\bigl(\eta + \delta(\varepsilon, t; \eta, \lambda)\bigr)}{1 - t},
\label{eq:Fq-def}
\end{equation}
where $[t^q]\, \varphi(t)$ denotes the coefficient of $t^q$ in the formal power series $\varphi(t)$. Multiplying any series $\sum_k x_k\, t^k$ by $1/(1 - t) = \sum_{n \geq 0} t^n$ converts the coefficient of $t^q$ in the product into the partial sum $\sum_{k = 0}^q x_k$ of the original series. Hence
\begin{equation}
F_q(\varepsilon; \eta, \lambda) \;=\; \sum_{k = 0}^{q} [t^k]\, m\bigl(\eta + \delta(\varepsilon, t; \eta, \lambda)\bigr).
\label{eq:Fq-partial-sum}
\end{equation}
By Taylor's theorem in $t$ at $t = 0$ to order $q$ with remainder, this partial sum is the order-$q$ Taylor polynomial of the $C^{q+1}$ function $t \mapsto m(\eta + \delta(\varepsilon, t; \eta, \lambda))$ evaluated at $t = 1$.

To prove the Neyman orthogonality property~\eqref{eq:joint-orth-det} we will establish the two facts
\begin{equation}
F_q\bigl(\Lambda(\lambda) g(\eta);\, \eta, \lambda\bigr) \;=\; \psi^{(q)}(\eta, \lambda)
\label{eq:cq-equals-Fq}
\end{equation}
and
\begin{equation}
F_q\bigl(\Lambda(\lambda) g(\eta);\, \eta, \lambda\bigr) \;=\; O(\rho^{q+1}),
\label{eq:Fq-tail-bound}
\end{equation}
on a neighborhood of $(\eta_0, \lambda_0)$.
Together,~\eqref{eq:cq-equals-Fq} and~\eqref{eq:Fq-tail-bound} give $\psi^{(q)}(\eta, \lambda) = O(\rho^{q+1})$ on the same neighborhood, and a smooth function with this bound near the origin has all partial derivatives of total order $\leq q$ vanishing at the origin. This is~\eqref{eq:joint-orth-det}.

\medskip\noindent\textbf{Closed-form expression for $F_q$.}
\medskip

\noindent
So far $F_q$ has been defined only implicitly via the coefficient extraction~\eqref{eq:Fq-def}. We now derive an explicit closed-form expression for $F_q$ as a sum over rooted trees, from which~\eqref{eq:cq-equals-Fq} is read off directly. The remainder bound~\eqref{eq:Fq-tail-bound} is established in the part that follows.

Iterating~\eqref{eq:LB-t-Neumann} expands $\delta$, and hence $m(\eta + \delta)$, as a sum indexed by rooted trees (a \emph{$B$-series}, see Appendix~\ref{app:primer} and \citealp{HairerLubichWanner2006}, Ch.~III.1).

\begin{lemma}[Tree expansion of $m(\eta + \delta)$]
\label{lem:bseries-thm2}
As a formal power series in $\varepsilon$, $t$, and the primitives $A^{-1}_t := (I - tE)^{-1}$, $G_p$ ($p \geq 2$), $\partial_\eta^r m$,
\begin{equation}
m\bigl(\eta + \delta(\varepsilon, t; \eta, \lambda)\bigr) \;=\; \sum_{\tau}\, \frac{(-1)^{|\tau|}\, t^{\ell(\tau)}}{|\mathrm{Aut}(\tau)|}\, V_{\tau}(\varepsilon, t; \eta, \lambda),
\label{eq:bseries-thm2}
\end{equation}
where the sum is over rooted trees $\tau$ in which the root has $r \geq 0$ children and every non-root node has either $0$ children (a leaf) or $\geq 2$ children (a branch). The elementary differential $V_\tau(\varepsilon, t; \eta, \lambda)$ is built by the same bottom-up tree-contraction rule as the kernel $\kappa_\tau$ at the start of Section~\ref{app:proof-thm2}, with two changes. Leaves carry $\varepsilon$ in place of $\Lambda g$. And the value at each non-root node is left-multiplied by the matrix $A^{-1}_t = (I - tE)^{-1}$ before being passed up to its parent. The branch and root rules are unchanged: a branch with $p$ children carrying vectors $v_1, \ldots, v_p$ evaluates to $G_p[v_1, \ldots, v_p]$, and the root with $r$ children carrying vectors $v_1, \ldots, v_r$ evaluates to $\partial_\eta^r m\,[v_1, \ldots, v_r]$. Here $|\tau|$ is the number of non-root nodes of $\tau$ and $\ell(\tau)$ is the number of leaves.\footnote{For a concrete example, take $\tau$ to be the tree where the root has one child which is itself a branch with two leaves. Then $|\tau| = 3$ (one branch and two leaves), $\ell(\tau) = 2$, $|\mathrm{Aut}(\tau)| = 2$ (the two leaves are isomorphic), and
\[
V_\tau(\varepsilon, t; \eta, \lambda) \;=\; \partial_\eta m\bigl[A^{-1}_t\, G_2[A^{-1}_t \varepsilon,\, A^{-1}_t \varepsilon]\bigr].
\]
Reading bottom-up: each leaf produces $\varepsilon$, multiplied by $A^{-1}_t$ on its parent edge; the branch contracts $G_2$ with the two child values $A^{-1}_t \varepsilon$, then is itself multiplied by $A^{-1}_t$ on its own parent edge; the root contracts $\partial_\eta m$ with the resulting vector.}
\end{lemma}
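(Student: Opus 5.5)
The plan is to solve the fixed-point form of~\eqref{eq:LB-t-Neumann} by Picard iteration and to index the resulting terms by trees, in the standard $B$-series way. Left-multiplying~\eqref{eq:LB-t-Neumann} by $A^{-1}_t=(I-tE)^{-1}$ gives
\[
\delta \;=\; -A^{-1}_t\Bigl(t\,\varepsilon+\sum_{p\ge 2}\tfrac{1}{p!}\,G_p[\delta^{\otimes p}]\Bigr).
\]
I first claim that $\delta=\sum_{\sigma}\frac{(-1)^{|\sigma|_{\rm all}}\,t^{\ell(\sigma)}}{|\mathrm{Aut}(\sigma)|}\,U_\sigma$. Here $\sigma$ ranges over rooted trees in which every node has either $0$ children or at least $2$ children, and $|\sigma|_{\rm all}$ counts all nodes of $\sigma$. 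The quantity $U_\sigma$ is built by the stated rule: leaves give $A^{-1}_t\varepsilon$, and a node with $p$ children gives $A^{-1}_tG_p[\cdot]$ applied to its children's values. This is precisely the value passed upward from a non-root node in $V_\tau$.

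I prove the claim by induction on total degree in $\varepsilon$. This is well founded, because each tree with $\ell$ leaves contributes a term homogeneous of degree $\ell$ in $\varepsilon$, and there are finitely many such trees for each $\ell$ (a node of degree $\ge 2$ forces fewer than $\ell$ internal nodes). Uniqueness of the formal solution, which holds by the footnote to~\eqref{eq:LB-t}, reduces the claim to checking that the ansatz satisfies the fixed-point equation.

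To check this, substitute the ansatz into $G_p[\delta^{\otimes p}]$ and expand by multilinearity. This gives a sum over ordered $p$-tuples of trees $(\sigma_1,\dots,\sigma_p)$. Grouping the tuples by the multiset of isomorphism classes, suppose a class $\sigma_i$ occurs $n_i$ times. That multiset arises from $p!/\prod n_i!$ ordered tuples. Combined with the prefactor $1/p!$ and the product $\prod_j 1/|\mathrm{Aut}(\sigma_j)|$, this yields $1/\prod_i n_i!\,|\mathrm{Aut}(\sigma_i)|^{n_i}$. By the recursion in Appendix~\ref{app:primer}, this equals $1/|\mathrm{Aut}(\sigma)|$ for the tree $\sigma$ whose root has these children.

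The remaining factors also match. The sign is the product of the children's signs times the extra $-1$ from the leading minus, so it is $(-1)^{|\sigma|_{\rm all}}$. The power of $t$ is the sum of the children's leaf counts, which is $\ell(\sigma)$. The leaf term $-tA^{-1}_t\varepsilon$ corresponds to the single-node tree. So the right-hand side reproduces the ansatz, and the claim holds.

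Next, Taylor-expand $m(\eta+\delta)=\sum_{r\ge0}\frac1{r!}\partial_\eta^r m[\delta^{\otimes r}]$ and insert the expansion of $\delta$. The same grouping argument applies at the root, with $\partial^r_\eta m$ in place of $G_p$ and no leading minus sign. The root children may be leaves or branches, which gives exactly the tree class stated in Lemma~\ref{lem:bseries-thm2}. The root itself is not counted in the sign, so the sign is $(-1)^{|\tau|}$ with $|\tau|$ the number of non-root nodes. The factor $1/|\mathrm{Aut}(\tau)|$ comes from the recursion at the root, and $t^{\ell(\tau)}$ from the leaf counts. The product of $A^{-1}_t$ factors, one on each non-root node's outgoing edge, gives $V_\tau$. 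This yields~\eqref{eq:bseries-thm2}.

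I expect the main obstacle to be making the symmetry-factor bookkeeping rigorous. The difficulty is that isomorphic subtrees must be identified, but the multilinear contractions are formally indexed by ordered tuples. I would handle this carefully through the orbit-counting identity above, using the symmetry of $G_p$ and $\partial^r_\eta m$ to justify that permuting the children leaves the value unchanged.

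A second point needs checking: the formal manipulations should be legitimate at each fixed degree. I would address this by truncating everything at total $(\varepsilon,t)$-degree $q+1$, as in the paper's remarks on bookkeeping, so that each identity becomes a finite polynomial identity.
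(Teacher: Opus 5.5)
Your proof is correct. It reaches the lemma by a somewhat different organization than the paper's.

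The paper iterates the fixed-point equation forward and first produces an expansion over \emph{ordered} trees, in which each ordered tree carries weight $\frac{1}{r!}\prod_v \frac{1}{p_v!}$. It then collapses to unordered trees in a single global step. That step asserts, via a brief orbit--stabilizer appeal, that a fixed $\tau$ has exactly $r!\prod_v p_v!/|\mathrm{Aut}(\tau)|$ ordered representatives.

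You instead posit the unordered expansion for $\delta$ directly, check that it satisfies the fixed-point equation, and conclude by uniqueness of the formal solution. Your symmetry-factor bookkeeping is then purely local, one level at a time. It uses only the multinomial count $p!/\prod_i n_i!$ and the recursion $|\mathrm{Aut}(\sigma)|=\prod_i n_i!\,|\mathrm{Aut}(\sigma_i)|^{n_i}$ from Appendix~\ref{app:primer}. The root step is the same with $\partial_\eta^r m$ in place of $G_p$ and no leading minus sign.

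What each route buys:
\begin{itemize}
\item The paper's route is constructive. It needs neither a guessed answer nor a uniqueness argument, but it does need the global count of plane embeddings of $\tau$.
\item Your route in effect proves that global count by induction through the $|\mathrm{Aut}|$ recursion. The price is that you must know the answer in advance and invoke uniqueness.
\end{itemize}

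Uniqueness is immediate in your setting. Since every $G_p$ has $p\ge 2$ and $\delta$ has no $\varepsilon$-constant term, the degree-$k$ part of $\delta$ is determined by its lower-degree parts.

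Your finiteness remark (fewer internal nodes than leaves when every internal node has at least two children) makes the degree-by-degree verification legitimate. So does your explicit appeal to the symmetry of $G_p$ and $\partial_\eta^r m$. These are the points the paper's version leaves implicit.
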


\noindent
The proof is given in Section~\ref{app:proof-thm2-lemmas}. Each $A^{-1}_t = (I - tE)^{-1}$ in the elementary differential $V_{\tau}$ sits on a non-root edge of $\tau$, namely the parent edge of some non-root node $v$. Expanding\footnote{The formal identity $(I - X)^{-1} = \sum_{j \geq 0} X^j$ applied with $X = tE$.} $A^{-1}_t = \sum_{j_v \geq 0}\, t^{j_v}\, E^{j_v}$ and binomial-expanding each power $E^{j_v} = (I - A)^{j_v} = \sum_{k_v = 0}^{j_v}\, \binom{j_v}{k_v}\, (-1)^{k_v}\, A^{k_v}$, we replace each $A^{-1}_t$ by a sum over pairs $(j_v, k_v)$ with $0 \leq k_v \leq j_v$. Each factor of $A = \Lambda\, \partial_\eta g$ inserted on an edge corresponds to one chain node on that edge. To distinguish the original tree of Lemma~\ref{lem:bseries-thm2} (with no chain nodes) from the chain-extended tree obtained by inserting $k_v$ chain nodes on the parent edge of each non-root $v$, we write $\tau'$ for the original (the \emph{skeleton}) and reuse $\tau$ for the chain-extended tree from this point onward. The final sum will be indexed by $\tau$ in the chain-extended sense. The chain-extended trees $\tau$ are exactly the rooted trees of the main text (Section~\ref{sec:trees}), where chain nodes are allowed alongside leaves and branches; the skeleton $\tau'$ is the special subset where chain nodes are excluded, as in Lemma~\ref{lem:bseries-thm2}.

For the chain-extended tree $\tau$ we have $\ell(\tau) = \ell(\tau')$ (leaves are preserved), $\mathrm{ch}(\tau) = \sum_v k_v$ (total chain nodes), and the number of branches is the same in $\tau$ and $\tau'$ (chain insertion does not create branches). Hence $|\tau| = |\tau'| + \mathrm{ch}(\tau)$ and $d(\tau) = \ell(\tau) + \mathrm{ch}(\tau) = d(\tau') + \mathrm{ch}(\tau)$ (using $d(\tau') = \ell(\tau')$ since $\tau'$ has no chain nodes), so $|\tau| - d(\tau) = |\tau'| - d(\tau')$. Substituting these expansions into~\eqref{eq:bseries-thm2}, and writing $\mathrm{nonroot}(\tau')$ for the set of non-root nodes of $\tau'$,
\begin{equation}
m(\eta + \delta) \;=\; \sum_{\tau'}\, \frac{(-1)^{|\tau'|}}{|\mathrm{Aut}(\tau')|}\, t^{\ell(\tau')} \sum_{\substack{\{j_v, k_v\}\\ 0 \leq k_v \leq j_v}}\, \prod_{v \in \mathrm{nonroot}(\tau')} \binom{j_v}{k_v}\, (-1)^{k_v}\, t^{j_v}\, V_{\tau', \{k_v\}},
\label{eq:big-sum}
\end{equation}
where $V_{\tau', \{k_v\}}(\varepsilon; \eta, \lambda)$ is the elementary differential of the chain-extended tree $\tau$, built by the same bottom-up tree-contraction rule as $V_{\tau'}$ in Lemma~\ref{lem:bseries-thm2} but with each $A^{-1}_t$ on a non-root edge of $\tau'$ replaced by $A^{k_v}$ (a string of $k_v$ chain factors $A$) on that edge.\footnote{Continuing the example from Lemma~\ref{lem:bseries-thm2}, take $\tau'$ to be the tree where the root has one branch child carrying two leaves, and set $k_v = 1$ on the branch's parent edge and $k_v = 0$ on the two leaf-edges. Then
\[
V_{\tau', \{k_v\}}(\varepsilon; \eta, \lambda) \;=\; \partial_\eta m\bigl[A\, G_2[\varepsilon,\, \varepsilon]\bigr],
\]
namely each $A^{-1}_t$ in the Lemma~\ref{lem:bseries-thm2} footnote example has been replaced by the matrix power $A^{k_v}$ for that edge: $A^0 = I$ on the two leaf-edges and $A^1 = A$ on the branch's parent edge.} By construction $V_{\tau', \{k_v\}}$ contains no matrix inverses, since every $A^{-1}_t$ has been replaced by a (possibly empty) string of chain primitives $A$.

Apply $F_q = [t^q]\, (\,\cdot\,)/(1 - t)$ to~\eqref{eq:big-sum} evaluated at $\varepsilon = \Lambda(\lambda) g(\eta)$. The total $t$-power of a contribution is $t^{\ell(\tau') + \sum_v j_v}$. Since
\[
[t^q]\, \frac{t^N}{1 - t} \;=\; \begin{cases} 1 & \text{if } N \leq q, \\ 0 & \text{otherwise,} \end{cases}
\]
the partial-sum extraction imposes the constraint $\sum_v j_v \leq q - \ell(\tau')$. Hence
\begin{equation}
F_q(\Lambda g; \eta, \lambda) \;=\; \sum_{\tau'}\, \frac{(-1)^{|\tau'|}}{|\mathrm{Aut}(\tau')|}\, \sum_{\{k_v\}}\, (-1)^{\sum_v k_v}\, N_{\tau'}(\{k_v\}; q)\, V_{\tau', \{k_v\}},
\label{eq:FqA}
\end{equation}
where
\begin{equation}
N_{\tau'}(\{k_v\}; q) \;:=\; \sum_{\substack{j_v \geq k_v\\ \sum_v j_v \leq q - \ell(\tau')}}\, \prod_v \binom{j_v}{k_v}.
\label{eq:N-def}
\end{equation}

The sum $N_{\tau'}$ admits the following closed form.

\begin{lemma}[Multidimensional hockey-stick]
\label{lem:hockey}
For nonnegative integers $a_1, \ldots, a_n$ and $M \geq 0$,
\begin{equation}
\sum_{\substack{x_1, \ldots, x_n \geq 0\\ \sum_i x_i \leq M}}\, \prod_{i = 1}^n \binom{x_i + a_i}{a_i} \;=\; \binom{M + n + \sum_i a_i}{n + \sum_i a_i}.
\label{eq:hockey}
\end{equation}
\end{lemma}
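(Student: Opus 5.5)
We prove the multidimensional hockey-stick identity by generating functions, in the same formal-power-series style as the proof of Lemma~\ref{lem:key}. The one-dimensional ingredient is the identity already used there. Differentiating the geometric series $a$ times gives
\[
\sum_{x \geq 0} \binom{x+a}{a}\, y^x \;=\; \frac{1}{(1-y)^{a+1}}
\]
as a formal power series in $y$.

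Multiplying these series over $i = 1, \ldots, n$ shows that the coefficient of $y^K$ in $\prod_i (1-y)^{-(a_i+1)} = (1-y)^{-(n + \sum_i a_i)}$ equals the restricted sum
\[
\sum_{x_1 + \cdots + x_n = K} \prod_i \binom{x_i + a_i}{a_i}.
\]
The constraint $\sum_i x_i \leq M$ is imposed exactly as in the definition of $F_q$: multiply by one further factor $1/(1-y)$ and extract $[y^M]$, since this converts coefficients into partial sums up to $M$. Hence the left-hand side of~\eqref{eq:hockey} equals
\[
[y^M]\,(1-y)^{-(n + \sum_i a_i + 1)}.
\]

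Applying the one-dimensional identity once more, with $a = n + \sum_i a_i$, this coefficient is
\[
\binom{M + n + \sum_i a_i}{n + \sum_i a_i},
\]
which is the right-hand side of~\eqref{eq:hockey}. All manipulations are finite: only coefficients up to degree $M$ are involved, so the formal-series bookkeeping is harmless.

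As a sanity check, take $n = 1$. The identity reduces to the classical hockey-stick identity
\[
\sum_{x=0}^{M} \binom{x+a}{a} = \binom{M+a+1}{a+1}.
\]

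There is no serious obstacle here. The only care point is keeping the exponent count straight: each of the $n$ factors contributes $a_i + 1$ to the exponent, and the partial-sum factor contributes one more, giving $n + \sum_i a_i + 1$ in total and hence the lower index $n + \sum_i a_i$.

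For later use in~\eqref{eq:N-def}, one substitutes $x_v = j_v - k_v$ and $a_v = k_v$, with $n$ equal to the number of non-root nodes of $\tau'$ and $M = q - \ell(\tau') - \sum_v k_v$. Checking that this yields the binomial $\binom{q+|\tau|-d(\tau)}{|\tau|}$ is the downstream step, not part of this lemma.
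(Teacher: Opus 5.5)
Your proof is correct and follows essentially the same route as the paper: both multiply the one-dimensional generating functions $\sum_{x\ge 0}\binom{x+a_i}{a_i}y^x=(1-y)^{-(a_i+1)}$ to get $(1-y)^{-s}$ with $s=n+\sum_i a_i$. The only difference is cosmetic. The paper then sums the coefficients $\binom{X+s-1}{s-1}$ over $X=0,\ldots,M$ using the classical one-dimensional hockey-stick identity, whereas you absorb the partial sum with one extra factor $1/(1-y)$; this also avoids the undefined $\binom{X-1}{-1}$ in the degenerate case $n=0$.
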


\noindent (Proof in Section~\ref{app:proof-thm2-lemmas}.) To apply Lemma~\ref{lem:hockey} to $N_{\tau'}(\{k_v\}; q)$, substitute $j_v' := j_v - k_v \geq 0$. Then $\binom{j_v}{k_v} = \binom{j_v' + k_v}{k_v}$, and the constraint $\sum_v j_v \leq q - \ell$ becomes $\sum_v j_v' \leq q - \ell - K$ with $K := \sum_v k_v$. The lemma with $n = |\tau'|$, $a_v = k_v$, and $M = q - \ell - K$ gives
\[
N_{\tau'}(\{k_v\}; q) \;=\; \binom{(q - \ell - K) + |\tau'| + K}{|\tau'| + K} \;=\; \binom{q + |\tau'| - \ell}{|\tau|},
\]
using $|\tau| = |\tau'| + K$. Since $|\tau'| - \ell(\tau') = |\tau| - d(\tau)$ (both equal the number of branches, preserved under chain insertion), this simplifies to
\begin{equation}
N_{\tau'}(\{k_v\}; q) \;=\; \binom{q + |\tau| - d(\tau)}{|\tau|}.
\label{eq:N-final}
\end{equation}
The right-hand side depends on the chain pattern $\{k_v\}$ only through $|\tau|$ and $d(\tau)$.

The total sign factor on $\tau$ in~\eqref{eq:FqA} is
\[
(-1)^{|\tau'|} \cdot (-1)^{\sum_v k_v} \;=\; (-1)^{|\tau'| + \mathrm{ch}(\tau)} \;=\; (-1)^{|\tau|},
\]
using $|\tau| = |\tau'| + \mathrm{ch}(\tau)$.

To convert the sum~\eqref{eq:FqA} indexed by $(\tau', \{k_v\})$ into a sum indexed by chain-extended trees $\tau$, count the number of labeled $\{k_v\}$-assignments that produce a fixed $\tau$. By the ``orbit--stabilizer theorem''\footnote{Applied here, this is the following counting statement. The symmetries of the skeleton $\tau'$ act on the chain-length labels $\{k_v\}$ by permuting non-root nodes. Two assignments produce the same chain-extended tree $\tau$ precisely when they are related by such a symmetry. The number of assignments producing a given $\tau$ therefore equals $|\mathrm{Aut}(\tau')|$ divided by the number of symmetries of $\tau'$ that leave the assignment unchanged.} applied to the action of $\mathrm{Aut}(\tau')$ on chain assignments,
\[
\#\bigl\{\text{labeled $\{k_v\}$ producing $\tau$}\bigr\} \;=\; \frac{|\mathrm{Aut}(\tau')|}{|\mathrm{Aut}(\tau)|}.
\]
Chain nodes within a single chain are linearly ordered by their depth from the parent edge, so no automorphism of $\tau$ can permute nodes within a chain. Hence the stabilizer of the chain-length assignment $v \mapsto k_v$ inside $\mathrm{Aut}(\tau')$ is canonically isomorphic to $\mathrm{Aut}(\tau)$, and the orbit--stabilizer count is exact. Combining with the prefactor $1/|\mathrm{Aut}(\tau')|$ in~\eqref{eq:FqA},
\[
\frac{1}{|\mathrm{Aut}(\tau')|} \cdot \frac{|\mathrm{Aut}(\tau')|}{|\mathrm{Aut}(\tau)|} \;=\; \frac{1}{|\mathrm{Aut}(\tau)|}.
\]
Reindexing the sum~\eqref{eq:FqA} from $(\tau', \{k_v\})$ to $\tau$ and using~\eqref{eq:N-final},
\begin{equation}
F_q(\Lambda g; \eta, \lambda) \;=\; \sum_{\tau\,:\,d(\tau) \leq q}\, \frac{(-1)^{|\tau|}}{|\mathrm{Aut}(\tau)|}\, \binom{q + |\tau| - d(\tau)}{|\tau|}\, V_{\tau}(\eta, \lambda).
\label{eq:Fq-final}
\end{equation}
The constraint $d(\tau) \leq q$ in the summation comes from $k_v \leq j_v$ for each $v$, combined with the partial-sum extraction's constraint $\sum_v j_v \leq q - \ell(\tau')$. Together these give $d(\tau) = \ell(\tau') + \sum_v k_v \leq \ell(\tau') + \sum_v j_v \leq q$. Conversely, any chain-extended $\tau$ with $d(\tau) \leq q$ arises from an admissible choice of $\{j_v\}$ (e.g.\ $j_v = k_v$).

The elementary differential $V_{\tau}(\eta, \lambda)$ matches the kernel $\kappa_{\tau}(\eta, \lambda)$ defined at the start of this subsubsection verbatim. Leaves carry $\Lambda g$, chain nodes carry $A = \Lambda\, \partial_\eta g$, branches carry $G_p = \Lambda\, \partial_\eta^p g$, the root carries $\partial_\eta^r m$, and the contraction is along parent--child edges. The summation index $\tau$ ranges over $\mathcal{T}_q = \{\tau : d(\tau) \leq q\}$, so equation~\eqref{eq:Fq-final} reads
\[
F_q(\Lambda g; \eta, \lambda) \;=\; \sum_{\tau \in \mathcal{T}_q}\, c_{q,\tau}\, \kappa_\tau(\eta, \lambda) \;=\; \psi^{(q)}(\eta, \lambda),
\]
which is the identity~\eqref{eq:cq-equals-Fq}.

\medskip\noindent\textbf{Bound on the remainder.}
\medskip

\noindent
By Taylor's theorem in $t$ at $t = 0$ (to order $q$, with remainder), applied to the $C^{q+1}$ function $h(t) := m\bigl(\eta + \delta(\Lambda g, t; \eta, \lambda)\bigr)$ on $[0, 1]$,
\[
h(1) \;=\; \sum_{k = 0}^q \frac{h^{(k)}(0)}{k!} \;+\; R_{q+1}(\eta, \lambda),
\]
where $R_{q+1}$ is the order-$(q+1)$ Taylor remainder. The partial sum on the right equals $F_q(\Lambda g; \eta, \lambda)$ by~\eqref{eq:Fq-partial-sum}, and the left side equals $m(\eta_0) = 0$ by~\eqref{eq:formal-truth}. Hence $F_q(\Lambda g; \eta, \lambda) = -R_{q+1}(\eta, \lambda)$, so it suffices to bound $R_{q+1} = O(\rho^{q+1})$.

We bound $R_{q+1}$ via the order-$(q+1)$ Taylor coefficient $[t^{q+1}]\, h(t) = h^{(q+1)}(0)/(q+1)!$, computed using the tree expansion of Lemma~\ref{lem:bseries-thm2}. Substitute the Neumann identity $A^{-1}_t = \sum_{j \geq 0} t^j E^j$ on each non-root edge (without further binomial-expanding $E^{j_v}$). At $\varepsilon = \Lambda(\lambda) g(\eta)$, the coefficient $[t^{q+1}]\, m(\eta + \delta)$ is a finite sum of monomials, one for each chain-free tree $\tau$ and each choice of $\{j_v\}_{v \in \mathrm{nonroot}(\tau)}$ with $j_v \geq 0$ and $\ell(\tau) + \sum_v j_v = q + 1$. Each monomial has $\ell(\tau)$ leaves carrying $\Lambda g$, one factor $E^{j_v}$ on the parent edge of each non-root $v$, branch factors $G_p$ on each branch with $p$ children, and a root factor $\partial_\eta^r m$.

We bound each monomial. The leaf factors $\Lambda g$ are $O(\rho)$. Each $E^{j_v}$ is $O(\rho^{j_v})$, being a product of $j_v$ copies of $E$ which are each $O(\rho)$. The branch factors $G_p$ and the root derivative $\partial_\eta^r m$ are bounded on a neighborhood of $(\eta_0, \lambda_0)$. Each monomial is therefore $O\bigl(\rho^{\ell(\tau) + \sum_v j_v}\bigr) = O(\rho^{q+1})$. Hence $h^{(q+1)}(0)/(q+1)! = O(\rho^{q+1})$. By the same tree analysis applied at the expansion point $t = s$ for any $s \in [0, 1]$ (which is admissible because the linearization $I - tE$ remains invertible there), $h^{(q+1)}(s)/(q+1)! = O(\rho^{q+1})$ uniformly in $s \in [0, 1]$. The integral form of the remainder then gives $R_{q+1} = O(\rho^{q+1})$, which is~\eqref{eq:Fq-tail-bound}.

Combining~\eqref{eq:cq-equals-Fq} and~\eqref{eq:Fq-tail-bound} gives $\psi^{(q)}(\eta, \lambda) = O(\rho^{q+1})$ on a neighborhood of $(\eta_0, \lambda_0)$, completing the proof of~\eqref{eq:joint-orth-det}.

\subsubsection{Reduction from the case with $W_i$}

The data-dependent orthogonal moment function of Theorem~\ref{th:nonlinear} is
\[
\psi^{(q)}(W; \eta, \lambda) \;=\; \sum_{\tau \in \mathcal{T}_q}\, c_{q,\tau}\, \kappa_\tau(W; \eta, \lambda),
\]
with the same coefficients $c_{q,\tau}$ and tree set $\mathcal{T}_q$ as in the previous subsubsection. The kernel $\kappa_\tau(W; \eta, \lambda)$ is built by the same bottom-up rules as the deterministic kernel above, but with each node carrying a function of an independent copy of $W_i$ in addition to $(\eta, \lambda)$. Each leaf carries $\Lambda(\lambda)\, g(W_j, \eta)$, each chain node carries $\Lambda(\lambda)\, \partial_\eta g(W_j, \eta)$, each branch with $p$ children carries $\Lambda(\lambda)\, \partial_\eta^p g(W_j, \eta)$, and the root with $r$ children carries $\partial_\eta^r m(W_1, \eta)$, with the $W_j$ i.i.d.\ copies of $W_i$ assigned one to each node of $\tau$. The orthogonality requirement of Definition~\ref{def:orthogonality} concerns the population moment $\Psi^{(q)}(\theta_0, \eta, \lambda) := \mathbb{E}[\psi^{(q)}(W; \eta, \lambda)]$.

At each node $v$ of $\tau$, the kernel $\kappa_\tau$ depends on $W$ through a single tensor evaluated on a copy of $W$ specific to that node. Denote this tensor by $X_v$. The kernel $\kappa_\tau$ is multilinear in the tuple $(X_v)_v$, since it is a contraction along parent--child edges. The $X_v$ are evaluated on independent copies of $W_i$, so they are mutually independent random variables. By independence and multilinearity, $\mathbb{E}[\kappa_\tau(W; \eta, \lambda)]$ equals the same contraction with each $X_v$ replaced by $\mathbb{E}[X_v]$.
For each non-root $v$, $X_v = \Lambda(\lambda)\, \partial_\eta^p g(W_j, \eta)$ with $p \in \{0, 1, 2, \ldots\}$ equal to the node's number of children, and
\[
\mathbb{E}\bigl[\Lambda(\lambda)\, \partial_\eta^p g(W_j, \eta)\bigr] \;=\; \Lambda(\lambda)\, \partial_\eta^p \overline g(\eta),
\qquad \overline g(\eta) \;:=\; \mathbb{E}[g(W_i, \theta_0, \eta)],
\]
using interchange of expectation and differentiation (Assumption~\ref{ass:nonlinear}(iii)). For the root, $X_1 = \partial_\eta^r m(W_1, \eta)$ and $\mathbb{E}[X_1] = \partial_\eta^r \overline m(\eta)$ with $\overline m(\eta) := \mathbb{E}[m(W_i, \theta_0, \eta)]$, by the analogous interchange under Assumption~\ref{ass:affine}(iv) (inherited via Assumption~\ref{ass:nonlinear}(i)). Substituting, $\mathbb{E}[\kappa_\tau(W; \eta, \lambda)] = \overline\kappa_\tau(\eta, \lambda)$, where $\overline\kappa_\tau$ is the deterministic kernel of the previous subsubsection built from $\overline g$ and $\overline m$ in place of $g$ and $m$. Summing over $\tau$,
\begin{equation}
\Psi^{(q)}(\theta_0, \eta, \lambda) \;=\; \sum_{\tau \in \mathcal{T}_q}\, c_{q,\tau}\, \overline\kappa_\tau(\eta, \lambda) \;=\; \overline\psi^{(q)}(\eta, \lambda),
\label{eq:Psi-data-equals-det}
\end{equation}
where $\overline\psi^{(q)}$ is the deterministic moment function of the previous subsubsection with $g, m$ replaced by $\overline g, \overline m$.

The functions $\overline g, \overline m$ satisfy the hypotheses of the previous subsubsection. We have $\overline g(\eta_0) = \mathbb{E}[g(W_i, \theta_0, \eta_0)] = 0$ and $\overline m(\eta_0) = \mathbb{E}[m(W_i, \theta_0, \eta_0)] = 0$ from the moment conditions~\eqref{eq:model-general}. The Jacobian $\partial_\eta \overline g(\eta_0) = J_0$ has full column rank by Assumption~\ref{ass:jacobian-inverse}. For smoothness, Assumption~\ref{ass:nonlinear} allows interchange of expectation and differentiation up to order $q + 1$, so $\overline g$ and $\overline m$ are $(q + 1)$-times continuously differentiable on a neighborhood of $\eta_0$. The previous subsubsection therefore gives
\[
\partial_\eta^\alpha\, \partial_\lambda^\beta\, \overline\psi^{(q)}(\eta, \lambda)\Big|_{(\eta_0, \lambda_0)} \;=\; 0,
\]
for all $0 \leq |\alpha| + |\beta| \leq q$.
By~\eqref{eq:Psi-data-equals-det}, the same holds for $\Psi^{(q)}(\theta_0, \eta, \lambda)$, which is Theorem~\ref{th:nonlinear}. $\square$

\subsubsection{Proofs of Lemmas~\ref{lem:bseries-thm2} and~\ref{lem:hockey}\label{app:proof-thm2-lemmas}}

\begin{proof}[\bf Proof of Lemma~\ref{lem:bseries-thm2}]
Rewrite~\eqref{eq:LB-t-Neumann} as
\[
\delta \;=\; -A^{-1}_t \biggl[t \varepsilon + \sum_{p \geq 2} \frac{1}{p!}\, G_p[\delta^{\otimes p}]\biggr]
\]
and iterate. At each iteration, every occurrence of $\delta$ on the right-hand side is replaced by one of two patterns. Either it becomes the leaf pattern $-A^{-1}_t \cdot t\varepsilon$, contributing a single $\varepsilon$-leaf with $A^{-1}_t$ on its parent edge and no further $\delta$ to expand; or it becomes the branch pattern $-A^{-1}_t \cdot (1/p!)\, G_p[\delta^{\otimes p}]$ for some $p \geq 2$, contributing a branch with $p$ children carrying $G_p$, with $A^{-1}_t$ on its parent edge and $p$ further subtrees of $\delta$ to expand. The iteration terminates after finitely many steps at any fixed $(\varepsilon, t)$-degree, and gives a formal sum
\[
\delta \;=\; \sum_{\tau^{\rm ord}}\, (-1)^{|\tau^{\rm ord}|}\, t^{\ell(\tau^{\rm ord})}\, \prod_{v\,\text{branch}} \frac{1}{p_v!}\, V^\delta_{\tau^{\rm ord}}(\varepsilon, t; \eta, \lambda),
\]
where $\tau^{\rm ord}$ is a rooted tree of the type described in Lemma~\ref{lem:bseries-thm2} with the children of each node further distinguished by position, and $|\tau^{\rm ord}|$ and $\ell(\tau^{\rm ord})$ are the number of non-root nodes and number of leaves of $\tau^{\rm ord}$, respectively. Substituting this $\delta$-expansion into the Taylor expansion $m(\eta + \delta) = \sum_{r \geq 0} (1/r!)\, \partial_\eta^r m(\eta)\,[\delta^{\otimes r}]$, every $\delta$ in $\delta^{\otimes r}$ becomes a $\delta$-tree, giving
\begin{equation}
m(\eta + \delta) \;=\; \sum_{\tau^{\rm ord}}\, (-1)^{|\tau^{\rm ord}|}\, t^{\ell(\tau^{\rm ord})}\, \frac{1}{r!}\, \prod_{v\,\text{branch}} \frac{1}{p_v!}\, V_{\tau^{\rm ord}}(\varepsilon, t; \eta, \lambda),
\label{eq:bseries-ordered}
\end{equation}
where $r$ is the number of children of the root of $\tau^{\rm ord}$.

Group these trees by their underlying tree $\tau$, forgetting the ordering of children at each node. By the ``orbit--stabilizer theorem'' applied to the action of $S_r \times \prod_v S_{p_v}$ on labelings of the children at each node, the number of distinct $\tau^{\rm ord}$ representing a fixed $\tau$ is $r!\, \prod_v p_v! / |\mathrm{Aut}(\tau)|$. Here the numerator $r!\, \prod_v p_v!$ counts all orderings of children (with $p_v$ the number of children of node $v$), and $|\mathrm{Aut}(\tau)|$ is the size of the stabilizer (the order of the automorphism group of $\tau$). Multiplying by the per-ordering weight $\frac{1}{r!}\, \prod_v \frac{1}{p_v!}$ in~\eqref{eq:bseries-ordered} collapses to $\frac{1}{|\mathrm{Aut}(\tau)|}$, and substituting back gives~\eqref{eq:bseries-thm2}.
\end{proof}

\begin{proof}[\bf Proof of Lemma~\ref{lem:hockey}]
The formal-power-series identity $\sum_{x \geq 0} \binom{x + a}{a}\, t^x = (1 - t)^{-(a + 1)}$ follows from differentiating $1/(1 - t) = \sum_{x \geq 0} t^x$ a total of $a$ times in $t$ and dividing by $a!$. Multiplying $n$ such identities,
\[
\prod_{i = 1}^n \sum_{x_i \geq 0} \binom{x_i + a_i}{a_i}\, t^{x_i} \;=\; (1 - t)^{-s} \;=\; \sum_{X \geq 0} \binom{X + s - 1}{s - 1}\, t^X,
\]
with $s := n + \sum_i a_i$. The left-hand side, expanded as a single power series in $t$, has coefficient of $t^X$ equal to $\sum_{x_1 + \cdots + x_n = X} \prod_i \binom{x_i + a_i}{a_i}$. Equating coefficients gives this sum equal to $\binom{X + s - 1}{s - 1}$. Summing over $X = 0, \ldots, M$ and using the standard one-dimensional hockey-stick identity\footnote{$\sum_{X = 0}^M \binom{X + s - 1}{s - 1} = \binom{M + s}{s}$, which follows by induction on $M$ from Pascal's rule $\binom{n + 1}{k} = \binom{n}{k - 1} + \binom{n}{k}$.} gives~\eqref{eq:hockey}.
\end{proof}

\subsection{Proof of Theorem~\ref{th:asymptotic}\label{app:asymptotic-translation}}

Theorem~\ref{th:asymptotic} is a direct consequence of the asymptotic result in Section~6 of \cite{BonhommeJochmansWeidner2025} after a change of notation. We record the translation here rather than repeat the proof.

Identify $Z_i$, $\eta_{n,i}$, $\mu$, and $u$ in \cite{BonhommeJochmansWeidner2025} with $W_i$, $\nu_i = (\eta_i, \lambda_i)$, $\theta$, and $\psi$ in our notation, respectively. The number of cross-sectional units, denoted $N$ in both papers, plays the same role. The parameter $\theta$ of \cite{BonhommeJochmansWeidner2025} has no counterpart in our setting, so their Assumption~1(vi) on its asymptotic linearity, and the corresponding $G_\theta\,\psi_{n,i}$ term appearing in their expansion lemma, are dropped. In the grouped data setting we consider, the moment function $\psi(W_i; \theta, \nu_i)$ is bounded per unit rather than scaling with the number of observations per unit, so $n_i$ in their Assumption~1(iv) can be set to one and the total sample size $n = \sum_i n_i$ collapses to $N$. The $\sqrt{n}$-asymptotics of their Theorem~2 then become the $\sqrt{N}$-asymptotics of our Theorem~\ref{th:asymptotic}.

Under this mapping, our Assumption~\ref{ass:asymp-reg} corresponds to their Assumption~1, and our Assumption~\ref{ass:asymp-ortho} corresponds to their Assumption~2. The expansion lemma and the final central limit theorem of their Section~6 apply verbatim and yield Theorem~\ref{th:asymptotic}.

\end{document}